\def\b{{\bf b}}
\def\x{{\bf x}}
\newcommand{\ket}[1]{| #1 \rangle}
\DeclareMathOperator{\poly}{poly}
\newcommand{\be}{\begin{equation}}
\newcommand{\ee}{\end{equation}}
\newcommand{\bea}{\begin{eqnarray}}
\newcommand{\eea}{\end{eqnarray}}
\newcommand{\bes}{\begin{equation*}}
\newcommand{\ees}{\end{equation*}}
\newcommand{\beas}{\begin{eqnarray*}}
\newcommand{\eeas}{\end{eqnarray*}}
\newtheorem*{rep@theorem}{\rep@title}
\newcommand{\newreptheorem}[2]{%
\newenvironment{rep#1}[1]{%
 \def\rep@title{#2 \ref{##1} (restated)}%
 \begin{rep@theorem}}%
 {\end{rep@theorem}}}
\newtheorem{thm}{Theorem}
\newtheorem*{thm*}{Theorem}
\newtheorem{cor}[thm]{Corollary}
\newtheorem{lem}[thm]{Lemma}
\newtheorem*{lem*}{Lemma}
\newtheorem{claim}[thm]{Claim}
\newtheorem{prop}[thm]{Proposition}
\newtheorem{rem}[thm]{Remark}
\newtheorem{prob}{Problem}
\def\ds{\displaystyle}
\newenvironment{breakablealgorithm}
  {% \begin{breakablealgorithm}
   \begin{center}
     \refstepcounter{algorithm}% New algorithm
     \hrule height1pt depth0pt \kern3pt% \@fs@pre for \@fs@ruled »­Ïß
     \renewcommand{\caption}[2][\relax]{% Make a new \caption
       {\raggedright\textbf{\ALG@name~\thealgorithm} ##2\par}%
       \ifx\relax##1\relax % #1 is \relax
         \addcontentsline{loa}{algorithm}{\protect\numberline{\thealgorithm}##2}%
       \else % #1 is not \relax
         \addcontentsline{loa}{algorithm}{\protect\numberline{\thealgorithm}##1}%
       \fi
       \kern3pt\hrule\kern3pt
     }
  }{% \end{breakablealgorithm}
     \kern3pt\hrule\relax%
   \end{center}
  }
\title{Computing eigenvalues of diagonalizable matrices in a quantum computer}
\author{Changpeng Shao\footnote{changpeng.shao@bristol.ac.uk} \\
School of Mathematics, University of Bristol, Bristol, UK}
\date{\today}
\begin{document}

\maketitle

\begin{abstract}

Solving linear systems and computing eigenvalues are two fundamental problems in linear algebra.
For solving linear systems, many efficient quantum algorithms have been discovered. 
For computing eigenvalues, currently, we have efficient quantum algorithms for Hermitian and unitary matrices. However, the general case is far from fully understood.
Combining quantum phase estimation, quantum algorithm to solve linear differential equations and quantum singular value estimation,
we propose two quantum algorithms to compute the eigenvalues of diagonalizable matrices that only have real eigenvalues and 
normal matrices.
The output of the quantum algorithms is a superposition of the eigenvalues and the corresponding eigenvectors.
The complexities are dominated by 
solving a linear system of ODEs and 
performing quantum singular value estimation, which usually can be solved efficiently in a quantum computer.
In the special case when the matrix $M$ is $s$-sparse, the complexity is $\widetilde{O}(s\rho^2 \kappa^2/\epsilon^2)$
for diagonalizable matrices that only have real eigenvalues, and 
$\widetilde{O}(s\rho\|M\|_{\max} /\epsilon^2)$ for
normal matrices. Here $\rho$ is an upper
bound of the eigenvalues, $\kappa$ is the conditioning of the eigenvalue problem, and $\epsilon$ is the precision to approximate the eigenvalues. We also extend the quantum algorithm to diagonalizable matrices with complex eigenvalues under an extra assumption.
\end{abstract}

\section{Introduction}

Solving linear systems and computing eigenvalues are two fundamental problems in linear algebra. 
These are two problems of major importance in many
scientific and engineering applications \cite{arbenz2012lecture,saad2011numerical,van1983matrix}.
Many classical algorithms were discovered in the past decades.
Except for some special cases, the theoretical complexities of these algorithms to solve linear systems of size $n$ or to
estimate the eigenvalues of $n\times n$ matrices range from $O(n^2)$ to $O(n^3)$.
On the other hand,
recent developments on quantum algorithms \cite{HHL,block-encoding,childs2017quantum,AbramsLloyd,berry2017quantum,berry2014high,childs2019quantum,kerenidis_et_al:LIPIcs:2017:8154} 
verify the fact that quantum computers can solve many linear algebraic problems much faster than classical computers. 
For instance, under certain conditions we can efficiently estimate the eigenvalues of unitary matrices \cite{Kitaev} and Hermitian matrices \cite{AbramsLloyd}, the singular values of general matrices \cite{block-encoding,kerenidis_et_al:LIPIcs:2017:8154}.
For solving linear systems in the quantum version, people have already found many efficient quantum algorithms (e.g. see \cite{HHL,childs2017quantum,block-encoding}).

Currently, the solving of linear systems in a quantum computer is almost optimal, especially by the recently discovered block-encoding method \cite{block-encoding}.
For computing eigenvalues, there are some attempts \cite{Daskin,wang} to estimate the eigenvalues of non-Hermitian and non-unitary matrices. However, this problem is still far from well understood. In this paper, we  focus on the eigenvalue problem of diagonalizable matrices.

The classical eigenvalue algorithms are mainly iterative.
Based on quantum linear algebraic techniques, we may generalize
them into quantum algorithms. However, in this paper, we choose to
follow the idea of quantum phase estimation (QPE) \cite{Kitaev}, a method
that is different from any classical algorithms to estimate eigenvalues.
Following the idea of QPE, the problem we want to
solve in this paper is stated as follows:

\begin{prob}
\label{main problem}
Let $M$ be an $n$-by-$n$ diagonalizable matrix,  $\epsilon\in(0,1)$. Assume that the eigen-pairs of $M$ are $\{(\lambda_j,|E_j\rangle): j=1,\ldots,n\}$.
Given access to copies of the state of the form $\sum_{j=1}^n \beta_j |E_j\rangle$, find a quantum algorithm that outputs a state proportional to
$\sum_{j=1}^n \beta_j |\tilde{\lambda}_j\rangle |E_j\rangle$ such that $|\lambda_j - \tilde{\lambda}_j| \leq \epsilon$.
\end{prob}

The diagonalizability assumption implies that any vector is a linear combination of the eigenvectors, so we can choose any desired vector as the initial state. Usually, the decomposition is not known for us, so the quantum algorithm we construct should be independent of this decomposition.

The eigenvalue problem we solve here is a little different from the classical sense. 
In Problem \ref{main problem}, we only obtain a superposition of the eigen-pairs. We may view it as a quantum version of eigenvalue decomposition. 
There are two reasons for us to solve this problem. First, similar to solve linear systems in a quantum computer, it is better to obtain a quantum result for the eigenvalue problem. 
For us, this should be a quantum state that contains the information of the eigenvalues and the corresponding eigenvectors.
Second, by measuring the first register, we can obtain all the eigenvalues. This needs at least $\Omega(n)$ measurements. 
We can also obtain the quantum states of the corresponding eigenvectors. 
However, we may hope the quantum eigenvalue decomposition has other applications. For instance, implementing functions of matrices. 
Let $f$ be a function, $M=E\Lambda E^{-1}$ be the eigenvalue decomposition of $M$. 
In many cases (e.g. see \cite{higham2008functions}), $f(M) = E f(\Lambda) E^{-1}$. In a quantum computer, by implementing a function of $M$, we mean to implement $f(M)|\b\rangle$ for any given state $|\b\rangle$. 
Since $E$ is non-singular, we have a decomposition $|\b\rangle = \sum_{j=1}^n \beta_j |E_j\rangle$. 
Thus $f(M)|\b\rangle = \sum_{j=1}^n \beta_j f(\lambda_j) |E_j\rangle$. If we can solve Problem \ref{main problem}, then we can construct the state $f(M)|\b\rangle$.

\subsection{Our results}

In this paper, we mainly focus on two types of 
matrices: diagonalizable matrices that only have real eigenvalues and normal matrices. 
We  also consider the case that the diagonalizable matrices can have complex eigenvalues. In this case, we need to make an extra assumption in Problem \ref{main problem}.

Our quantum algorithm is inspired by QPE and recent works in quantum algorithms to solve
ordinary differential equations \cite{berry2014high,
berry2017quantum,childs2019quantum}.
In QPE, an important step is to perform
Hamiltonian simulation to implement $e^{i H t}$ for some Hermitian matrix $H$.
For example, suppose the input state is
$|E_j\rangle$, then after Hamiltonian simulation, we obtain a state
of the form $(1/\sqrt{m})\sum_{l=0}^{m-1} e^{i\lambda_j l} |l\rangle |E_j\rangle$.
Then $\lambda_j$ is recovered by applying quantum
Fourier inverse transform to $|l\rangle$.
When $M$ is not Hermitian, $e^{i M t}$
is not unitary, so we cannot apply Hamiltonian simulation.
One alternative method we can use is
ordinary differential equations.
Just like Hamiltonian simulation, 
which arises from Schr\"{o}dinger equation, 
we can consider
the ODE $d\x(t)/dt = i M \x(t)$ as a source to implement $e^{i M t}$.
It is easy to show that the solution is $\x(t) = e^{iMt} |E_j\rangle$.
In a quantum computer, to solve this ODE,
we usually use the discretization method to 
change it into a linear system. Then solve the linear system by quantum linear algebraic techniques.
An interesting point of this idea is that
the solution of the linear system
is  a superposition of the 
solutions at differential times $t_l = l \Delta t$,
i.e., $\sum_{l=0}^{m-1} |l\rangle |\x(t_l)\rangle = \sum_{l=0}^{m-1} e^{i\lambda_j t_l} |l\rangle |E_j\rangle$.
This state is exactly what we need in QPE.
Consequently, we can obtain the state $|\lambda_j\rangle|E_j\rangle$ by applying quantum Fourier inverse transform to $|l\rangle$. This idea also works when the input state is a linear combinations of the eigenvectors $|\phi\rangle = \sum_j\beta_j|E_j\rangle$. Moreover, in this algorithm, we do not need to know how $|\phi\rangle$ is decomposed.

With the above idea, for the first type of matrices, our main result is

\begin{thm}[Informal of Theorem \ref{thm2}]
Let $M$ be an $n\times n$ diagonalizable matrix which only has real eigenvalues, $\rho\geq 1$ be a upper
bounded of the eigenvalues.
Then
Problem \ref{main problem} can be solved
in time
\be 
O\left( T\left(
\frac{\rho \kappa(E)\log(\rho /\epsilon)}{\epsilon},
\frac{\epsilon}{\rho},
\frac{n\rho\log(\rho /\epsilon)}{\epsilon}\right)
\times \frac{\rho\kappa(E)\sqrt{\log(\rho /\epsilon)}}{\epsilon}  \right),
\ee
where  $T(\cdot,\cdot,\cdot)$ is the complexity to solve certain linear systems, and
$\kappa(E)$ is the condition number of the matrix generated by 
the eigenvectors.
\end{thm}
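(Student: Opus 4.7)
The plan is to implement the ODE $d\x/dt = iM\x(t)$ via the quantum linear-system approach flagged in the introduction, and then extract eigenvalues via an inverse quantum Fourier transform on a time register. Since $M$ is diagonalizable with real spectrum, writing $M = E\Lambda E^{-1}$ gives $e^{iMt}|\phi\rangle = \sum_j \beta_j e^{i\lambda_j t}|E_j\rangle$, so the joint-time history on a uniform grid is exactly the state that QPE consumes; the whole proof is about manufacturing that history cheaply even though $e^{iMt}$ is not unitary.

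First I would fix a time grid $t_l = l\Delta t$ for $l=0,\ldots,m-1$ with $\Delta t = \Theta(1/\rho)$ (so every phase $\lambda_j \Delta t$ lies in a bounded window) and $m = \Theta(\rho \log(\rho/\epsilon)/\epsilon)$ (so the Fourier resolution matches the target precision $\epsilon$). Using a high-order integrator such as the truncated Taylor series or linear multistep encoding of Berry--Childs, I would assemble a single block lower-triangular linear system $A\mathbf{y}=\b$ of dimension $nm = \Theta(n\rho \log(\rho/\epsilon)/\epsilon)$ whose block $l$ equals (up to a known weight) $\x(t_l)$. This matches the third argument of the $T(\cdot,\cdot,\cdot)$ in the statement.

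Next I would bound $\kappa(A)$. Although $iM$ is not skew-Hermitian, the real spectrum gives $e^{iMt} = E\,e^{i\Lambda t}E^{-1}$, hence $\|e^{iMt}\|\le \kappa(E)$ for all $t$ in the simulation window. Feeding this stability estimate into the standard condition-number analysis of the chosen integrator yields $\kappa(A) = O(\rho\kappa(E)\log(\rho/\epsilon)/\epsilon)$, matching the first argument of $T$. I would then call the quantum linear system solver to precision $\epsilon/\rho$ (the second argument of $T$), chosen so that the error propagated through the subsequent Fourier step is $O(\epsilon)$, obtaining a state proportional to $\sum_l \alpha_l |l\rangle|\x(t_l)\rangle = \sum_{j,l}\beta_j\alpha_l e^{i\lambda_j t_l}|l\rangle|E_j\rangle$.

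Applying the inverse QFT to the time register then concentrates each branch on a bin $|\tilde\lambda_j\rangle$ with $|\lambda_j-\tilde\lambda_j|\le\epsilon$, producing the desired $\sum_j \beta_j|\tilde\lambda_j\rangle|E_j\rangle$ after post-selecting on the leading block of $\mathbf{y}$. Because $\alpha_l$ is not flat in $l$ and because the non-unitary evolution loses a factor of $\kappa(E)$ in norm, the post-selection amplitude is only $\Omega(\epsilon/(\rho\kappa(E)\sqrt{\log(\rho/\epsilon)}))$, and amplitude amplification around this flag contributes the outer multiplier $\rho\kappa(E)\sqrt{\log(\rho/\epsilon)}/\epsilon$ in the stated complexity. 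The hard part will be exactly this last step: simultaneously controlling the non-flatness of the $\alpha_l$ coming from the integrator, the $\kappa(E)$ blowup of $\|e^{iMt}\|$, and the QPE-style filter analysis, without losing an extra polynomial factor in $\rho/\epsilon$; essentially it requires redoing a Fourier filtering argument in a basis where $M$ is diagonal but not orthonormal, which is where the condition number $\kappa(E)$ must be tracked carefully.
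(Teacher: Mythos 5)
Your high-level plan -- build the history state via the Berry--Childs truncated-Taylor linear system, apply an inverse QFT on the time register, post-select, amplify -- is exactly the paper's approach, and your accounting of the three arguments of $T(\cdot,\cdot,\cdot)$ and of the outer amplification factor lands on the right expressions. But there are two concrete places where the details as you have written them would not go through.

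First, a parameter confusion. You take $m=\Theta(\rho\log(\rho/\epsilon)/\epsilon)$ time steps with the system of dimension ``$nm$.'' In the paper the number of actual time steps is $m=\Theta(\rho/\epsilon)$ and the Taylor order is $k=\Theta(\log(\rho/\epsilon))$; the linear system $C_{m,k}$ has dimension $n\,m(k+1)$, so the logarithm in the third argument of $T$ comes from the internal Taylor ancilla, not from extra time steps. This matters because it is only the $m+1$ sub-blocks indexed by $p(k+1)$ ($q=0$) that carry $\x(t_p)$; the remaining $\approx mk$ blocks carry partial sums $(2\pi iM\Delta t)^q|\x_{p,0}\rangle/q!$ and do not lie on a uniform time grid. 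Applying the inverse QFT to the raw block index, as your outline implies, would mix the Taylor-intermediate blocks into the phase estimation and destroy the QPE structure. The paper handles this by first multiplying the index register by $(k+1)^{-1}$ modulo $m(k+1)+1$, which relabels $|p(k+1)\rangle\mapsto|p\rangle$ and pushes the $q>0$ blocks into an orthogonal ``garbage'' sector flagged by an ancilla, and only then applies the inverse QFT to $|p\rangle$. Without this re-indexing step your Fourier resolution argument is computed on the wrong register.

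Second, you flag the post-selection amplitude estimate -- ``simultaneously controlling the non-flatness of the $\alpha_l$ \dots, the $\kappa(E)$ blowup \dots, and the QPE-style filter analysis'' -- as the hard step you have not done. In fact this is the easy step once you invoke the right lemma. Berry et al.\ prove $\|C_{m,k}^{-1}\|\le 3\kappa(E)\sqrt{k}\,m$, so the (unnormalized) solution state has norm at most $3\kappa(E)\sqrt{k}\,m$, while the good sector has norm at least $1$ because already the $p=0$ block is $|\phi\rangle$. Hence the amplitude is at least $1/(3\kappa(E)\sqrt{k}\,m)=\Omega(\epsilon/(\rho\kappa(E)\sqrt{\log(\rho/\epsilon)}))$ with no separate filtering argument needed; since the eigenvalues are real, the $\alpha_l$ are in fact essentially flat ($\|\x_{p,0}\|\approx 1$ by the paper's Lemma~\ref{lemma1}), so the ``non-flatness'' and ``norm loss'' you worry about do not arise. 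The $\kappa(E)$ enters solely through $\|C_{m,k}^{-1}\|$, i.e.\ through the normalization of the linear-solver output, not through any decay of the propagated state.
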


In a quantum computer, to solve an $n\times n$ linear system, the complexity $T(\kappa,\epsilon,n)$ depends on the condition number $\kappa$ of the coefficient matrix, the precision $\epsilon$ to approximate the solution, the size $n$ of the linear system and some other parameters about the linear system. In many case, $T$ is linear at $\kappa$ and logarithmic at $\epsilon$ and $n$. For instance, if the linear system is $s$-sparse, then $T = O(s\kappa\poly(\log\kappa, \log s, \log 1/\epsilon, \log n)) = \widetilde{O}(s\kappa)$  \cite{childs2017quantum}. Consequently, we have the following result.

\begin{thm}[Informal of Theorem \ref{cor of thm2}]
Assume that $M$ is  diagonalizable and $s$ sparse, $\rho\geq 1$ is a upper
bounded of the eigenvalues. Then
Problem \ref{main problem} can be solved
in time
$
\widetilde{O}(s\rho^2\kappa(E)^2/\epsilon^2).
$
\end{thm}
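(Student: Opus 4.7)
The plan is to obtain this corollary directly by specializing the general complexity bound of Theorem \ref{thm2} to the case where the linear system produced by the algorithm is sparse, and then inserting a standard quantum sparse linear system solver. Concretely, Theorem \ref{thm2} reduces Problem \ref{main problem} to $\widetilde{O}(\rho\kappa(E)/\epsilon)$ calls to a subroutine that solves a specific linear system of size $\widetilde{O}(n\rho/\epsilon)$ and condition number $\widetilde{O}(\rho\kappa(E)/\epsilon)$. So the only thing to check is that this linear system is itself sparse when $M$ is $s$-sparse, and then plug in the Childs--Kothari--Somma solver, which runs in time $T(\kappa,\epsilon,n)=\widetilde{O}(s\kappa)$ for $s$-sparse systems.

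The first step I would carry out is a sparsity audit of the coefficient matrix produced by the ODE-discretization step in the proof of Theorem \ref{thm2}. That matrix has a block structure whose nonzero blocks are constant (or at worst polylogarithmic) degree polynomials in $M$ coming from the truncated Taylor or linear-multistep scheme of \cite{berry2014high, berry2017quantum, childs2019quantum}; if $M$ is $s$-sparse, each such block is $O(s\,\polylog(\cdot))$-sparse, so the whole coefficient matrix is $\widetilde{O}(s)$-sparse. Assuming the usual sparse-access oracles for $M$, one can simulate the sparse oracles for this block matrix with only $\polylog$ overhead, so the Childs--Kothari--Somma algorithm applies with sparsity parameter $\widetilde{O}(s)$.

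Plugging $T(\kappa,\epsilon,n)=\widetilde{O}(s\kappa)$ into the bound of Theorem \ref{thm2} with $\kappa=\widetilde{O}(\rho\kappa(E)/\epsilon)$ gives
\[
T\!\left(\frac{\rho\kappa(E)\log(\rho/\epsilon)}{\epsilon},\ \frac{\epsilon}{\rho},\ \frac{n\rho\log(\rho/\epsilon)}{\epsilon}\right)=\widetilde{O}\!\left(\frac{s\rho\kappa(E)}{\epsilon}\right),
\]
absorbing all the $\log(\rho/\epsilon)$, $\log n$, and $\log(1/\epsilon)$ factors into $\widetilde{O}$. Multiplying by the outer factor $\rho\kappa(E)\sqrt{\log(\rho/\epsilon)}/\epsilon$ from Theorem \ref{thm2} yields the advertised runtime $\widetilde{O}(s\rho^{2}\kappa(E)^{2}/\epsilon^{2})$.

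Essentially no step is hard: the main result Theorem \ref{thm2} does all the conceptual work, and the corollary is a substitution. The only mildly delicate point is the sparsity/oracle bookkeeping in the first step, since one must make sure that the block-linear-system produced by the discretization not only is sparse but also admits efficient sparse-access oracles computable from the oracles for $M$; for the standard Taylor-expansion discretization this is routine, and any logarithmic dependence on the discretization order is swept into the $\widetilde{O}$ notation.
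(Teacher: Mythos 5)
Your proposal is correct and follows essentially the same route as the paper: the paper likewise specializes Theorem \ref{thm2} by showing the discretization matrix $C_{m,k}$ has sparsity $\Theta(s+k)=\widetilde{O}(s)$ (Lemma \ref{sparsity}), that its sparse-access oracle is constructible from the oracle for $M$, and that $\kappa(C_{m,k})=O(\kappa(E)km)$, then plugs the Childs--Kothari--Somma bound $T=\widetilde{O}(s\kappa(C_{m,k}))$ into (\ref{thm2:complexity}) to get $\widetilde{O}(s\rho^{2}\kappa(E)^{2}/\epsilon^{2})$. The only cosmetic difference is that the nonzero blocks of $C_{m,k}$ are scaled copies of $M$ and $I$ (no matrix powers appear), with the extra $k$ in the row sparsity coming from the $k+1$ identity blocks, which your $\widetilde{O}$ bookkeeping absorbs anyway.
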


As discussed above, when we have the state $\sum_{j=1}^n \beta_j |\tilde{\lambda}_j\rangle |E_j\rangle$, we can obtain all the eigenvalues by measurements. It is not hard to show that $O(n\log n)$ measurements are enough. Thus for sparse matrices, the complexity to obtain all the eigenvalues is $
\widetilde{O}(sn\rho^2\kappa(E)^2/\epsilon^2).
$
However, the current classical algorithms cost at least $O(n^2)$ to obtain all the eigenvalues, even if all the eigenvalues are real \cite{huss1997parallelizable,beavers1973computational}. So quantum computers still have the possibility to beat classical computers when the eigenvalue problem is well-conditioned.

The algorithm above also works when $M$ only has purely imaginary eigenvalues.
However, when $M$ has complex eigenvalues, it may not work in a straightforward way. One simple reason is that we cannot estimate the real and imaginary parts of the eigenvalues at the same time in QPE. An alternative approach is to consider the ODE $d\x(t)/dt = i (M\otimes I + I\otimes \overline{M}) \x(t)$ with initial state $\sum_j \beta_j \ket{E_j} \ket{\overline{E}_j}$. Here $\overline{M}, \ket{\overline{E}_j}$ refer to the complex conjugate of $M, \ket{E_j}$ respectively. The solution of this ODE is $\x(t) = \sum_j \beta_j e^{2it {\rm Re}(\lambda_j)} \ket{E_j} \ket{\overline{E}_j}$. Then a similar idea to above shows that we can create the state proportional to $\sum_j \beta_j \ket{{\rm Re}(\lambda_j)}\ket{E_j} \ket{\overline{E}_j}$. If this state is further viewed as the initial state of the ODE $d\x(t)/dt = (M\otimes I - I\otimes \overline{M}) \x(t)$, then we will end up with a state proportional to $\sum_j \beta_j \ket{{\rm Re}(\lambda_j)} \ket{{\rm Im}(\lambda_j)} \ket{E_j} \ket{\overline{E}_j}$. Conclude the above analysis,
we obtain the following theorem.

\begin{thm}[Informal of Theorem \ref{main theorem}]
Let $M$ be an $n\times n$ diagonalizable matrix, $\rho\geq 1$ be a upper bound of the eigenvalues.
Assume that the initial state of Problem \ref{main problem} is $\sum_{j=1}^n\beta_j|E_j\rangle|\overline{E}_j\rangle$, then Problem \ref{main problem}
can be solved in time
\be 
O\left(
T\left(
\frac{\rho \kappa(E)^2\log(\rho /\epsilon)}{\epsilon},
\frac{\epsilon}{\rho},
\frac{n\rho\log(\rho /\epsilon)}{\epsilon}\right)
\times \frac{\rho^2\kappa(E)^4\log(\rho/\epsilon) }{\epsilon^2}   \right),
\ee
Especially, if $M$ has sparsity $s$, then the complexity is 
$
\widetilde{O}(s\rho^3\kappa(E)^6/\epsilon^3).
$
\end{thm}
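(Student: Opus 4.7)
The plan is to reduce the complex-eigenvalue problem to two sequential applications of Theorem \ref{thm2}. The guiding observation is that if $M = E\Lambda E^{-1}$ with $\Lambda = \diag(\lambda_1,\ldots,\lambda_n)$, then the two matrices
\[
A_+ := M\otimes I + I\otimes \overline{M}, \qquad A_- := -i\,(M\otimes I - I\otimes \overline{M})
\]
are both diagonalized by $E\otimes \overline{E}$, and on the joint eigenvector $\ket{E_j}\ket{\overline{E}_k}$ they have eigenvalues $\lambda_j + \overline{\lambda_k}$ and $-i(\lambda_j - \overline{\lambda_k})$ respectively. Restricted to the diagonal invariant subspace $V := \mathrm{span}\{\ket{E_j}\ket{\overline{E}_j}\}$, these collapse to $2\,{\rm Re}(\lambda_j)$ and $2\,{\rm Im}(\lambda_j)$, both real. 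The assumed input state $\sum_j \beta_j\ket{E_j}\ket{\overline{E}_j}$ lies in $V$, and $V$ is preserved under the evolutions $e^{iA_\pm t}$, so the real-eigenvalue algorithm of Theorem \ref{thm2} applies to $A_\pm$ on $V$ without any obstruction.

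First I would run Theorem \ref{thm2} on $A_+$ from the given input to produce, after dividing the answer register by $2$, a state proportional to $\sum_j \beta_j \ket{\widetilde{{\rm Re}(\lambda_j)}}\ket{E_j}\ket{\overline{E}_j}$. Next, holding the ${\rm Re}$ register fixed as an ancilla, I would use this state as the input to a second invocation of Theorem \ref{thm2}, now applied to $A_-$, to append an estimate of ${\rm Im}(\lambda_j)$. Because the state never leaves $V$ and the eigenvector register $\ket{E_j}\ket{\overline{E}_j}$ is shared across both stages, the two real-valued registers end up jointly entangled with the correct eigenvector, and combining them yields $\sum_j \beta_j \ket{\tilde{\lambda}_j}\ket{E_j}\ket{\overline{E}_j}$ as required.

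For the complexity I would translate the parameters from $M$ to $A_\pm$: the ambient dimension becomes $n^2$, the eigenvalue bound becomes $2\rho$, the condition number of the eigenvector matrix becomes $\kappa(E\otimes\overline{E})=\kappa(E)^2$, and the sparsity becomes at most $2s$. Plugging these into Theorem \ref{thm2} explains the single factor $\rho\kappa(E)^2/\epsilon$ that appears inside the ODE cost $T(\cdot,\cdot,\cdot)$. The extra factor $\rho^2\kappa(E)^4\log(\rho/\epsilon)/\epsilon^2$ outside $T$ arises because the second stage invokes the first as a subroutine within its own amplitude-amplification loop, so the two amplification overheads (each of order $\rho\kappa(E)^2/\epsilon$) multiply rather than add. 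In the sparse case $T\approx s\rho\kappa(E)^2/\epsilon$, which yields $\widetilde{O}(s\rho^3\kappa(E)^6/\epsilon^3)$.

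The main obstacle is careful bookkeeping rather than any single hard step: verifying that evolution by $A_\pm$ truly preserves the diagonal subspace $V$ so that the "conjugate" register stays coherent with $\ket{E_j}$ across both stages; checking that the access oracles required by Theorem \ref{thm2} for $A_\pm$ (block-encoding or sparse query access, eigenvalue bounds, a handle on $\kappa(E\otimes\overline{E})$) follow from those for $M$ by the standard tensor-product construction with only the expected polynomial blowup; and tracking how the two QPE-style estimations compose so that the ${\rm Re}$ and ${\rm Im}$ registers remain entangled only through the eigenvector index $j$, rather than being measured independently. Once these are in place, the complexity bound follows by direct substitution into Theorem \ref{thm2}.
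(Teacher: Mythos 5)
Your proposal follows essentially the same route as the paper's proof of Theorem~\ref{main theorem}: the paper likewise runs the real-eigenvalue machinery twice — first on $C_{m,k}(\pi i\Delta t(M\otimes I + I\otimes\overline{M}))$ to extract $\mathrm{Re}(\lambda_j)$, then on $C_{m,k}(\pi\Delta t(M\otimes I - I\otimes\overline{M}))$ to extract $\mathrm{Im}(\lambda_j)$ — using Lemma~\ref{condition number} to pass from $\kappa(E)$ to $\kappa(E)^2$, and bounding the success amplitude after both stages by $\bigl(3\kappa(E)^2\sqrt{k}m\bigr)^{-2}$ so that the two amplification overheads multiply. The only slight imprecision in your accounting is that each amplification factor is $\rho\kappa(E)^2\sqrt{\log(\rho/\epsilon)}/\epsilon$ rather than $\rho\kappa(E)^2/\epsilon$, which is where the $\log(\rho/\epsilon)$ in the stated bound comes from.
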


As for normal matrices, the quantum algorithm is much simpler than above. Since normal matrices are unitarily diagonalizable, the eigenvalue decomposition and singular value decomposition are closely related.
For this kind of matrices, we can directly use quantum singular value estimation (QSVE) to solve Problem \ref{main problem}.
Suppose the eigenvalues are $\sigma_je^{i\theta_j}, j\in\{1,2,\ldots,n\}$.
Since we know how to do QSVE in a quantum
computer \cite{block-encoding,kerenidis_et_al:LIPIcs:2017:8154}, we can estimate $\sigma_j$ through QSVE.
As for $\theta_j$, by considering the QSVE of
$\left(\begin{array}{cc} 
0 & M  \\
M^\dag & 0
\end{array}\right)$,
we can construct a unitary $U$ to
perform $|E_j\rangle \mapsto e^{-i\theta_j} |E_j\rangle$. Hence, $\theta_j$ can be estimated by applying QPE to $U$.

For normal matrices, our main result is stated as follows.

\begin{thm}[Informal of Theorems \ref{thm1}, \ref{cor1}]
\label{intro:thm2}
Let $M$ be an $n\times n$ normal matrix.
Then Problem \ref{main problem} can be solved
in time $O(T/\epsilon^2)$,
where $O(T/\epsilon)$ is the complexity to do quantum singular value estimation of $M$ to precision $\epsilon$.
Especially, if $M$ is $s$ sparse, then the complexity is  $\widetilde{O}(s\rho \|M\|_{\max} /\epsilon^2).$
\end{thm}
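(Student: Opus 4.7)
The plan is to decouple each eigenvalue $\lambda_j=\sigma_j e^{i\theta_j}$ into its modulus and its phase, estimate the two pieces with separate subroutines, and combine them into a fresh register. Because $M$ is normal, the spectral theorem writes $M=UDU^\dag$ with $D=\diag(\sigma_j e^{i\theta_j})$, so the columns of $U$ serve simultaneously as the eigenvectors and as the right singular vectors of $M$, with singular values $\sigma_j=|\lambda_j|$ and left singular vectors equal to $e^{i\theta_j}$ times the right ones. In particular, an input of the form $\sum_j \beta_j|E_j\rangle$ is already expanded in the right singular basis, which is what makes QSVE directly applicable.

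The first subroutine is QSVE on $M$. Starting from $\sum_j\beta_j|E_j\rangle$ and running QSVE to precision $\Theta(\epsilon)$, I obtain $\sum_j\beta_j|\tilde\sigma_j\rangle|E_j\rangle$ with $|\tilde\sigma_j-\sigma_j|\le\epsilon$. By assumption this step costs $O(T/\epsilon)$ and leaves all the moduli entangled with their eigenvectors, without disturbing the eigenbasis.

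The second subroutine extracts the phases. Set $W=M(M^\dag M)^{-1/2}$, the unitary polar factor of $M$; for normal $M$ one has $W=U\diag(e^{i\theta_j})U^\dag$, so $W|E_j\rangle=e^{i\theta_j}|E_j\rangle$. To realise $W$ on a quantum computer I would work with the Hermitian dilation $H=\sm{0 & M \\ M^\dag & 0}$. A short calculation shows that its $\pm\sigma_j$-eigenvectors are $\tfrac{1}{\sqrt{2}}(\pm e^{i\theta_j}|0\rangle+|1\rangle)\otimes|E_j\rangle$, so the phases $\theta_j$ are literally encoded in the eigenvectors of $H$. From a block encoding of $M/\rho$ (hence of $H/\rho$), QSVE on $H$ together with standard singular-value-transformation machinery produces a unitary that approximates $W$ on the range of $M$. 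Applying QPE to this $W$ at precision $\Theta(\epsilon/\rho)$ then yields $\sum_j\beta_j|\tilde\sigma_j\rangle|\tilde\theta_j\rangle|E_j\rangle$ with $|\tilde\theta_j-\theta_j|\le\epsilon/\rho$. The outer QPE loop calls $W$ a total of $O(\rho/\epsilon)$ times, and each call to $W$ reduces to one QSVE invocation, giving the advertised bound $O(T/\epsilon^2)$.

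A final reversible arithmetic circuit writes $\tilde\lambda_j=\tilde\sigma_j e^{i\tilde\theta_j}$ into a fresh register and uncomputes the auxiliaries. The triangle inequality $|\lambda_j-\tilde\lambda_j|\le|\sigma_j-\tilde\sigma_j|+\sigma_j|\theta_j-\tilde\theta_j|\le\epsilon+\rho\cdot(\epsilon/\rho)=2\epsilon$ verifies that the chosen precisions suffice. Specialising to $s$-sparse $M$, a standard block encoding of $M/\rho$ with normalisation of order $\rho$ makes a single QSVE at precision $\epsilon$ cost $\widetilde O(s\rho\|M\|_{\max}/\epsilon)$, so the total complexity becomes $\widetilde O(s\rho\|M\|_{\max}/\epsilon^2)$. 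The part I expect to require the most care is the faithful implementation of the phase unitary $W$: the polar factor is undefined on $\ker M$ and the map $x\mapsto x/|x|$ is sharply varying near the origin, so any singular-value-transformation implementation pays a cost that grows with $1/\sigma_{\min}$ (or else the estimate of $\theta_j$ carries a proportionally larger error). The careful argument will therefore either impose a lower bound on $\sigma_{\min}$, or use the $|\tilde\sigma_j\rangle$ register from the first step to detect and discard the near-kernel components before invoking QPE.
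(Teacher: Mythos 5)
Your overall architecture---estimate the moduli $\sigma_j$ with QSVE of $M$, then feed QPE a unitary $W$ that multiplies $|E_j\rangle$ by $e^{\pm i\theta_j}$, with $W$ built from the Hermitian dilation $\left(\begin{smallmatrix}0 & M\\ M^\dag & 0\end{smallmatrix}\right)$---is exactly the paper's plan, down to the observation that normality makes the eigenbasis coincide with the right singular basis. Where you diverge, and where you correctly flag a hazard, is in the realisation of $W$: you propose to approximate the polar factor $M(M^\dag M)^{-1/2}$ by singular-value transformation from a block encoding, and as you observe, the target function $\sigma\mapsto 1$ (equivalently $\mathrm{sgn}$ on the dilation) has no good low-degree polynomial approximant near the origin, so the degree and hence the running time scale like $1/\sigma_{\min}$ unless you impose a spectral gap or insert a filter. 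That overhead does not appear in the theorem you are trying to prove.

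The ingredient you are missing is Claim~\ref{technical lemma}, which realises $W$ without any singular-value transformation and hence without a $1/\sigma_{\min}$ factor. Writing the SVD as $M=\sum_j\sigma_j|u_j\rangle\langle v_j|$ with $|v_j\rangle=e^{-2\pi i\theta_j}|u_j\rangle$, one has $|0\rangle|u_j\rangle = \tfrac{1}{\sqrt{2}}\bigl(|w_j^+\rangle+|w_j^-\rangle\bigr)$ for the $\pm\sigma_j$-eigenvectors $|w_j^\pm\rangle=\tfrac{1}{\sqrt{2}}(|0\rangle|u_j\rangle\pm|1\rangle|v_j\rangle)$ of the dilation. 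Run QSVE once so the two branches are tagged with $|+\tilde\sigma_j\rangle$ and $|-\tilde\sigma_j\rangle$; flip the sign of the amplitude on the $|-\tilde\sigma_j\rangle$ branch (the sign is classically readable from the phase-estimation register, as described after Algorithm~\ref{alg:qpe}); then invert the QSVE and apply Pauli-$X$ to the dilation ancilla. The output is $|0\rangle|v_j\rangle = e^{-2\pi i\theta_j}|0\rangle|u_j\rangle$. Each call to $W$ therefore costs two QSVE passes, $O(T/\epsilon_1)$, and wrapping it in QPE gives the total $O(T/\epsilon_1\epsilon_2)$. Two smaller remarks: the paper keeps $\epsilon_1$ (precision on $\sigma_j$) and $\epsilon_2$ (on $\theta_j$) separate and never converts to a single bound $|\lambda_j-\tilde\lambda_j|\le\epsilon$, so your triangle-inequality step with $\epsilon_2=\Theta(\epsilon/\rho)$ is a finer accounting that, carried through honestly, gives $O(T\rho/\epsilon^2)$ rather than $O(T/\epsilon^2)$; and your worry about small $\sigma_j$ is not fully dissolved by the paper's construction either, since the sign flip tacitly needs $\sigma_j\gtrsim\epsilon_1$ to distinguish $|+\tilde\sigma_j\rangle$ from $|-\tilde\sigma_j\rangle$---but this only threatens correctness for eigenvalues whose modulus is below the algorithm's resolution, not the running time.
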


In the above theorem, if $M=(m_{ij})$, then $ \|M\|_{\max} = \max_{i,j}|m_{ij}|$. Quantum singular value estimation can be solved efficiently in many situations in a quantum computer, so we can think $T=O({\rm polylog}(n))$.

\subsection{Related works}

The most related work is quantum phase estimation to estimate eigenvalues of Hermitian and unitary matrices \cite{AbramsLloyd,Kitaev}. It is the starting point of our research.
For estimating the eigenvalues of non-unitary (non-Hermitian) matrices, as far as we know, there are two papers \cite{Daskin,wang}. One difficulty behind this might be the result in complexity theory \cite{QMA1, QMA2}, 
which suggesting that 
many eigenvalue problems are QMA-complete, i.e., they are hard to solve even for quantum computers.
In \cite{Daskin}, Daskin et al. 
use the idea of iterative phase estimation algorithm to estimate the complex eigenvalues
of non-unitary matrices. This idea
works when the input state is $|E_j\rangle$.
The complexity of their quantum algorithm is $O(n^2/\epsilon)$.
In \cite{wang}, Wang et al. propose a measurement-based quantum phase estimation algorithm. 
This kind of idea is similar to the power
method. So it usually recovers the largest eigenvalue.
In \cite{low2019hamiltonian}, Low and Chuang studied the qubitization of normal operators. This result might be helpful to improve Theorem \ref{intro:thm2}.

There are many results in the classical case. For example, in a recent paper  \cite{banks2019pseudospectral}, Banks et al. proposed a random classical algorithm that can find an invertible
matrix $V$ and a diagonal matrix $D$ such that
$\|M-VDV^{-1}\|\leq \epsilon$ in time $O(n^\omega \log^2(n/\epsilon))$, where $2\leq \omega < 2.373$ is the exponent 
of matrix multiplication. 
This algorithm is optimal up to polylogarithmic factors, in the sense that verifying that a given similarity diagonalizes a matrix requires at least matrix multiplication time.
As a result, $VDV^{-1}$ can be viewed as an approximated eigenvalue decomposition of $M$.

In physics, eigenvalues of Hamiltonian describe the energy levels
of the quantum system. It is one of the most important tasks in chemistry as they are required to predict reaction rates and electronic structures. The $\mathcal{PT}$-symmetric matrices
refer to a type of Hamiltonians that are not Hermitian but
have real eigenvalues \cite{bender1998real}. These new kinds of Hamiltonians describe a new class of complex quantum theories having positive probabilities and unitary time evolution,
and play crucial roles in quantum mechanics and
many other areas of physics \cite{bender2007making,mostafazadeh2010pseudo}. Our algorithm may provide a method to perform this kind of Hamiltonian simulations in a quantum computer.
% These new kinds of Hamiltonians define valid and consistent quantum
% theories in which the mathematical condition of Dirac Hermiticity
% $H = H^\dag$ has been replaced by the physical condition of
% $\mathcal{PT}$ symmetry $H = H^{\mathcal{PT}}$.
%However, computing the energy levels is generally difficult.

\subsection{Organizations of the paper}

In section \ref{sec:Diagonalizable matrices that only have real eigenvalues}, we consider Problem \ref{main problem} for diagonalizable matrices that only have real eigenvalues. Then in section \ref{sec:A quantum algorithm to estimate complex eigenvalues}, we generalize the quantum algorithm to the case when the matrix has complex eigenvalues under an extra assumption. Finally, in section \ref{Normal Matrices}, we studies Problem \ref{main problem} for normal matrices. 
Since QPE is an important idea for our quantum algorithms,
in Appendix \ref{Quantum phase estimation}, we briefly introduce this method and review some quantum linear algebraic techniques that will be used in this paper.

\section{Diagonalizable matrices that only have real eigenvalues}
\label{sec:Diagonalizable matrices that only have real eigenvalues}

In this section, we study
Problem \ref{main problem} in the case that all  the eigenvalues  are real. 
These kind of matrices are closely related to Hermitian matrices. A simple fact is that a diagonalizable matrix $M$ only has real eigenvalues if and only if there is a Hermitian matrix $H$ and an invertible matrix $P$ such that $M = P H P^{-1}$ \cite{drazin1962criteria}. So if $H_1, H_2$ are Hermitian and $H_1$ or $H_2$ is positive definite, then $M = H_1H_2$ only have real eigenvalues. Indeed, this is the only possibility such that a diagonalizable matrix only has real eigenvalues. Note that if we know the decomposition $M = H_1 H_2$, then we can solve Problem \ref{main problem} by QPE directly. More precisely, assume that $H_1$ is positive definite, then there is a positive definite Hermitian matrix $H_3$ such that $H_1 = H_3^2$. Thus $H_3^{-1} M H_3 = H_3 H_2 H_3$ is Hermitian. In a quantum computer, $H_3$ can be created efficiently, such as by the technique of block-encoding \cite{block-encoding}. Now it suffices to find the eigenvalues of the Hermitian matrix $H_3 H_2 H_3$. This can be solved by QPE. In this paper, we will not consider this case.

In the following, we first state the main idea of our quantum algorithm based on the Euler method. Then we show the detailed analysis of solving ODEs based on \cite{berry2017quantum}. As an illustration, we will apply our quantum algorithm to compute the eigenvalues of sparse matrices. Finally, we analyze the difficulties of the quantum algorithm in estimating  complex eigenvalues.

% Since $M$ is diagonalizable, the eigenvectors are linear
% independent. Consequently, any vector can  uniquely be written as 
% a linear combination of them. This also means that
% the input state of Problem \ref{main problem} can be any given vector.
% For the quantum algorithm proposed below, we do not need 
% to know how it is decomposed.

\subsection{Main idea}

Assume that $M$ is an $n\times n$ diagonalizable matrix with real eigenvalues $\{\lambda_1,\ldots,\lambda_n\}$ and eigenvectors $\{|E_1\rangle,\ldots,|E_n\rangle\}$ of unit norm.
For simplicity, we first assume that the input state is $|E_j\rangle$ in Problem \ref{main problem}.
Now let us start from the following linear system of differential equations
\begin{equation} \label{eq1:new}
\begin{cases} \vspace{.2cm}
\displaystyle \frac{d\x(t)}{dt} = 2\pi i M \x(t), \\
\x(0) = |E_j\rangle.
\end{cases}
\end{equation}
It is easy to see that the solution of this differential equation is 
\be \label{eq solu}
\x(t) = e^{2\pi i M t} |E_j\rangle = 
e^{ 2\pi i \lambda_j t}  |E_j\rangle. 
\ee
Since $M$ is not Hermitian, it may not easy to
apply $e^{2\pi i  M t}$ directly to $|E_j\rangle$ 
to obtain the solution. 
Fortunately, there are some
quantum algorithms \cite{berry2014high,berry2017quantum,childs2019quantum}
to solve ordinary differential equations. 
These algorithms are mainly based on the 
discretization method.

One of the simplest discretization method is Euler method:
\be \label{Euler method}
\frac{d\x(t)}{dt} \approx \frac{\x(t) - \x(t-\Delta t)}{ \Delta t}.
\ee
In the following, we briefly show how Euler method is used to solve (\ref{eq1:new}).
First, we discrete the time interval $[0,t]$ into $m$ 
short intervals by setting
$t_0=0, t_1=\Delta t, t_2=2\Delta t, \ldots, t_m=m\Delta t = t$.
After discretization by Euler method (\ref{Euler method}), we obtain a linear system
\be \label{linear system: Euler method}
(A \otimes I_n - 2 \pi i \Delta t I_m\otimes M) \tilde{\x} = \b,
\ee
where
\[
A = \begin{pmatrix}
 1 &  & \\
-1 & 1 & \\
   & \ddots & \ddots & \\
   &        & -1 & 1 
\end{pmatrix}_{m\times m}, \quad
\tilde{\x} = \begin{pmatrix}
\x(t_1) \\
\x(t_2) \\
\vdots \\
\x(t_m)
\end{pmatrix}, \quad
\b = \begin{pmatrix}
\x(t_0) \\
0 \\
\vdots \\
0
\end{pmatrix}.
\]
Solve this linear system in a quantum computer, then we obtain
the quantum state of $\tilde{\x}$
\be
 \frac{1}{\sqrt{m}} \sum_{l=1}^m |l\rangle |\x(t_l)\rangle
=\frac{1}{\sqrt{m}} \sum_{l=1}^m e^{2\pi i l \lambda_j \Delta t} |l\rangle |E_j\rangle.
\ee
This is a quantum state similar to the one required
in the quantum phase estimation (see (\ref{QPE:required state}) in Appendix \ref{Quantum phase estimation}). 
Thus we can obtain $\ket{\tilde{\lambda}_j}\ket{E_j}$
by applying quantum Fourier inverse transform to $|l\rangle$.

If the initial state of (\ref{eq1:new}) 
is a linear combination of the eigenvectors
$\x(0) = \sum_{j=1}^n \beta_j |E_j\rangle$, then
the quantum state of the linear system (\ref{linear system: Euler method}) is proportional to
\be
\sum_{j=1}^n \sum_{l=0}^m  \beta_j e^{2\pi il \lambda_j \Delta t} |l\rangle  |E_j\rangle.
\ee
Similarly, when quantum Fourier inverse transform is applied to $|l\rangle$, we  obtain 
\be \label{general case}
\sum_{j=1}^n  \beta_j  |\tilde{\lambda}_j\rangle  |E_j\rangle.
\ee

In the next section, we will give more details. However,
we will not use the Euler method as it has worse dependence on the
precision \cite{berry2014high}.
Instead, we shall use the quantum algorithm proposed in \cite{berry2017quantum} to solve the ODE (\ref{eq1:new}).
It is poly-log at the precision.

\subsection{Error analysis}

Following the idea discussed above,
we need to solve the differential equation 
(\ref{eq1:new}).
It is interesting to see that the ODE (\ref{eq1:new})
satisfies the assumptions\footnote{ The assumptions are: $2\pi i M$ is diagonalizable, the eigenvalues have non-positive real parts.}
of the quantum algorithm proposed in
\cite{berry2017quantum}.
Thus we can apply this algorithm directly.

In the discretization method, we choose
$t=\Theta(1/\rho\epsilon)$, where $\epsilon$ is the precision
to approximate the eigenvalues, and
$\rho \geq \max\{1, \max_j |\lambda_j|\}$
is an upper bound of the eigenvalues.
The integer $m=\Theta(1/\epsilon)$ and 
$\Delta t = \Theta(1/\rho)$.
The choices of the parameters here 
are inspired by the quantum phase estimation, 
which will become clear later.

For any integer $k$ and any $z \in \mathbb{C}$, denote
the $(k+1)$-terms truncation of exponential function
$e^z$ as
\be
T_k(z) = \sum_{j=0}^k \frac{z^j}{j!}.
\ee
If $r\in \mathbb{R}$ and $|r|<1$, then it is easy to show that
$
|T_k(ir) - e^{ir}| \leq {e}/{(k+1)!}.
$

Set $d=m(k+1)$, define the $(d+1)n\times (d+1)n$ matrix $C_{m,k}$ by
\bea
C_{m,k}(2\pi iM\Delta t) &=& \sum_{p=0}^d |p\rangle \langle p|\otimes I - \sum_{p=0}^{m-1} \sum_{q=1}^k |p(k+1)+q\rangle\langle p(k+1)+q-1|\otimes \frac{2\pi iM\Delta t}{q} \\
&& -\, \sum_{p=0}^{m-1} \sum_{q=0}^k |(p+1)(k+1)\rangle\langle p(k+1)+q|\otimes I.
\label{coeff matrix}
\eea
The goal of this matrix is to implement $T_k(2\pi iM\Delta t)$ without computing matrix powers. For an illustration,
$C_{2,2}(A)$ is of the following form:
\bes
C_{2,2}(A) = 
\left(
\begin{array}{rrrrrrrrrr}
I &  \\
-A  & I \\
  & -\frac{A}{2}  & I \\
-I  & -I  & -I  & I \\
  &   &   &  -A & I \\
  &   &   &   & -\frac{A}{2}  & I \\
  &   &   & -I  & -I  & -I    & \,\,\, I \\
\end{array}
\right).
\ees

The definition of the matrix $C_{m,k}$ is a little different from the one defined in \cite{berry2017quantum}. We deleted their fourth summation term. This is caused by different goals.
In \cite{berry2017quantum}, the authors aim to
compute the quantum state of the solution at time $t_m$. 
The fourth term is introduced to improve the 
probability of this state.
However, in this paper, we need the superposition of the quantum state of the solution at time $t_0,t_1,\ldots,t_m$. Thus, the solution at time $t_m$ is not special for us.

Now we consider the following linear system
\be \label{linear system}
C_{m,k}(2\pi iM\Delta t) \tilde{\x} = |0..0\rangle|E_j\rangle. 
\ee
In the right hand side, $|0..0\rangle$ is added to make sure
both sides have the same dimension. 
Here $\tilde{\x}$ refers to the approximated solution of the differential equation (\ref{eq1:new}).
We can formally write the solution as
\be \label{eq:approsolu}
\tilde{\x} = \sum_{p=0}^{m-1} \sum_{q=0}^k |p(k+1)+q\rangle |\x_{p,q}\rangle + |m(k+1)\rangle |\x_{m,0}\rangle.
\ee
In the above expression, we use the ket notation to simplify the expression of $\tilde{\x}$, but $|\x_{p,q}\rangle$ may not a unit vector.

We can check that (see Appendix \ref{verification})
\bea 
|\x_{0,0}\rangle &=& |E_j\rangle, \label{verify1}\\
|\x_{0,q}\rangle &=& \frac{(2\pi iM\Delta t)^q}{q!}|E_j\rangle =
\frac{(2\pi i\lambda_j \Delta t)^q}{q!}|E_j\rangle, \quad (q=1,\ldots,k). \label{verify2}
\eea
If $p>0$, then
\bea \label{eq:appro solu}
|\x_{p,0}\rangle &=& T_k(2\pi iM\Delta t)|x_{p-1,0}\rangle =
T_k(2\pi i\lambda_j \Delta t)|x_{p-1,0}\rangle, \\
|\x_{p,q}\rangle &=& \frac{(2\pi iM\Delta t)^q}{q!}|x_{p,0}\rangle =
\frac{(2\pi i\lambda_j \Delta t)^q}{q!}|x_{p,0}\rangle, \quad (q=1,\ldots,k). \label{verify3}
\eea

\begin{lem} \label{lemma1}
Assume that $2\pi \Delta t |\lambda_j|<1$, 
then for any 
$p\in\{1,2,\ldots,m\}$, we have
\be
\||\x(t_p)\rangle-|\x_{p,0}\rangle\|_2 \leq \left(1+\frac{e}{(k+1)!}\right)^p -1.
\ee
Moreover,
\be
\Big|\||\x_{p,0}\rangle\|_2 - 1\Big|\leq \left(1+\frac{e}{(k+1)!}\right)^p -1.
\ee
\end{lem}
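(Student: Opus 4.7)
The plan is to reduce the problem to a scalar estimate by exploiting that $|E_j\rangle$ is an eigenvector of $M$ with real eigenvalue $\lambda_j$. From (\ref{verify1})--(\ref{verify3}) (specifically the recursion $|\x_{p,0}\rangle = T_k(2\pi i\lambda_j\Delta t)|\x_{p-1,0}\rangle$ with $|\x_{0,0}\rangle=|E_j\rangle$), one obtains by iteration
\[
|\x_{p,0}\rangle = T_k(2\pi i\lambda_j\Delta t)^p\,|E_j\rangle,
\]
while from (\ref{eq solu}) the exact solution at $t_p=p\Delta t$ is
\[
|\x(t_p)\rangle = e^{2\pi i\lambda_j t_p}|E_j\rangle = \bigl(e^{2\pi i\lambda_j\Delta t}\bigr)^p|E_j\rangle.
\]
So the whole question collapses to estimating $|T_k(ir)^p - e^{irp}|$ where $r := 2\pi\lambda_j\Delta t$, with $|r|<1$ by assumption and $r\in\mathbb{R}$ because $\lambda_j$ is real.

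Next I would set $\delta := e/(k+1)!$, which by the stated truncation bound gives $|T_k(ir) - e^{ir}|\leq \delta$. Since $r$ is real, $|e^{ir}|=1$, and therefore the triangle inequality yields $|T_k(ir)|\leq 1+\delta$. The main step is then a standard telescoping identity:
\[
T_k(ir)^p - e^{irp} \;=\; \sum_{l=0}^{p-1} T_k(ir)^l \bigl(T_k(ir) - e^{ir}\bigr) e^{ir(p-1-l)}.
\]
Taking absolute values, using $|T_k(ir)^l|\leq (1+\delta)^l$, $|T_k(ir)-e^{ir}|\leq \delta$ and $|e^{ir(p-1-l)}|=1$, and summing the resulting geometric series yields
\[
|T_k(ir)^p - e^{irp}| \;\leq\; \delta\sum_{l=0}^{p-1}(1+\delta)^l \;=\; (1+\delta)^p - 1.
\]
Multiplying by the unit vector $|E_j\rangle$ gives the first claimed bound.

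For the second claim, I would note $\|\,|\x_{p,0}\rangle\,\|_2 = |T_k(ir)^p|$ and $|e^{irp}|=1$, so by the reverse triangle inequality
\[
\bigl|\|\,|\x_{p,0}\rangle\,\|_2 - 1\bigr| \;=\; \bigl||T_k(ir)^p| - |e^{irp}|\bigr| \;\leq\; |T_k(ir)^p - e^{irp}| \;\leq\; (1+\delta)^p - 1,
\]
which is the desired estimate. I do not expect any real obstacle: the only subtlety is remembering to use that $\lambda_j\in\mathbb{R}$ to ensure $|e^{ir}|=1$ (so that the truncation error does not get amplified through the geometric factors), and to use the telescoping decomposition rather than trying a crude binomial-type expansion, which would not give the clean $(1+\delta)^p-1$ form.
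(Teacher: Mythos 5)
Your proposal is correct and takes essentially the same approach as the paper's proof: both rely on the same two inputs, namely the truncation bound $|T_k(ir)-e^{ir}|\leq e/(k+1)!$ and the fact that $|e^{ir}|=1$ since $\lambda_j$ is real (hence $|T_k(ir)|\leq 1+\delta$), and both reduce to the same geometric sum $\delta\sum_{l=0}^{p-1}(1+\delta)^l=(1+\delta)^p-1$. The only cosmetic difference is presentational: the paper sets up the recursive inequality $\|\mathrm{err}_p\|\leq\delta+(1+\delta)\|\mathrm{err}_{p-1}\|$ at the vector level and unrolls it, whereas you first collapse to scalars via $|\x_{p,0}\rangle=T_k(ir)^p|E_j\rangle$ and then write the closed-form telescoping identity $a^p-b^p=\sum_l a^l(a-b)b^{p-1-l}$; these are the same calculation arranged differently.
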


\begin{proof}
By equations (\ref{eq solu}), (\ref{eq:appro solu}),
\begin{eqnarray*}
\||\x(t_p)\rangle-|\x_{p,0}\rangle\|_2
&=& \|e^{2\pi i\lambda_j \Delta t} |\x(t_{p-1})\rangle - T_k(2\pi i\lambda_j \Delta t)|\x_{p-1,0}\rangle \|_2  \\
&\leq& |e^{2\pi i\lambda_j \Delta t} - T_k(2\pi i\lambda_j \Delta t)|+
|T_k(2\pi i\lambda_j \Delta t)| \, \||\x(t_{p-1})\rangle - |\x_{p-1,0}\rangle\|_2 \\
&\leq& \frac{e}{(k+1)!}+
\left(1+\frac{e}{(k+1)!}\right) \||\x(t_{p-1})\rangle - |\x_{p-1,0}\rangle\|_2 \\
&\leq& \frac{e}{(k+1)!}
\sum_{r=0}^{p-1} \left(1+\frac{e}{(k+1)!}\right)^r \\
&=&
\frac{e}{(k+1)!} \frac{(1+\frac{e}{(k+1)!})^p -1 }{(1+\frac{e}{(k+1)!}) -1} \\
&=& \left(1+\frac{e}{(k+1)!}\right)^p -1.
\end{eqnarray*}
This proves the first claim.
Since $|\x(t_p)\rangle$ is unit, the second result comes naturally.
\end{proof}

\begin{prop} \label{prop:Solution error}
Let $|\x\rangle = \frac{1}{\sqrt{m+1}} \sum_{p=0}^m |p\rangle |\x(t_p)\rangle$ be the superposition of the exact solution of the differential equation (\ref{eq1:new}). Denote 
the quantum state of $\sum_{p=0}^{m} |p\rangle |\x_{p,0}\rangle$ as 
$|\hat{\x}\rangle$,
then
\be
\||\x\rangle - |\hat{\x}\rangle \|_2 \leq
2\sqrt{m+1} \left(\left(1+\frac{e}{(k+1)!}\right)^m - 1\right) \left(2 - \left(1+\frac{e}{(k+1)!}\right)^m\right)^{-1/2}.
\ee
\end{prop}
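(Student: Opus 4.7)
The approach is to reduce the normalized-state discrepancy to a bound on the unnormalized vectors
\[
v := \sum_{p=0}^m |p\rangle\,|\x(t_p)\rangle, \qquad w := \sum_{p=0}^m |p\rangle\,|\x_{p,0}\rangle,
\]
so that $|\x\rangle = v/\|v\|_2$ and $|\hat{\x}\rangle = w/\|w\|_2$, with $\|v\|_2 = \sqrt{m+1}$ exactly since each $|\x(t_p)\rangle$ has unit norm. Set $\eta := (1+e/(k+1)!)^m - 1$; Lemma~\ref{lemma1} then supplies, uniformly in $p \le m$, the per-component discrepancy bound $\bigl\|\, |\x(t_p)\rangle - |\x_{p,0}\rangle \,\bigr\|_2 \le \eta$ and the norm-deviation bound $\bigl|\, \||\x_{p,0}\rangle\|_2 - 1 \,\bigr| \le \eta$.

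From the first estimate, the triangle inequality yields $\|v-w\|_2 \le (m+1)\eta$. (Using the triangle inequality rather than a Pythagorean estimate is what eventually produces the factor $\sqrt{m+1}$ in the statement after one normalization by $\|v\|_2$.) From the second estimate, $\||\x_{p,0}\rangle\|_2 \ge 1 - \eta$ for each $p$, so $\|w\|_2^2 \ge (m+1)(1-\eta)^2$, and hence
\[
\sqrt{\|v\|_2\,\|w\|_2} \;\ge\; \sqrt{m+1}\,\sqrt{1-\eta} \;=\; \sqrt{m+1}\,\bigl(2 - (1+e/(k+1)!)^m\bigr)^{1/2}.
\]

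I would then combine the two estimates via the standard normalization inequality $\bigl\|\, a/\|a\| - b/\|b\| \,\bigr\|_2 \le 2\|a-b\|_2/\sqrt{\|a\|_2\|b\|_2}$, which follows from the identity $\|a/\|a\| - b/\|b\|\|_2^2 = (\|a-b\|_2^2 - (\|a\|_2 - \|b\|_2)^2)/(\|a\|_2\|b\|_2)$ together with a factor-of-two allowance (or equivalently by splitting the difference into a common-denominator term plus a norm-correction term and applying the triangle inequality, using $|\|v\|_2 - \|w\|_2| \le \|v-w\|_2$). Substituting the two previous bounds produces exactly the claimed expression.

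The main obstacle is the last step, specifically the coupling of the two ingredients: one must ensure that the geometric mean $\sqrt{\|v\|_2\,\|w\|_2}$ (not $\|v\|_2$ or $\|w\|_2$ alone) appears in the denominator, so that the factor $(2-(1+e/(k+1)!)^m)^{-1/2}$ emerges in exactly the claimed form rather than $(1-\eta)^{-1}$ or $(1-\eta)^{-1/4}$. Beyond this bookkeeping, no conceptual input is needed beyond Lemma~\ref{lemma1} and the standard normalization inequalities.
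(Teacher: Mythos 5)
Your proposal is correct and follows essentially the same route as the paper's proof. Both arguments feed the two estimates of Lemma~\ref{lemma1} (the per-step discrepancy $\||\x(t_p)\rangle-|\x_{p,0}\rangle\|_2\le\eta$ and the norm deviation $|\,\||\x_{p,0}\rangle\|_2-1\,|\le\eta$) into a triangle-inequality split that separates the contribution of the unnormalized discrepancy from that of the normalization mismatch; the paper carries this out component-by-component, writing $\frac{|\x(t_p)\rangle}{\sqrt{m+1}}-\frac{|\x_{p,0}\rangle}{\sqrt{L}}=\bigl(\frac{1}{\sqrt{m+1}}-\frac{1}{\sqrt L}\bigr)|\x(t_p)\rangle+\frac{1}{\sqrt L}(|\x(t_p)\rangle-|\x_{p,0}\rangle)$ and summing over $p$, while you package the same algebra once and for all through the normalization identity applied to the global vectors $v$ and $w$. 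One small remark: the ``main obstacle'' you flag is not actually an obstacle. Since $\|v\|_2=\sqrt{m+1}$ exactly, the cruder inequality $\|a/\|a\|-b/\|b\|\|_2\le 2\|a-b\|_2/\|a\|_2$ already gives $2\sqrt{m+1}\,\eta$, which is \emph{stronger} than the claim because it drops the $(1-\eta)^{-1/2}$ factor; the geometric-mean denominator is only needed if you insist on reproducing the stated constant verbatim rather than merely establishing the bound. Either way the proposition follows.
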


\begin{proof}
Denote the $l_2$-norm of $\sum_{p=0}^{m} |p\rangle |\x_{p,0}\rangle$ as $\sqrt{L}$. Then
\begin{eqnarray}
\||\x\rangle - |\hat{\x}\rangle \|_2 &=&
\sum_{p=0}^m \left\| \frac{|\x(t_p)\rangle}{\sqrt{m+1}} - \frac{|\x_{p,0}\rangle}{\sqrt{L}} \right\|_2 \nonumber \\
&\leq& \sum_{p=0}^m \left| \frac{1}{\sqrt{m+1}} - \frac{1}{\sqrt{L}} \right|
+\frac{1}{\sqrt{L}} \sum_{p=0}^m 
\| |\x(t_p)\rangle - |\x_{p,0}\rangle \|_2.
\label{proof:prop:eq1}
\end{eqnarray}
By Lemma \ref{lemma1},
\[
|L-m-1| \leq (m+1) \left(\left(1+\frac{e}{(k+1)!}\right)^m - 1\right)=:(m+1)A,
\]
so
\[
\left| \frac{1}{\sqrt{m+1}} - \frac{1}{\sqrt{L}} \right|
=\frac{|L-m-1|}{\sqrt{L(m+1)}(\sqrt{m+1}+\sqrt{L})}
\leq \frac{A}{\sqrt{(m+1)(1-A)}}.
\]
Thus the first term of (\ref{proof:prop:eq1}) is bounded by 
$A \sqrt{m+1}/\sqrt{1-A}$.
Again by Lemma \ref{lemma1}, the second term
of (\ref{proof:prop:eq1}) is also 
bounded by $A\sqrt{m+1}/\sqrt{1-A}$.
This completes the proof.
\end{proof}

Proposition \ref{prop:Solution error} states that if we
choose $k$ such that $(k+1)!\gg m$, then 
the quantum state of the solution of the differential equation
(\ref{eq1:new}) is approximated by the part of the quantum state of
the solution of the linear system (\ref{linear system}) with $q=0$.
The following lemma further enhances this by showing that this part occupies a constant
amplitude. If we perform amplitude amplification, 
then we can enlarge the amplitude of this state close to 1 with $O(1)$ repetitions.

\begin{lem} \label{lemma2}
Assume that $|2\pi \lambda_j \Delta t|<1$, then
the $l_2$-norm of the solution $\tilde{\x}$ of the linear system (\ref{linear system}) satisfies
\be
\left|\|\tilde{\x}\|_2^2 - e^{2 \pi \lambda_j \Delta t} \sum_{p=0}^{m-1}  \||\x_{p,0}\rangle\|_2^2 - \|\x_{m,0}\|_2^2\right|
\leq \frac{e}{(k+1)!}.
\ee
\end{lem}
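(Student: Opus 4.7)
The plan is to expand $\|\tilde{\x}\|_2^2$ directly from the explicit block description in (\ref{eq:approsolu}). Because the basis kets $|p(k+1)+q\rangle$ are mutually orthogonal, the squared norm splits cleanly as
$$\|\tilde{\x}\|_2^2 \;=\; \sum_{p=0}^{m-1}\sum_{q=0}^{k}\bigl\||\x_{p,q}\rangle\bigr\|_2^2 \;+\; \bigl\||\x_{m,0}\rangle\bigr\|_2^2.$$
Using equation (\ref{verify3}) to write $|\x_{p,q}\rangle$ as a scalar multiple of $|\x_{p,0}\rangle$, the inner sum over $q$ factors as
$$\sum_{q=0}^{k}\bigl\||\x_{p,q}\rangle\bigr\|_2^2 \;=\; \||\x_{p,0}\rangle\|_2^2 \cdot \sum_{q=0}^{k}\frac{(2\pi|\lambda_j|\Delta t)^{2q}}{(q!)^2}.$$

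This reduces the proof to a purely scalar estimate that compares the truncated series $\sum_{q=0}^{k}(2\pi|\lambda_j|\Delta t)^{2q}/(q!)^2$ to the target $e^{2\pi\lambda_j\Delta t}$. The hypothesis $|2\pi\lambda_j\Delta t|<1$ makes the factorial in the denominator dominate the power in the $q$-tail: combining the elementary inequalities $|2\pi\lambda_j\Delta t|^{2q}\le 1$ and $1/(q!)^2\le 1/q!$, the tail is bounded by $\sum_{q\ge k+1}1/q!\le e/(k+1)!$, which is the same factorial tail estimate that powered Lemma \ref{lemma1}. Plugging the scalar estimate back in, summing over $p$, and keeping the boundary term $\||\x_{m,0}\rangle\|_2^2$ separate then yields the inequality stated in the lemma.

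The step I expect to need the most care is the scalar comparison itself: the series $\sum_{q}\alpha^{2q}/(q!)^2$ is a modified-Bessel-type series, not a plain exponential, so matching it against $e^{2\pi\lambda_j\Delta t}$ requires exploiting the precise cancellations coming from the $|T_k(z)|^2$ expansion rather than a naive tail bound. A secondary technical point is ensuring that the $\||\x_{p,0}\rangle\|_2^2$ prefactors — which by Lemma \ref{lemma1} can deviate slightly from $1$ — do not inflate the total error above $e/(k+1)!$ once one sums over $p$; the operating regime $(k+1)!\gg m$, already invoked in Proposition \ref{prop:Solution error}, should absorb this factor. If the per-$p$ cancellation turns out to be cleaner than a uniform bound, an alternative is to redo the telescoping argument of Lemma \ref{lemma1} directly on the partial sums, so that the error is generated once (from the very last truncation) rather than accumulated across the $m$ time blocks.
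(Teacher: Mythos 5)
Your first two steps are correct, and in fact more careful than the paper's own proof: by orthogonality $\|\tilde{\x}\|_2^2=\sum_{p=0}^{m-1}\sum_{q=0}^k\||\x_{p,q}\rangle\|_2^2+\||\x_{m,0}\rangle\|_2^2$, and by (\ref{verify3}) the inner sum is $\||\x_{p,0}\rangle\|_2^2\sum_{q=0}^k\alpha^{2q}/(q!)^2$ with $\alpha=2\pi\lambda_j\Delta t$. The paper, at the analogous step, writes $\||\x_{p,q}\rangle\|_2^2=\frac{\alpha^{q}}{q!}\||\x_{p,0}\rangle\|_2^2$ — i.e.\ it forgets to square the scalar coefficient — identifies the inner sum with $T_k(\alpha)$, and then finishes with the one-line truncation bound $|T_k(\alpha)-e^{\alpha}|\leq e/(k+1)!$. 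The genuine gap in your proposal is exactly the step you flag as delicate: there are no ``cancellations from the $|T_k(z)|^2$ expansion'' that match $\sum_{q=0}^k\alpha^{2q}/(q!)^2$ against $e^{\alpha}$ to accuracy $e/(k+1)!$. Your tail estimate $\sum_{q>k}1/q!\leq e/(k+1)!$ only compares the truncated series with the full Bessel-type series $I_0(2\alpha)=\sum_{q\geq0}\alpha^{2q}/(q!)^2$, and the remaining discrepancy $|I_0(2\alpha)-e^{\alpha}|$ is of order $|\alpha|$, not of order $1/(k+1)!$ (e.g.\ $\alpha=1/2$ gives $I_0(1)\approx1.27$ versus $e^{1/2}\approx1.65$). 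Since the exact value of the quantity the lemma bounds is $\bigl(\sum_{q=0}^k\alpha^{2q}/(q!)^2-e^{\alpha}\bigr)\sum_{p=0}^{m-1}\||\x_{p,0}\rangle\|_2^2$, which is generically of size $\Theta(|\alpha|\,m)$, no completion of your argument can reach the stated bound: the literal statement is not provable, and your correct computation has exposed the error in the paper's own proof rather than a flaw in your decomposition.

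The constructive fix is to prove what your computation actually gives: $\|\tilde{\x}\|_2^2=\bigl(\sum_{q=0}^k\alpha^{2q}/(q!)^2\bigr)\sum_{p=0}^{m-1}\||\x_{p,0}\rangle\|_2^2+\||\x_{m,0}\rangle\|_2^2$ exactly, where for $|\alpha|<1$ the scalar factor lies between $1$ and $e$ (or, if one prefers a limit form, equals $I_0(2\alpha)$ up to a factorial tail of size at most $e/((k+1)!)^2$ per block). This corrected statement is all that is used downstream — in Proposition \ref{main prop} and Theorem \ref{thm2} the lemma only certifies that the $q=0$ block carries a constant fraction of the squared norm, and a factor in $[1,e]$ serves that purpose just as well as $e^{\alpha}$. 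Two smaller points: your worry about the $\||\x_{p,0}\rangle\|_2^2$ prefactors is unnecessary, since the lemma keeps $\sum_p\||\x_{p,0}\rangle\|_2^2$ explicitly and no appeal to Lemma \ref{lemma1} is needed there; and note that any per-block scalar error gets multiplied by $\sum_{p=0}^{m-1}\||\x_{p,0}\rangle\|_2^2\approx m$, a factor the paper's stated bound also silently drops (harmless once $(k+1)!\gg m^2/\epsilon$, but worth writing correctly).
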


\begin{proof}
By equation (\ref{eq:approsolu})
\beas 
\|\tilde{\x}\|_2^2 &=& \sum_{p=0}^{m-1} \sum_{q=0}^k  \||\x_{p,q}\rangle\|_2^2 + \|\x_{m,0}\|_2^2 \\  
&=&  \sum_{p=0}^{m-1} \sum_{q=0}^k \frac{(2 \pi\lambda_j \Delta t)^q}{q!} \||\x_{p,0}\rangle\|_2^2  + \|\x_{m,0}\|_2^2 \\
&=&   T_k(2 \pi\lambda_j \Delta t) \sum_{p=0}^{m-1}  \||\x_{p,0}\rangle\|_2^2+ \|\x_{m,0}\|_2^2.
\eeas
It is easy to prove that
$
|T_k(2 \pi\lambda_j \Delta t) - e^{2 \pi\lambda_j \Delta t}| \leq {e}/{(k+1)!}.
$
This completes the proof.
\end{proof}

\subsection{Main results}

In this paper, for solving an $n$-by-$n$ linear system $A\x = \b$ in a quantum computer, we shall use $T(\kappa(A), \epsilon, n)$ to denote the complexity. Here $\kappa(A)$ is the condition number of the $A$ and $\epsilon$ is the precision to approximate the solution state. The cost $T$ may depend on other parameters, like the sparsity or the norm of $A$, but we concern more about $\kappa(A),\epsilon$ and $n$ in this paper. In many cases, for a quantum linear solver, the complexity $T$ is linear at $\kappa(A)$ and poly-log at $\epsilon,n$.

The following theorem concludes the result of solving the ODE (\ref{eq1:new}) 
in the special case when $\x(0) = |E_j\rangle$ for some $j$.

\begin{prop} \label{main prop}
Let $M$ be an $n\times n$ diagonalizable matrix which only has real eigenvalues, $\rho\geq 1$ be a upper bound of the eigenvalues. 
Suppose that the linear system (\ref{coeff matrix}) can be solved in time $T(\kappa(C_{m,k}),\epsilon,m(k+1)n)$ to precision $\epsilon$. Given access to copies of one of the eigenvectors, then there is a quantum algorithm that estimates the corresponding eigenvalue up to precision $\rho \epsilon$ in time
$
T(\kappa(E)\epsilon^{-1}(\log 1/\epsilon),\epsilon,n\epsilon^{-1}\log(1/\epsilon))$,
where $\kappa(E)$ is the condition number of the matrix generated by the eigenvectors.
\end{prop}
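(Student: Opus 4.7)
The plan is to execute the algorithmic recipe already sketched in the subsection on the main idea: solve the linear system \eqref{linear system} on a quantum computer, project onto and amplify the $q=0$ subregister, and then run a quantum Fourier inverse transform on the time index. The error and amplitude bookkeeping are already packaged by Proposition \ref{prop:Solution error} and Lemma \ref{lemma2}, so what remains is to fix the parameters, bound the condition number $\kappa(C_{m,k}(2\pi iM\Delta t))$, and carry out the standard QPE-style readout.

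First I would fix $\Delta t = \Theta(1/\rho)$ and $m = \Theta(1/\epsilon)$ so that $|2\pi\lambda_j \Delta t|<1$, and choose the truncation order $k$ so that $(k+1)! \gg m$; by Stirling $k+1 = O(\log(1/\epsilon)/\log\log(1/\epsilon))$ suffices, which I will absorb into a $\log(1/\epsilon)$ factor. With these choices Proposition \ref{prop:Solution error} forces the normalized $q=0$ slice of the linear-system solution to be within $O(1/\sqrt{m})$ in $\ell_2$-distance of the target superposition $(m+1)^{-1/2}\sum_p e^{2\pi ip\lambda_j\Delta t}|p\rangle|E_j\rangle$, while Lemma \ref{lemma2} shows that this slice carries a constant fraction of the total amplitude of $\tilde{\x}$. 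Thus $O(1)$ rounds of amplitude amplification applied to the linear-system solver output prepare (an approximation to) the target superposition.

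Second I would bound $\kappa(C_{m,k}(2\pi iM\Delta t))$. Using the eigendecomposition $M = EDE^{-1}$ yields the similarity $C_{m,k}(2\pi iM\Delta t) = (I\otimes E)\,C_{m,k}(2\pi iD\Delta t)\,(I\otimes E^{-1})$, reducing the problem to the diagonal case plus a $\kappa(E)$-type penalty coming from the similarity. In the diagonal case $C_{m,k}(2\pi iD\Delta t)$ decouples into $n$ independent scalar block lower-bidiagonal systems, each with unit diagonal and off-diagonal entries bounded by $|2\pi\lambda_j\Delta t|^q/q!\leq 1$; explicit triangular inversion then bounds both its norm and the norm of its inverse by $O(m(k+1))$. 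Combining these bounds yields $\kappa(C_{m,k}) = O(\kappa(E)\,m(k+1)) = O(\kappa(E)\epsilon^{-1}\log(1/\epsilon))$, which matches the bound advertised in the statement.

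Third, after amplitude amplification the first register holds exactly the QPE input corresponding to phase $\lambda_j\Delta t$, so the quantum Fourier inverse transform yields an estimate of $\lambda_j\Delta t$ to additive precision $1/m = \Theta(\epsilon)$ with constant success probability; rescaling by $1/\Delta t = \Theta(\rho)$ produces $\tilde\lambda_j$ with $|\tilde\lambda_j - \lambda_j| = O(\rho\epsilon)$, the advertised eigenvalue precision. The dominant cost is the single call to the quantum linear-system solver on a coefficient matrix of dimension $(m(k+1)+1)n = O(n\epsilon^{-1}\log(1/\epsilon))$ with condition number $O(\kappa(E)\epsilon^{-1}\log(1/\epsilon))$ at precision $\epsilon$; amplitude amplification and the quantum Fourier transform contribute only $O(1)$ and $\polylog$ overheads. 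I expect the main obstacle to be the condition-number bound in the second step: a naive two-sided similarity argument loses a factor of $\kappa(E)^2$, so one must exploit the structure that $C_{m,k}$ is essentially a truncated matrix exponential in $M$ in order to keep the dependence on $\kappa(E)$ only linear as claimed.
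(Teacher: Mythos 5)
Your high-level plan matches the paper's proof: solve the linear system with coefficient matrix $C_{m,k}$, use Proposition~\ref{prop:Solution error} and Lemma~\ref{lemma2} to argue that the $q=0$ slice approximates the desired superposition $\frac{1}{\sqrt{m+1}}\sum_p e^{2\pi ip\lambda_j\Delta t}\ket{p}\ket{E_j}$ with constant amplitude, amplify in $O(1)$ rounds, and then run inverse QFT and read out $\lambda_j\Delta t$ in the QPE sense, rescaling by $1/\Delta t=\Theta(\rho)$. The parameter choices $\Delta t=\Theta(1/\rho)$, $m=\Theta(1/\epsilon)$, $k=O(\log(1/\epsilon))$ are also exactly what the paper uses, as is the final dimension $m(k+1)n=O(n\epsilon^{-1}\log(1/\epsilon))$.

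The gap is entirely in your second step, the condition-number bound, and it is a real gap. The paper does not re-derive this bound: it simply cites Theorem~5 of~\cite{berry2017quantum}, which yields $\kappa(C_{m,k})=O(\kappa(E)\,km)$. Your attempted derivation has two problems. First, your claim that ``explicit triangular inversion then bounds both its norm and the norm of its inverse by $O(m(k+1))$'' followed by ``combining these bounds yields $\kappa(C_{m,k})=O(\kappa(E)\,m(k+1))$'' does not parse: if $\|C_{m,k}(2\pi iD\Delta t)\|$ and $\|C_{m,k}(2\pi iD\Delta t)^{-1}\|$ were both $O(m(k+1))$, then the diagonal-block condition number would already be $O(m^2(k+1)^2)$, and the two-sided similarity $(I\otimes E)\cdot(I\otimes E^{-1})$ multiplies this by another $\kappa(E)^2$, giving $O(\kappa(E)^2 m^2(k+1)^2)$, quadratically worse than claimed. (In fact $\|C_{m,k}(2\pi iD\Delta t)\|=O(\sqrt{k+1})$, not $O(m(k+1))$: the matrix is $O(k)$-banded with entries of magnitude at most $1$.) Second, you yourself note that the naive two-sided similarity loses a $\kappa(E)^2$ but you do not resolve that issue. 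The resolution in~\cite{berry2017quantum}, which the paper relies on, is to bound $\|C_{m,k}(2\pi iM\Delta t)\|$ \emph{directly} from the banded structure and bounded block norms, \emph{not} through the similarity transform; the similarity (equivalently, the spectral projection onto each eigenvector) is invoked only when bounding $\|C_{m,k}(2\pi iM\Delta t)^{-1}\|$, and this is where the single factor of $\kappa(E)$ enters (cf.\ the paper's later use of $\|C_{m,k}^{-1}\|\leq 3\kappa(E)\sqrt{k}\,m$ from Lemma~3 of~\cite{berry2017quantum} in the proof of Theorem~\ref{thm2}). Until you supply that asymmetric argument, your proposal does not actually establish the $\kappa(E)\epsilon^{-1}\log(1/\epsilon)$ condition number in the statement, and hence does not establish the claimed running time.
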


\begin{proof}
By Proposition \ref{prop:Solution error}, 
to make sure the error between
$|\x\rangle$ and $|\hat{\x}\rangle$
is smaller than $\epsilon$, we can choose
$k$ such that $(k+1)!\geq m^2/\epsilon$. 
That is $k \geq \log (m/\epsilon)$. Thus,
we can set $k=O(\log (m/\epsilon))=O(\log (1/\epsilon))$
as $m=O(1/\epsilon)$.

Let $|\bar{\x}\rangle$ be the quantum state obtained
by the quantum linear solver to solve the linear system
(\ref{coeff matrix}).
Since $2\pi \Delta  t|\lambda_j|<1$, Lemma \ref{lemma2}
shows that in $|\bar{\x}\rangle$ the amplitude of $|\hat{\x}\rangle$ is close to $e^{-2\pi \Delta  t|\lambda_j|}>e^{-1}>0.36$.
Applying amplitude amplification with $O(1)$ repetitions,
we can increase the amplitude of $|\hat{\x}\rangle$ close to 1 in $|\bar{\x}\rangle$.
Since
\begin{equation} \label{solution}
|\x\rangle = \frac{1}{\sqrt{m+1}} \sum_{l=0}^m |l\rangle |\x(t_l)\rangle
= \frac{1}{\sqrt{m+1}} \sum_{l=0}^m e^{2\pi i \lambda_j l \Delta t} |l\rangle |E_j\rangle.
\end{equation}
If we apply quantum Fourier inverse transform to the first register of $|\hat{\x}\rangle$, then we can obtain an $\epsilon$ approximation of
\begin{equation} \label{qft on solution}
\frac{1}{m+1}\sum_{k=0}^m \left(\sum_{l=0}^m
e^{2 \pi i l (\lambda_j  \Delta t-\frac{k}{m+1})} \right)
|k\rangle |E_j\rangle.
\end{equation}
Based on the analysis of QPE, we are left with a $k$ such that
$k/(m+1)$ is an $\epsilon$ approximation of $\lambda_j\Delta t$
with probability close to 1.

The complexity is dominated by the solving the linear system (\ref{coeff matrix}). As proved 
in Theorem 5 of \cite{berry2017quantum}, the condition number of $C_{m,k}$ is bounded by $O(\kappa(E)km)$. For the linear system (\ref{coeff matrix}), the dimension is $m(k+1)n$. Finally, based on the choices of $k$ and $m$, we obtain the claimed complexity of the above procedure.
\end{proof}

In a quantum computer, under certain conditions (e.g. block-encoding, sparse), the complexity to solve a linear system is linear at the condition number, logarithm on the precision, and the dimension. 
Thus, under these conditions, the complexity of Proposition \ref{main prop} can be simplified into $\widetilde{O}(\kappa(E)/\epsilon)$.

In the general case, the eigenvalue problem can be solved similarly. Before we state the algorithm, we remark that
in the proof of Proposition \ref{main prop}, we approximate $\lambda_j\Delta t$ up to error $\epsilon$.
This gives an $\epsilon/\Delta t = \rho\epsilon$ approximation of $\lambda_j$.
If we choose $\epsilon = \epsilon'/\rho$, then we obtain an $\epsilon'$-approximation of $\lambda_j$. The parameters $k, m$, and complexity should be changed accordingly.

\begin{breakablealgorithm}
\label{alg}
\caption{Quantum algorithm for computing the real eigenvalues}
\begin{algorithmic}[1]
\REQUIRE
(1). An $n\times n$ diagonalizable matrix $M$ that only with real eigenvalues.
Suppose the eigenvalues are $\{\lambda_1,\ldots,\lambda_n\}$ and the unit 
eigenvectors are $\{|E_1\rangle,\ldots,|E_n\rangle\}$.
\\
(2). A upper bound $\rho\geq 1$ of the eigenvalues. \\
(3). Quantum access to copies of the state $|\phi\rangle$, which formally equals $\sum_{j=1}^n \beta_j |E_j\rangle$.
\\
(4). The precision $\epsilon\in(0,1)$, $\Delta t=1/2\rho$, $m=\lceil \rho/\epsilon \rceil, k= \lceil\log(\rho/\epsilon) \rceil$.
\ENSURE The quantum state
\begin{equation} 
\sum_{j=1}^n \beta_j |\tilde{\lambda}_j\rangle |E_j\rangle 
\label{final state}
\end{equation} 
up to a normaliation, where $|\tilde{\lambda}_j- \lambda_j|\leq \epsilon$ for all $j$.
\STATE Construct the matrix $C_{m,k}$ based on equation (\ref{coeff matrix}).
\STATE Use quantum linear algebraic technique to create the state $C_{m,k}^{-1}|0..0\rangle|\phi\rangle$.
\STATE Apply quantum Fourier inverse transform to the first register of $C_{m,k}^{-1}|0..0\rangle|\phi\rangle$.
\STATE Perform measurements on the last register, return the post-selected state if the output is $\ket{0}$.
\end{algorithmic}
\end{breakablealgorithm}

\begin{thm}\label{thm2}
Let $M$ be an $n\times n$ diagonalizable matrix which only has real eigenvalues $\{\lambda_1,\ldots,$ $\lambda_n\}$. Assume that the corresponding unit eigenvectors are $\{|E_1\rangle,\ldots,|E_n\rangle\}$. Let $\rho\geq 1$ be a upper bound of the eigenvalues.
Suppose that $C_{m,k}^{-1}$ can be implemented in a quantum computer in time $T(\kappa(C_{m,k}),\epsilon,m(k+1)n)$ to precision $\epsilon$. 
Given access to copies of the state $|\phi\rangle = \sum_{j=1}^n \beta_j |E_j\rangle$, then with probability close to 1, Algorithm \ref{alg}
returns
\be \label{thm2:eq}
\frac{1}{Z}
\sum_{j=1}^n \beta_j |\tilde{\lambda}_j\rangle |E_j\rangle
\ee
in time
\be \label{thm2:complexity}
O\left(T_0\times  T\left(
\frac{\rho \kappa(E)\log(\rho /\epsilon)}{\epsilon},
\frac{\epsilon}{\rho},
\frac{n\rho\log(\rho /\epsilon)}{\epsilon}\right)
\times \frac{\rho\kappa(E)\sqrt{\log(\rho /\epsilon)}}{\epsilon}  \right),
\ee
where $Z$ is the normalization factor, $T_0$ is the complexity to prepare the initial state,
$\kappa(E)$ is the condition number of the matrix generated by 
the eigenvectors, and $|\lambda_j-\tilde{\lambda}_j|\leq  \epsilon$ for all $j$.
\end{thm}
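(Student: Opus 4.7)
The plan is to lift Proposition \ref{main prop} from a single eigenvector input to the superposition $|\phi\rangle = \sum_j \beta_j |E_j\rangle$ by exploiting linearity of both the ODE (\ref{eq1:new}) and the discretised linear system (\ref{linear system}). First I would observe that with $\x(0) = |\phi\rangle$ the exact solution is $\x(t) = \sum_j \beta_j e^{2\pi i\lambda_j t}|E_j\rangle$, and that since $C_{m,k}(2\pi iM\Delta t)$ acts as a polynomial in $M$, the solution of $C_{m,k}(2\pi iM\Delta t)\tilde\x = |0..0\rangle|\phi\rangle$ is the $\beta_j$-weighted sum of the single-eigenvector solutions characterised in (\ref{verify1})-(\ref{verify3}). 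Consequently each block in the decomposition (\ref{eq:approsolu}) takes the form
\[
|\x_{p,q}\rangle \;=\; \sum_j \beta_j \,\frac{(2\pi i\lambda_j\Delta t)^q}{q!}\, [T_k(2\pi i\lambda_j\Delta t)]^p \,|E_j\rangle .
\]

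Next I would run Proposition \ref{prop:Solution error} componentwise. With $m = \lceil\rho/\epsilon\rceil$ and $k=\lceil\log(\rho/\epsilon)\rceil$ the bound $(1+e/(k+1)!)^m - 1 = O(\epsilon/\rho)$ is satisfied, so that the $q=0$ slice of the renormalised solution is within $\ell_2$-distance $O(\epsilon/\rho)$ of the history state $(m+1)^{-1/2}\sum_l |l\rangle|\x(t_l)\rangle$. Applying the quantum Fourier inverse transform to $|l\rangle$ and re-running the QPE identity (\ref{qft on solution}) componentwise on each eigenblock then produces, up to a normalisation $Z$, the state $\sum_j \beta_j |\tilde\lambda_j\rangle|E_j\rangle$ with $|\tilde\lambda_j\Delta t - \lambda_j\Delta t|\le \epsilon/\rho$; since $\Delta t = 1/(2\rho)$, this gives $|\tilde\lambda_j - \lambda_j|\le \epsilon$. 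The rescaling $\epsilon\mapsto\epsilon/\rho$ relative to Proposition \ref{main prop} is precisely what fixes the parameters in Algorithm \ref{alg}.

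For the complexity I would bundle the three arguments of $T$. Theorem 5 of \cite{berry2017quantum} gives $\kappa(C_{m,k}) = O(\kappa(E)mk) = O(\rho\kappa(E)\log(\rho/\epsilon)/\epsilon)$, the dimension is $m(k+1)n = O(n\rho\log(\rho/\epsilon)/\epsilon)$, and the linear solver must return the state to precision $O(\epsilon/\rho)$ so that the downstream QFT error budget is respected; these exactly match the arguments appearing inside $T(\cdot,\cdot,\cdot)$ in (\ref{thm2:complexity}). The remaining factor $\rho\kappa(E)\sqrt{\log(\rho/\epsilon)}/\epsilon$ arises from amplitude amplification on the $q=0$ history part of $\tilde\x$: Lemma \ref{lemma2} supplies $\Omega(1)$ amplitude per single-eigenvector contribution, but expressing $|\phi\rangle$ in the non-orthogonal eigenbasis via $E$ costs a factor $\kappa(E)$ in norm, and the history register spreads weight over $m+1$ time steps, so the success probability before amplification is $\Omega\!\bigl(1/(\kappa(E)^2 mk)\bigr)$, whose inverse square root is precisely the stated overhead.

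The hard part will be the last step: cleanly tracking how the non-orthogonality of $\{|E_j\rangle\}$ propagates from the linear-system residual into both the QFT approximation error and the post-selection amplitude. In Proposition \ref{main prop} the norm identity is clean because $|E_j\rangle$ is a single unit vector; for a generic $|\phi\rangle$ the natural analogue of Lemma \ref{lemma2} picks up cross terms $\beta_i\overline{\beta_j}\langle E_i|E_j\rangle$ that do not cancel, and a clean bound requires writing $|\phi\rangle$ through the eigenvector matrix $E$ and using $\|E^{-1}\|\le\kappa(E)/\|E\|$. Getting exactly one factor of $\kappa(E)$ into the solver precision and one into the amplitude-amplification overhead, rather than $\kappa(E)^2$ in either slot, is the accounting task that determines whether the final bound (\ref{thm2:complexity}) holds as stated.
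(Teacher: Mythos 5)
Your high-level plan is the same as the paper's: expand $C_{m,k}^{-1}|0..0\rangle|\phi\rangle$ by linearity across the eigenblocks, apply Proposition~\ref{prop:Solution error} and Lemma~\ref{lemma2} componentwise, multiply the first register by $(k+1)^{-1}$ (implicitly, you call it the ``$q=0$ slice''), apply the inverse QFT, and amplify. That skeleton is correct. But the accounting in your final two paragraphs has a concrete slip that would change the exponent of $m$ in the bound, and a misplaced worry about where $\kappa(E)$ shows up.

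First, the post-selection probability. You write that the success probability before amplification is $\Omega\bigl(1/(\kappa(E)^2 m k)\bigr)$ and that its inverse square root $\kappa(E)\sqrt{mk}$ ``is precisely the stated overhead''. It isn't: the stated overhead $\rho\kappa(E)\sqrt{\log(\rho/\epsilon)}/\epsilon$ equals $\kappa(E)\,m\,\sqrt{k}$, which is larger than $\kappa(E)\sqrt{mk}$ by a factor of $\sqrt{m}=\sqrt{\rho/\epsilon}$. The correct statement is that the \emph{amplitude} of the $|0\rangle$-flagged part is $\Omega\bigl(1/(\kappa(E)\sqrt{k}\,m)\bigr)$, hence the probability is $\Omega\bigl(1/(\kappa(E)^2 k m^2)\bigr)$, and the amplitude-amplification cost is $O(\kappa(E)\sqrt{k}\,m)$, which does match (\ref{thm2:complexity}). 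The paper gets this bound directly and without cross-term headaches: by \cite[Lemma~3]{berry2017quantum}, $\|C_{m,k}^{-1}\|\le 3\kappa(E)\sqrt{k}\,m$, so the normalization constant of $C_{m,k}^{-1}|0..0\rangle|\phi\rangle$ is at most $3\kappa(E)\sqrt{k}\,m$; and the unnormalized norm of the $q=0$ slice is at least $1$ because its $p=0$ term is literally $|\phi\rangle$, a unit vector. You don't need to rewrite $|\phi\rangle$ through $E$ or chase $\beta_i\overline{\beta_j}\langle E_i|E_j\rangle$ cross terms, so the ``hard part'' you flag largely evaporates.

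Second, your closing paragraph frets about putting ``one factor of $\kappa(E)$ into the solver precision and one into the amplitude-amplification overhead''. That is not where $\kappa(E)$ lives: the precision argument of $T$ is $\epsilon/\rho$ and carries no $\kappa(E)$ at all. The one factor of $\kappa(E)$ in the first argument of $T$ comes from $\kappa(C_{m,k}) = O(\kappa(E)km)$ (Theorem~5 of \cite{berry2017quantum}), and the other factor comes from the $O(\kappa(E)\sqrt{k}\,m)$ amplitude-amplification repetitions. Once you place them there rather than in the precision, and fix the $m$ versus $m^2$ slip above, your argument matches the paper's.
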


\begin{proof}
By equation (\ref{eq:approsolu}), up to a normalization
\beas
C_{m,k}^{-1}|0..0\rangle |\phi\rangle &=&
\sum_{j=1}^n \beta_j C_{m,k}^{-1}|0..0\rangle |E_j\rangle \\
&=& \sum_{j=1}^n \beta_j \left(\sum_{p=0}^{m}  |p(k+1)\rangle |\x_{p,0}\rangle + \sum_{p=0}^{m-1} \sum_{q=1}^k |p(k+1)+q\rangle |\x_{p,q}\rangle \right).
\eeas
We will consider the normalization constant later.
Since $k+1$ is invertible modulo $m(k+1)+1$, we can find $(k+1)^{-1}$ and multiple it on the first register. As a result, we obtain
\be \label{thm2-proof:eq1}
\sum_{j=1}^n \beta_j \left(\sum_{p=0}^{m}  |p\rangle |\x_{p,0}\rangle |0\rangle + |0\rangle^\bot \right).
\ee
In the above state, we added a new ancilla qubit $|0\rangle$ to separate the two summation terms. This is feasible as the base states in the first register are orthogonal to each other.

By equation (\ref{eq solu}), Proposition \ref{prop:Solution error} and Lemma \ref{lemma2}, the state (\ref{thm2-proof:eq1}) is $\epsilon$-close to the state
\be \label{thm2-proof:eq2}
\sum_{j=1}^n\sum_{p=0}^{m} \beta_j e^{2\pi i \lambda_j p \Delta t }|p\rangle |E_j\rangle |0\rangle + |0\rangle^\bot
.
\ee
If we apply quantum Fourier inverse transform to $|p\rangle$,
then we obtain
\be \label{thm2-proof:eq3}
\frac{1}{\sqrt{m+1}}\sum_{j=1}^n\sum_{q=0}^{m}\sum_{p=0}^{m} \beta_j e^{2\pi i  p (\lambda_j\Delta t-\frac{q}{m+1}) }|q\rangle |E_j\rangle |0\rangle + |0\rangle^\bot
= \sqrt{m+1} \sum_{j=1}^n \beta_j| \tilde{\lambda}_j \Delta t \rangle |E_j\rangle |0\rangle + |0\rangle^\bot.
\ee
The above equality is caused by a similar reason to the QPE, see (\ref{eq-qpe}) in Appendix \ref{Quantum phase estimation}.

Note that the state (\ref{thm2-proof:eq2}) is not normalized.
By \cite[Lemma 3]{berry2017quantum}, $\|C^{-1}_{m,k}\| \leq 3\kappa(E)\sqrt{k}m$, As a result, $\|C^{-1}_{m,k}|0..0\rangle |\phi\rangle\| \leq \|C^{-1}_{m,k}\| \leq 3\kappa(E)\sqrt{k}m$. 
So the normalization constant is smaller than $3\kappa(E)\sqrt{k}m$.
Thus the amplitude of $|0\rangle$ in the state (\ref{thm2-proof:eq2}), which is also the amplitude of $|0\rangle$ in the state (\ref{thm2-proof:eq3}), is at least
\[
\frac{1}{3\kappa(E)\sqrt{k}m}
\sqrt{\sum_{p=0}^m \left\|\sum_{j=1}^n \beta_j e^{2\pi i \lambda_j p \Delta t } |E_j\rangle \right\|_2^2}.
\]
Note that if $p=0$, then $\|\sum_{j=1}^n \beta_j |E_j\rangle \|_2^2=\||\phi\rangle\|_2^2=1$.
Therefore, the amplitude of $|0\rangle$ of the state (\ref{thm2-proof:eq3}) is larger than $1/3\kappa(E)\sqrt{k}m$. By amplitude amplification and the choices of the parameters in the algorithm, the complexity to obtain the state (\ref{thm2:eq}) is
(\ref{thm2:complexity}) as claimed.
\end{proof}

\begin{rem}{\rm
There are several methods to find the upper bound of the eigenvalues. A simple one
is based on the Gershgorin circle theorem \cite{van1983matrix}, which states that every eigenvalue of 
$M$ lies within at least one of the Gershgorin discs $\{z\in \mathbb{C}: |z-m_{ii}|
\leq \sum_{j\neq i} |m_{ij}|\}$, where $i=1,\ldots,n$. As a result, $\|M\|_1,\|M\|_{\infty}$ are upper bounds of the
eigenvalues. The spectral norm $\|M\|$ also provides 
a upper bound of the eigenvalues. 
Since we can estimate the singular values efficiently in a quantum computer,
we can estimate the spectral norm  efficiently as well. For instance, see \cite{kerenidis2017quantum}.
}\end{rem}

\begin{rem}{\rm
For eigenvalue problems of diagonalizable matrices, Bauer and Fike in \cite{bauer1960norms}
shown that the condition number $\kappa(E)$
describes the stability and conditioning of calculating the eigenvalues.
If $\kappa(E)$ is large, then small permutations on the matrix $M$ will give rise to large permutations on the eigenvalues. This makes the calculation
of the eigenvalues inaccurate.
Thus the complexity of Algorithm
\ref{alg} depends on $\kappa(E)$ seems realistic. The dependence on $\kappa(E)$ might be
improved by using the variable time amplitude amplification technique \cite{ambainis:hal-00678197}.
}\end{rem}

\subsection{Sparse matrices}

In the following, we consider sparse matrices in which the complexity of inverting $C_{m,k}$ can be determined explicitly.
The following lemma about the sparsity of $C_{m,k}$ is easy to prove.

\begin{lem}
\label{sparsity}
If $M$ is $s$-sparse, then the sparsity of $C_{m,k}$ is $\Theta(s+k)$.
\end{lem}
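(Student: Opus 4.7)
The plan is a direct per-row (and per-column) count of the nonzeros in $C_{m,k}$, using the three explicit summation terms in the definition (\ref{coeff matrix}) and classifying each block row by which of those terms contributes to it.

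First I would observe how each of the three summations deposits entries. The first summation places a single identity block on every block-diagonal position $p = 0,1,\ldots,d$, contributing exactly one nonzero per actual row. The second summation adds, only to block rows of the form $p(k+1)+q$ with $q \in \{1,\ldots,k\}$, a single scaled block $-2\pi iM\Delta t/q$ in the preceding block column; since $M$ is $s$-sparse, this contributes at most $s$ nonzeros per such row. The third summation adds, only to rows of the form $(p+1)(k+1)$ with $p\in\{0,\ldots,m-1\}$, a sequence of $k+1$ identity blocks $-I$ in columns $p(k+1), p(k+1)+1, \ldots, p(k+1)+k$; these are disjoint from the diagonal column $(p+1)(k+1)$ of that row, and each contributes one nonzero per row.

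Combining the contributions by row type then gives the bound. A ``subdiagonal'' row $p(k+1)+q$ with $1 \leq q \leq k$ has at most $s+1$ nonzeros (diagonal $I$ plus an $M$-block), a ``summation'' row $(p+1)(k+1)$ has exactly $k+2$ nonzeros (diagonal $I$ plus $k+1$ identity blocks), and the remaining case, row $0$, has $1$ nonzero. The maximum is $\max\{s+1,\,k+2\} = \Theta(s+k)$. A parallel column count yields at most $s+2$ nonzeros per column (the diagonal $I$, at most one $M$-block coming from the immediately next row, and at most one $-I$ block contributed by the unique summation row that references it), which is again $O(s+k)$. The matching lower bound $\Omega(s+k)$ is witnessed directly by a summation row with $k+2$ nonzeros together with a subdiagonal row sitting above a maximally sparse row of $M$ with $s+1$ nonzeros. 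There is no serious obstacle here; the only point that needs care is confirming that the $-I$ blocks in a summation row really occupy columns disjoint from the diagonal column of that row, which is immediate from the index ranges $p(k+1),\ldots,p(k+1)+k$ versus $(p+1)(k+1)$.
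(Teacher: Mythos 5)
Your proof is correct and follows essentially the same route as the paper: a direct count of nonzeros per row (and per column) of $C_{m,k}$ from the three summation terms in equation (\ref{coeff matrix}). The only difference is cosmetic — you separate the row types and note that the $M$-block and the $-I$ blocks never hit the same row, giving $\max\{s+1,k+2\}$, whereas the paper simply bounds every row by $1+s+(k+1)$; both yield $\Theta(s+k)$.
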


\begin{proof}
By equation (\ref{coeff matrix}), for any $p\in\{0,\ldots,m\}$, $q\in\{0,\ldots,k\}$ and $r\in\{1,\ldots,n\}$
\beas
\langle p(k+1)+q|\otimes \langle r| C_{m,k}
&=& \langle p(k+1)+q|\otimes \langle r| - \langle p(k+1)+q-1|\otimes \langle r|\frac{2\pi i M \Delta t}{q} \\
&& -\, \delta_q^p \sum_{q'=0}^k \langle p(k+1)+q'|\otimes \langle r|.
\eeas
The nonzero element of this row vector 
is bounded by $1+s+k+1=\Theta(s+k)$.
Similar analysis also holds for the columns of $C_{m,k}$.
\end{proof}

By the quantum linear solver \cite[Theorem 5]{childs2017quantum}, if $M$ is $s$ sparse, then combining Lemma \ref{sparsity},
\beas  
& & T(\kappa(C_{m,k}),\epsilon,m(k+1)n) \\
&=& O((s+\log(1/\epsilon))\kappa(C_{m,k})  (\log n\epsilon^{-1}\log(1/\epsilon)) {\rm poly}{\log ((s+\log(1/\epsilon))\kappa(C_{m,k})/\epsilon})) \\
&=& \widetilde{O}(s\kappa(C_{m,k})).
\eeas
By \cite[Theorem 5]{berry2017quantum},
$\kappa(C_{m,k})=O(\kappa(E)km)=O(\kappa(E)\epsilon^{-1}(\log 1/\epsilon))$,
Thus, equation (\ref{thm2:complexity}) can be simplified 
into
$
\widetilde{O}(s\rho^2\kappa(E)^2/\epsilon^2).
$

The quantum linear solver \cite{childs2017quantum}
depends on an oracle $\mathcal{O}_C$ to query $C_{m,k}$.
Assume that we have the oracle  $\mathcal{O}_M$ to query $M$.
It is defined as
\be \label{eq:oracle}
\mathcal{O}_M|i,j,z\rangle = 
|i,j,z \oplus m_{ij}\rangle, \quad
\mathcal{O}_M|i,l\rangle = 
|i,\nu(i,l)\rangle,
\ee
where $\nu(i,l)$ is the index of
the $l$-th nonzero element in the
$i$-th row/column.
For any $0\leq p_1,p_2\leq m, 0\leq q_1,q_2\leq k$ and $1\leq r_1,r_2\leq n$, it is easy to check that
\bea
&& \langle p_1(k+1)+q_1,r_1|C_{m,k}|
p_2(k+1)+q_2,r_2 \rangle \\
&=& \delta^{r_1}_{r_2} \delta^{p_1}_{p_2} \delta^{q_1}_{q_2} 
-[p_2\leq m-1][q_2\leq k-1] \delta^{p_1}_{p_2} \delta^{q_1}_{q_2+1} 
\langle r_1 |\frac{2\pi i M\Delta t}{q_2+1}|r_2\rangle \label{fufu1} \\
&& -\, [p_2\leq m-1]\delta^{r_1}_{r_2} \delta^{p_1}_{p_2+1} \delta^{q_1}_0,
 \label{fufu2}
\eea
where the notation $[a\leq b]$ means that it equals 1 if $a\leq b$ and 0 otherwise. 
The three terms in equations (\ref{fufu1}), (\ref{fufu2}) cannot coexist, so it is
easy to build the oracle $\mathcal{O}_C$
to query $C_{m,k}$ when we have
$\mathcal{O}_M$.
The cost to build this oracle $\mathcal{O}_C$
is the same as that to build $\mathcal{O}_M$.

% \[\ba{lll} \vspace{.2cm}
% &|p_1(k+1)+q_1,r_1\rangle
% |p_1(k+1)+q_1,r_1\rangle
% |z\rangle \mapsto
% |p_1(k+1)+q_1,r_1\rangle
% |p_1(k+1)+q_1,r_1\rangle
% |z+1\rangle \\ \vspace{.2cm}
% &|p_1(k+1)+q_2+1,r_1\rangle
% |p_1(k+1)+q_2,r_2\rangle
% |z\rangle \\ \vspace{.2cm}
% & \ds \mapsto
% |p_1(k+1)+q_2+1,r_1\rangle
% |p_1(k+1)+q_2,r_2\rangle
% |z-[p_2\leq m-1][q_2\leq k-1] \frac{2\pi i m_{r_1r_2}\Delta t}{q_2+1}\rangle \\ \vspace{.2cm}
% &|p_1(k+1),r_1\rangle
% |(p_2+1)(k+1)+q_2,r_1\rangle
% |z\rangle \mapsto
% |p_1(k+1),r_1\rangle
% |(p_2+1)(k+1)+q_2,r_1\rangle
% |z-1\rangle \\ \vspace{.2cm}
% \ea\]

\begin{thm}
\label{cor of thm2}
With the same assumptions and notation as Theorem \ref{thm2}. Suppose there is an oracle defined by (\ref{eq:oracle}) to query $M$.
If $M$ is $s$ sparse, then
Algorithm \ref{alg} returns the state (\ref{thm2:eq}) in time 
$
\widetilde{O}(s\rho^2\kappa(E)^2/\epsilon^2).
$
\end{thm}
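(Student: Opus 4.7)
The plan is to specialize the general complexity bound of Theorem~\ref{thm2} to the sparse setting by plugging in the quantum linear solver of Childs--Kothari--Somma and the sparsity/condition-number bounds already available in the excerpt. Essentially all of the hard work has been done; the proof is a bookkeeping of parameters.

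First, I would observe that Algorithm~\ref{alg} can be run verbatim once we have a sparse-access oracle $\mathcal{O}_C$ for the coefficient matrix $C_{m,k}(2\pi i M \Delta t)$. The excerpt already explains (via equations~(\ref{fufu1})--(\ref{fufu2})) that each row/column of $C_{m,k}$ decomposes into three mutually exclusive contributions, each directly computable from a single query to $\mathcal{O}_M$. Hence one query to $\mathcal{O}_C$ costs $O(1)$ queries to $\mathcal{O}_M$, and Lemma~\ref{sparsity} gives sparsity $\Theta(s+k)$. With the choice $k=\lceil\log(\rho/\epsilon)\rceil$ from Algorithm~\ref{alg}, this is $\widetilde{O}(s)$.

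Next I would quote the sparse linear-system solver of \cite[Theorem 5]{childs2017quantum}, which solves an $N$-dimensional, $\sigma$-sparse, $\kappa$-conditioned system to precision $\delta$ in time $\widetilde{O}(\sigma\kappa)$, with only polylogarithmic dependence on $N,\kappa,1/\delta$. Applied to $C_{m,k}$ with the bound $\kappa(C_{m,k})=O(\kappa(E)km)$ from \cite[Theorem 5]{berry2017quantum}, and using $m=\lceil\rho/\epsilon\rceil$, $k=\lceil\log(\rho/\epsilon)\rceil$, this yields
\[
T\!\left(\frac{\rho\kappa(E)\log(\rho/\epsilon)}{\epsilon},\frac{\epsilon}{\rho},\frac{n\rho\log(\rho/\epsilon)}{\epsilon}\right)=\widetilde{O}\!\left(\frac{s\,\rho\,\kappa(E)}{\epsilon}\right).
\]

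Finally I would substitute this into the complexity expression~(\ref{thm2:complexity}) from Theorem~\ref{thm2}, which multiplies $T(\cdot,\cdot,\cdot)$ by the amplitude-amplification factor $\rho\kappa(E)\sqrt{\log(\rho/\epsilon)}/\epsilon$. Up to polylogarithmic factors absorbed into $\widetilde{O}(\cdot)$, the product is $\widetilde{O}(s\rho^{2}\kappa(E)^{2}/\epsilon^{2})$, as claimed.

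There is no real obstacle here; the one thing to be careful about is the rescaling performed between Proposition~\ref{main prop} and Theorem~\ref{thm2} (replacing $\epsilon$ by $\epsilon/\rho$ so that $\lambda_j$, rather than $\lambda_j\Delta t$, is approximated to $\epsilon$), which is what converts the natural $\kappa(E)/\epsilon$ from the non-sparse analysis into the final $\rho^{2}\kappa(E)^{2}/\epsilon^{2}$. I would emphasize that the extra factors of $\rho$ come from this rescaling and from the fact that both $\kappa(C_{m,k})$ and the amplification overhead scale linearly in $m=\Theta(\rho/\epsilon)$.
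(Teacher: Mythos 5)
Your proposal is correct and follows essentially the same route as the paper: establish the sparsity $\Theta(s+k)$ of $C_{m,k}$ (Lemma \ref{sparsity}) and the constructibility of $\mathcal{O}_C$ from $\mathcal{O}_M$ via (\ref{fufu1})--(\ref{fufu2}), invoke \cite[Theorem 5]{childs2017quantum} with the bound $\kappa(C_{m,k})=O(\kappa(E)km)$ from \cite[Theorem 5]{berry2017quantum}, and substitute into (\ref{thm2:complexity}). Your explicit tracking of the $\epsilon\mapsto\epsilon/\rho$ rescaling is if anything slightly cleaner than the paper's own bookkeeping, but the argument is the same.
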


To obtain all the eigenvalues with classical output, we can perform measurements on the state (\ref{thm2:eq}).
The following lemma shows how many measurements we should perform.

\begin{lem}
\label{distribution}
Let $\mathcal{P}=\{p_1,\ldots,p_n\}$ be a  probability distribution.
Set $p_{\max}=\max\{p_1,\ldots,p_n\}$.
To see every events under the distribution $\mathcal{P}$
it suffices to  make
$O(p_{\max}^{-1} \log(n/\delta))$
measurements. The success probability is
at least $1-\delta$.
\end{lem}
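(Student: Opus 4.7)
My plan is to treat this as a coupon-collector-style problem and handle it by a direct union bound. Let $N$ denote the number of independent samples and, for each outcome $i \in \{1,\ldots,n\}$, let $A_i$ be the bad event that outcome $i$ is never observed among those $N$ samples. The goal is to choose $N$ so that $\Pr\bigl[\bigcup_i A_i\bigr] \le \delta$, which is precisely the statement that every outcome is seen with probability at least $1-\delta$.

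First I would bound each $\Pr[A_i]$ using independence of the samples: since outcome $i$ fails to occur in a single draw with probability $1-p_i$, we have $\Pr[A_i] = (1-p_i)^N \le e^{-N p_i}$ by the elementary inequality $1-x \le e^{-x}$. A union bound then yields
\[
\Pr\bigl[\textstyle\bigcup_{i=1}^n A_i\bigr] \;\le\; \sum_{i=1}^n e^{-N p_i}.
\]
In the regime relevant to the present application, namely Algorithm~\ref{alg} with roughly balanced input amplitudes so that every $p_i$ is within a constant factor of $p_{\max}$, this sum is at most $n\,e^{-N p_{\max}}$. Setting $n\,e^{-N p_{\max}} \le \delta$ and solving gives $N = O\!\bigl(p_{\max}^{-1}\log(n/\delta)\bigr)$, which is the claimed bound.

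There is essentially no technical obstacle here: both the per-event estimate $(1-p_i)^N \le e^{-N p_i}$ and the union bound over $n$ outcomes are standard and tight up to constants, and the $\log(n/\delta)$ factor arises simply from balancing the $n$-term union bound against the target failure probability $\delta$. The only point worth flagging in the write-up is the implicit assumption that the distribution is not too far from uniform, so that $p_{\max}$ may be substituted into the exponent in place of each individual $p_i$; in the fully general case the same argument delivers the (possibly weaker) bound $O\!\bigl(p_{\min}^{-1}\log(n/\delta)\bigr)$.
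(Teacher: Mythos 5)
Your union bound is essentially the same calculation that the paper's appendix performs in its "general case": there too the failure probability is bounded by $\sum_{j}(1-p_j)^m$ and then exponentiated (the uniform case is first handled by a more roundabout multinomial-counting argument, which your direct union bound replaces more cleanly). The substantive issue is the caveat you flag at the end, and it deserves to be promoted from a side remark to the main point: the lemma as stated, with $p_{\max}$, is false in general, so no proof can deliver it without an extra hypothesis. Take $n=2$, $p_1=1-\epsilon$, $p_2=\epsilon$: then $p_{\max}^{-1}\log(n/\delta)=O(\log(2/\delta))$, yet after that many samples the rare outcome is missed with probability $(1-\epsilon)^{O(\log(2/\delta))}\to 1$ as $\epsilon\to 0$. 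The correct general bound is the one you state with $p_{\min}$, namely $O(p_{\min}^{-1}\log(n/\delta))$. The paper's own proof slips at exactly this point: it bounds $\sum_j (1-p_j)^m \le n(1-p_{\max})^m$, but since $1-p_j \ge 1-p_{\max}$ that inequality runs in the wrong direction; it is only valid with $p_{\min}$ in place of $p_{\max}$ (equivalently, in the uniform case, where the two coincide). So your write-up, as it stands, proves a corrected statement rather than the lemma as literally written, and your \emph{roughly balanced} assumption is an additional hypothesis not present in the lemma --- the honest fix is either to restate the lemma with $p_{\min}$ or to make the near-uniformity assumption explicit. Note also that the downstream application in Corollary \ref{cor2 of thm2} only guarantees $p_{\max}\ge 1/d$, not a lower bound on $p_{\min}$, so the claimed $O(n\log n)$ measurement count inherits the same problem unless the coefficients $|\gamma_j|$ are assumed comparable.
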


We defer the proof of this lemma to Appendix \ref{uniform-distribution}. 
To apply Lemma \ref{distribution}, we need to analyze the probability to obtain each eigenvalue.
Assume that $M$ has $d$ distinct eigenvalues $\lambda_1,\ldots,\lambda_d$,
we can rewrite the initial state as $\sum_{j=1}^d \gamma_j |V_j\rangle$, where $|V_j\rangle$ is a normalized vector generated by the eigenvectors corresponding to $\lambda_j$. Then the right hand side of equation (\ref{thm2:eq}) can be written as  $\frac{1}{Z}\sum_{j=1}^d \gamma_j |\tilde{\lambda}_j\rangle|V_j\rangle$. The probability to obtain $\tilde{\lambda}_j$ equals $p_j = {|\gamma_j|^2}/{Z^2}$. Note that 
$Z^2 = \sum_{j=1}^d |\gamma_j|^2$, so $p_{\max} = \max_j {|\gamma_j|^2}/{Z^2} \geq 1/d$. Thus, it suffices  to make $O(d\log d)=O(n\log n)$ measurements by Lemma \ref{distribution}. 
The sparse access oracle of $M$ can be built in time $O(sn)$. 
Thus we obtain the following result.

\begin{cor}
\label{cor2 of thm2}
With the same assumptions and notation as Theorem \ref{cor of thm2},
then there is a quantum algorithm that returns all the eigenvalues of $M$ up to precision $\epsilon$ in time
$
\widetilde{O}(sn\rho^2\kappa(E)^2/\epsilon^2).
$
\end{cor}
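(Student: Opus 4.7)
The plan is to apply Algorithm \ref{alg} repeatedly to produce independent copies of the output state in (\ref{thm2:eq}) and then measure the eigenvalue register of each copy to sample the distinct eigenvalues of $M$. First I would fix a convenient initial state $|\phi\rangle$ such that its decomposition $\sum_{j=1}^n \beta_j |E_j\rangle$ has a non-vanishing projection onto every one of the $d$ distinct eigenspaces; for instance $|\phi\rangle = \frac{1}{\sqrt{n}}\sum_i |i\rangle$ or a Haar-random state suffices generically, after which we can rewrite the input as $\sum_{j=1}^d \gamma_j |V_j\rangle$ in the notation of the paragraph preceding the corollary. Note that $|\phi\rangle$ can be prepared in time $O(\log n)$, so it does not affect the dominant cost.

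Next I would invoke Theorem \ref{cor of thm2} to produce one copy of the output state $\frac{1}{Z}\sum_{j=1}^d \gamma_j |\tilde{\lambda}_j\rangle |V_j\rangle$ in time $\widetilde{O}(s\rho^2\kappa(E)^2/\epsilon^2)$. Building the sparse access oracle $\mathcal{O}_M$ (equation (\ref{eq:oracle})) from the explicit description of $M$ takes $O(sn)$ time, which is a one-time preprocessing cost and is absorbed in the final estimate. Measuring the first register of one copy then returns $\tilde{\lambda}_j$ with probability $p_j = |\gamma_j|^2/Z^2$, and since $Z^2 = \sum_{j=1}^d |\gamma_j|^2$ we have $\sum_j p_j = 1$, so the largest of these probabilities satisfies $p_{\max} \geq 1/d \geq 1/n$.

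Finally I would apply Lemma \ref{distribution} with, say, $\delta$ a fixed constant: this guarantees that $O(p_{\max}^{-1}\log(n/\delta)) = O(n\log n)$ independent measurements are enough to observe every distinct eigenvalue with high probability. Since each measurement consumes one fresh copy of the output state, the total cost is the per-copy cost of Theorem \ref{cor of thm2} times the number of copies, namely $\widetilde{O}(s\rho^2\kappa(E)^2/\epsilon^2)\cdot O(n\log n) = \widetilde{O}(sn\rho^2\kappa(E)^2/\epsilon^2)$, which matches the claimed bound. Collecting the distinct values from the $O(n\log n)$ classical samples then yields all the eigenvalues of $M$ to precision $\epsilon$.

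The only step that is not entirely automatic is the choice of $|\phi\rangle$: we need to know, or to argue generically, that every distinct eigenspace receives non-negligible amplitude, since otherwise some eigenvalue would never be observed. This is the main (minor) obstacle and is handled by the genericity of the initial state; the rest of the argument is bookkeeping on top of Theorem \ref{cor of thm2} and Lemma \ref{distribution}.
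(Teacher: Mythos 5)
Your proposal is correct and follows essentially the same route as the paper: decompose the output state over the $d$ distinct eigenvalues as $\frac{1}{Z}\sum_{j=1}^d \gamma_j |\tilde{\lambda}_j\rangle|V_j\rangle$, observe $p_{\max}\geq 1/d$, invoke Lemma \ref{distribution} to get $O(n\log n)$ measurements, and multiply by the per-copy cost of Theorem \ref{cor of thm2} (plus the $O(sn)$ oracle construction). Your explicit remark that the initial state must have nonzero overlap with every eigenspace is a reasonable clarification of a point the paper leaves implicit, but it does not change the argument.
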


\subsection{What if the matrices have complex eigenvalues}

Algorithm \ref{alg} also works for diagonalizable matrices that only have
purely imaginary eigenvalues. It suffices to
change $2\pi iM$ into $2\pi M$ in the ODE (\ref{eq1:new}). 
In the following, we consider the problem that what would
happen if we apply Algorithm \ref{alg} directly to any diagonalizable matrix. 
We will show that Algorithm \ref{alg} has many difficulties in estimating the complex eigenvalues, even the real parts.

To do the analysis, we just focus on the special case that the initial state is  $|E_j\rangle$.
Denote the corresponding eigenvalue 
as $\lambda_{j0}+i\lambda_{j1}$, where $\lambda_{j0},\lambda_{j1}\in \mathbb{R}$.
Then similar to the analysis of obtaining (\ref{solution}), we 
will obtain an approximation of
\begin{equation} \label{solution-complex}
|\x\rangle = \frac{1}{Z} \sum_{l=0}^m |l\rangle |\x(t_l)\rangle
= \frac{1}{Z} \sum_{l=0}^m e^{2\pi i \lambda_j l \Delta t} |l\rangle |E_j\rangle,
\end{equation}
where $Z$ is the normalization constant. Since $\lambda_j\in \mathbb{C}$, usually $Z\neq \sqrt{m+1}$. If we set 
$\lambda_{j1} \Delta t = -b$ for simplicity,
then
\[
Z^2 = \sum_{l=0}^m e^{4\pi b l}
=\frac{1-e^{4\pi b (m+1)}}{1-e^{4\pi b}}.
\]

To make sure the quantum differential equation solver works,
one assumption made in \cite{berry2017quantum} is that the real parts of the eigenvalues are non-positive. 
This assumption relates to the stability of the differential equation.
Here we should make
the  same assumption, that is $\lambda_{j1}>0$ as we use $2\pi i M$ in the differential equation (\ref{eq1:new}). Thus $b<0$.
When concerning about 
computing eigenvalues, this assumption is easy to be satisfied.
We just need to consider $M-i \rho I$. Also notice that $\lambda_{j1} \Delta t = -b$, so $-1\leq b <0$.

If we apply quantum Fourier inverse transform to the first register of $|\x\rangle$, then
we obtain
\be\label{state:complex-eigenvalues}
\frac{1}{Z\sqrt{m+1}} \sum_{k=0}^{m} \sum_{l=0}^{m}
e^{2\pi i l(\lambda_{j0} \Delta t -\frac{k}{m+1})} 
e^{2\pi lb} |k\rangle|E_j\rangle.
\ee
Assume that $q$ is the integer such that $\lambda_{j0} \Delta t-\frac{q}{m+1}>0$ is minimal. For convenience, we set $a = \lambda_{j0} \Delta t -\frac{q}{m+1}$, then $a \leq 1/(m+1)$. The probability of $ k = q + s({\rm mod}~m+1)$ is
\bea
P_s &=& \frac{1}{Z^2(m+1)}
\left| \sum_{l=0}^{m} e^{2\pi i l (a - \frac{s}{m+1})} e^{2\pi lb} \right|^2 \\
% =\frac{1}{Z^2(m+1)}
% \left| \frac{e^{2\pi i (m+1)(a-ib)} - 1}{e^{2\pi i (a-ib)} - 1}\right|^2 \\
%&=&\frac{1}{Z^2(m+1)}
%\frac{e^{4\pi  (m+1)b} - 2 e^{2\pi  (m+1)b} \cos(2\pi  (m+1)a)+ 1}{e^{4\pi  b} - 2 e^{2\pi  b} \cos(2\pi  a)+ 1 } \\
&=& \frac{(1-e^{4\pi b})}{(1-e^{4\pi b (m+1)})(m+1)}
\times
\frac{(e^{2\pi b(m+1)}-1)^2 + 4 e^{2\pi  b(m+1)} \sin^2(\pi  (m+1) (a-\frac{s}{m+1}))}{(e^{2\pi  b}-1)^2 + 4 e^{2\pi  b} \sin^2(\pi (a-\frac{s}{m+1}))} \\
&\leq& \frac{(1-e^{4\pi b})(e^{2\pi b(m+1)}+1)^2}
{4 e^{2\pi  b}(1-e^{4\pi b (m+1)})(m+1) \sin^2(\pi (a-\frac{s}{m+1}))} \\
&\leq& 
\frac{e^{2\pi}}{16} \times 
\frac{(1-e^{4\pi b}) (m+1)}{(1-e^{4\pi b (m+1)})} \times
\frac{ 1} {(a(m+1)-s)^2}.
\eea

Next, we aim to bound the probability of
obtaining an integer $k$ such that $|k - q| > r$, 
where $r$ is a positive integer characterizing
the desired tolerance to error. 
For simplicity, we assume that $m+1 = 2^t$.
This probability is
\bea
{\rm Pr}(|k - q| > r) &=& \left(\sum_{s= -2^{t-1} +1}^{- (s+1)} +\sum_{r+1}^{2^{t-1}} \right) P_s \\
&\leq& \frac{e^{2\pi}}{16} \times 
\frac{(1-e^{4\pi b}) (m+1)}{(1-e^{4\pi b (m+1)})} \times
\left(\sum_{s= -2^{t-1} +1}^{- (s+1)} +\sum_{s = r+1}^{2^{t-1}} \right) \frac{ 1} {(a(m+1)-s)^2} \\
&\leq& \frac{e^{2\pi}}{16} \times 
\frac{(1-e^{4\pi b}) (m+1)}{(1-e^{4\pi b (m+1)})} \times
\left(\sum_{s= -2^{t-1} +1}^{- (s+1)} \frac{1}{s^2} +\sum_{s = r+1}^{2^{t-1}} \frac{1}{(s-1)^2}\right) \\
&\leq& \frac{e^{2\pi}}{8} \times 
\frac{(1-e^{4\pi b}) (m+1)}{(1-e^{4\pi b (m+1)})} \times
\sum_{s = r}^{2^{t-1} - 1} \frac{1}{s^2} \\
&\leq& \frac{e^{2\pi}}{8} \times 
\frac{(1-e^{4\pi b}) (m+1)}{(1-e^{4\pi b (m+1)})} \times
\int_{r-1}^{2^{t-1} - 1} \frac{1}{s^2} ds \\
&\leq&  \frac{e^{2\pi}}{8(r-1)} \times 
\frac{(1-e^{4\pi b}) (m+1)}{(1-e^{4\pi b (m+1)})} .
\eea
Assume that $b = - C/(m+1)$ for some $C \in\{1,2,\ldots,m+1\}$, then
\be
{\rm Pr}(|k - q| > r) \leq \frac{e^{2\pi}}{8(r-1)} \times \frac{4\pi C}{1-e^{ - 4\pi }} = \Theta(C/r) .
\ee

Suppose we wish to approximate $\lambda_{j0}\Delta t$ to an accuracy $2^{-x}$, that is, we choose $r = 2^{t-x} - 1$. By making $t = x + y$, the probability of obtaining an approximation correct to this accuracy is at least $1 - C/2^y$.

Since $b = -\lambda_{j1} \Delta t$, we have
$\Delta t = C/(m+1)\lambda_{j1}$. 
Now let $k$ be the integer such that $k/(m+1)$
is an $2^{-x}$-approximation of $\lambda_{j0} \Delta t$.
Then from $m + 1=2^{x+y}$ and
\[
\Big|\lambda_{j0} \Delta t -\frac{k}{m+1}\Big|=\Big|\lambda_{j0} \frac{C}{(m+1)\lambda_{j1}} -\frac{k}{m+1}\Big|\leq \frac{1}{2^x},
\]
we obtain
\be \label{fufu3}
\Big|\lambda_{j0} - \frac{k\lambda_{j1}}{C}\Big|\leq \frac{2^y\lambda_{j1}}{C}.
\ee

This leads to a contradiction to the choices of $C$. We cannot find a $C$ such that the success probability is high, meanwhile, the error is small.
The above analysis shows that when the
eigenvalues are complex, Algorithm \ref{alg} may not return a good approximation of the real parts of the eigenvalues.

\begin{rem}
\label{remark}
{\rm 
To compute all the complex eigenvalues, it suffices to have a quantum algorithm to compute all the real parts. More precisely,
let $M$ be a diagonalizable matrix with complex eigenvalues $\lambda_j+i\mu_j$, where $j=1,\ldots,n$. The corresponding unit
eigenvalues are $|E_1\rangle,\ldots,|E_n\rangle$.
Suppose that we have a quantum algorithm that can output
$
\sum_{j=1}^n \beta_j |\lambda_j\rangle|E_j\rangle
$  (up to a normalization)
when the input is
$
\sum_{j=1}^n \beta_j |E_j\rangle.
$
Then we can apply this algorithm further to obtain
$
\sum_{j=1}^n \beta_j |\lambda_j\rangle|\mu_j\rangle|E_j\rangle.
$
This is obtained by considering $iM$ and viewing $
\sum_{j=1}^n \beta_j |\lambda_j\rangle|E_j\rangle
$ as the new initial state.
}\end{rem}

\section{A quantum algorithm to estimate complex eigenvalues}
\label{sec:A quantum algorithm to estimate complex eigenvalues}

Based on Remark \ref{remark}, 
it suffices to propose a quantum
algorithm that can estimate the real parts
of the eigenvalues.
The following is a  simple idea but with an extra assumption to
generalize Algorithm \ref{alg} to 
achieve this goal.

Suppose the complex eigenvalues 
of $M$ are
$\lambda_1+i \mu_1, \ldots, \lambda_n+i \mu_n$. The corresponding eigenvectors
are $|E_1\rangle,\ldots,|E_n\rangle$.
Denote $\overline{M}$ as
the complex conjugate of $M$. Then 
$\{(\lambda_j-i\mu_j,|\overline{E}_j\rangle):j=1,2,\ldots,n\}$
are the eigenpairs of $\overline{M}$.
Here $|\overline{E}_j\rangle$ is the complex conjugate of 
$|E_j\rangle$.

Consider the following differential equation:
\be \label{special ODE}
\begin{cases} 
\ds \frac{d\x}{dt} = \pi i (M\otimes I + I \otimes \overline{M}) \x, \\
\x(0) = |E_j\rangle\otimes |\overline{E}_j\rangle.
\end{cases}
\ee
It's obvious that the solution is
\bea
|\x(t)\rangle = e^{\pi i t(M\otimes I + I \otimes \overline{M})} |E_j\rangle\otimes |\overline{E}_j\rangle = e^{\pi i t M}|E_j\rangle \otimes e^{\pi i t \overline{M}} |\overline{E}_j\rangle 
= e^{2\pi i t \lambda_j}|E_j\rangle \otimes |\overline{E}_j\rangle.  
\label{solution of the special ODE}
\eea
Similar to the differential equation (\ref{eq1:new}), if we use quantum 
algorithm to solve (\ref{special ODE}), then we obtain
a superposition of the solutions (see
equation (\ref{solution})):
\be
\frac{1}{\sqrt{m+1}}\sum_{l=0}^{m} |l\rangle |\x(t_l) \rangle
=\frac{1}{\sqrt{m+1}}\sum_{l=0}^{m} e^{2\pi i l \lambda_j \Delta t} |l\rangle |E_j\rangle \otimes |\overline{E}_j\rangle.
\ee
If we apply quantum Fourier inverse transform 
to the first register
of the superposition, then we obtain
\be \label{apply QFT to the solution}
\frac{1}{m+1}\sum_{k=0}^{m} \left(\sum_{l=0}^{m-1} e^{2\pi i l (\lambda_j \Delta t-\frac{k}{m+1})}\right) |k\rangle |E_j\rangle \otimes |\overline{E}_j\rangle
\approx |\tilde{\lambda}_j \Delta t\rangle |E_j\rangle \otimes |\overline{E}_j\rangle.
\ee

In the general case, we can choose the 
initial vector of the differential equation
(\ref{special ODE}) as
\be \label{initial state}
\x(0) = \sum_{j=1}^n \beta_j |E_j\rangle \otimes |\overline{E}_j\rangle.
\ee
Then the superposition of the
solutions is proportional to
\be
\sum_{l=0}^{m-1}\sum_{j=1}^n \beta_j e^{2\pi i l \lambda_j \Delta t} |l\rangle|E_j\rangle \otimes |\overline{E}_j\rangle.
\ee
Similarly, we can apply quantum Fourier 
inverse transform 
to the first register to estimate the
real parts of the eigenvalues
\be
\label{transformation 1}
\sum_{k=0}^{m-1}\sum_{j=1}^n \beta_j \left(\sum_{l=0}^{m-1}e^{2\pi i l (\lambda_j \Delta t-\frac{k}{m})}\right) |k\rangle|E_j\rangle \otimes |\overline{E}_j\rangle
\approx \sum_{j=1}^n \beta_j
|\tilde{\lambda}_j \Delta t\rangle|E_j\rangle \otimes |\overline{E}_j\rangle.
\ee

If we further view  (\ref{transformation 1}) as the initial state 
and implement the above procedure by changing
$\pi i (M\otimes I+I\otimes \overline{M})$
into $\pi (M\otimes I-I\otimes \overline{M})$
in the differential equation (\ref{special ODE}).
Then we can further obtain the state 
\be
\sum_{j=1}^n \beta_j
|\tilde{\lambda}_j \Delta t\rangle
|\tilde{\mu}_j \Delta t\rangle
|E_j\rangle \otimes |\overline{E}_j\rangle
\ee
up to a normalization. 
With the above idea, the quantum algorithm to estimate complex eigenvalues
can be stated as follows.

\begin{breakablealgorithm}
\label{alg-new}
\caption{Quantum algorithm for computing the eigenvalues of diagonalizable matrices}
\begin{algorithmic}[1]
\REQUIRE
(1). An $n\times n$ diagonalizable matrix $M$.
Suppose that the eigenvalues are $\{\lambda_1+i \mu_1, \ldots, \lambda_n+i \mu_n\}$ and the unit 
eigenvectors are $\{|E_1\rangle,\ldots,|E_n\rangle\}$.
\\
(2). A upper bound $\rho\geq 1$ of the eigenvalues. \\
(3). Quantum access to copies of the state $|\phi\rangle$ of the form $|\phi\rangle=\sum_{j=1}^n \beta_j |E_j\rangle |\overline{E}_j\rangle$. 
\\
(4). The precision $\epsilon\in(0,1)$, $\Delta t=1/\rho$, $m=\lceil \rho/\epsilon \rceil, k=2\lceil\log(\rho/\epsilon)\rceil$.
\ENSURE The quantum state
\be 
\sum_{j=1}^n \beta_j
|\tilde{\lambda}_j \rangle
|\tilde{\mu}_j \rangle
|E_j\rangle  |\overline{E}_j\rangle
\ee
up to a normalization, where $|\tilde{\lambda}_j- \lambda_j|\leq \epsilon$ and
$|\tilde{\mu}_j- \mu_j|\leq \epsilon$ for all $j$.
\STATE Construct matrices $C_1:=C_{m,k}(\pi i \Delta t(M\otimes I+I\otimes \overline{M}))$ and
$C_2:=C_{m,k}(\pi  \Delta t (M\otimes I-I\otimes \overline{M}))$ based on equation (\ref{coeff matrix}).
\STATE Use quantum linear algebraic technique to construct $|\phi_1\rangle=C_1^{-1}|0..0\rangle|\phi\rangle$.
\STATE Apply quantum Fourier inverse transform to the first register of $|\phi_1\rangle$. Then the result is proportional to
$\sum_j \beta_j
|\tilde{\lambda}_j \rangle
|E_j\rangle  |\overline{E}_j\rangle|0\rangle + |0\rangle^\bot$. Define
$|\phi_2\rangle=\sum_j \beta_j
|\tilde{\lambda}_j \rangle|0..0\rangle
|E_j\rangle  |\overline{E}_j\rangle|0\rangle + |0\rangle^\bot$.
\STATE Use quantum linear algebraic technique to construct $|\phi_3\rangle= I\otimes C_2^{-1} |\phi_2\rangle$.
\STATE Apply quantum Fourier inverse transform to the second register of $|\phi_3\rangle$.
\STATE Measure the last register of $|\phi_3\rangle$ and return the post-selected state if the output is  $|0\rangle$.
\end{algorithmic}
\end{breakablealgorithm}

\begin{lem}
\label{condition number}
Let $P$ be an nonsingular matrix, 
then the condition number $\kappa(P\otimes \overline{P})
=\kappa(P)^2$.
\end{lem}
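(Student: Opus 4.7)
The plan is to use the definition $\kappa(A)=\|A\|\,\|A^{-1}\|$ with $\|\cdot\|$ the spectral norm, and reduce everything to two elementary facts: (i) the spectral norm is multiplicative on Kronecker products, and (ii) complex conjugation preserves singular values.

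First I would recall that for any matrices $A,B$ one has $\|A\otimes B\|=\|A\|\,\|B\|$; this follows from the fact that the singular values of $A\otimes B$ are precisely the pairwise products of the singular values of $A$ and $B$, so the largest one factors. Next I would observe that $\|\overline{P}\|=\|P\|$, because $\overline{P}^{\dagger}\overline{P}=\overline{P^{\dagger}P}$ has the same (real) eigenvalues as $P^{\dagger}P$; equivalently, singular values are invariant under entrywise conjugation. Finally I would use $(P\otimes \overline{P})^{-1}=P^{-1}\otimes \overline{P}^{-1}=P^{-1}\otimes \overline{P^{-1}}$, since conjugation commutes with inversion.

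Combining these,
\begin{equation*}
\kappa(P\otimes \overline{P}) \;=\; \|P\otimes \overline{P}\|\,\|P^{-1}\otimes \overline{P^{-1}}\|
\;=\; \|P\|\,\|\overline{P}\|\,\|P^{-1}\|\,\|\overline{P^{-1}}\|
\;=\; \|P\|^{2}\,\|P^{-1}\|^{2}
\;=\; \kappa(P)^{2}.
\end{equation*}

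There is no real obstacle here; the only thing worth being careful about is that the identity $\|A\otimes B\|=\|A\|\,\|B\|$ genuinely requires the spectral norm (it would fail for, say, the Frobenius norm only up to the identity and for some other norms not at all), and that the statement $\kappa(\overline{P})=\kappa(P)$ uses the same SVD argument rather than anything about eigenvalues of $P$, which could be complex. Once those two ingredients are in place, the proof is a one-line multiplication.
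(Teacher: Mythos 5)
Your proof is correct and rests on the same two facts the paper uses: conjugation preserves singular values, and the singular values of a Kronecker product are the pairwise products of the factors' singular values. The paper just computes $\kappa(P\otimes\overline{P})=\max_{i,j}\sigma_i\sigma_j/\min_{i,j}\sigma_i\sigma_j$ directly rather than routing through $\|A\|\,\|A^{-1}\|$, so the two arguments are essentially identical.
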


\begin{proof}
Assume that the SVD of $P$ is $UDV$, 
then the SVD of $\overline{P}$ is $\overline{U}D\overline{V}$. Thus
$P,\overline{P}$ have the same singular values.
If $\sigma_1\geq \sigma_2\geq \cdots 
\geq \sigma_n>0$ are the singular values of $P$,
then the singular values of $P\otimes \overline{P}$
are $\sigma_i \sigma_j$. Therefore
\[
\kappa(P\otimes \overline{P})
=\frac{\max_{i,j}{\sigma_i \sigma_j}}{\min_{i,j}{\sigma_i \sigma_j}}
=\frac{\sigma_1^2}{\sigma_n^2}
=\kappa(P)^2
\]
as claimed.
\end{proof}

Note that if $M=PDP^{-1}$, then 
$M\otimes I \pm I \otimes \overline{M}
=(P\otimes \overline{P})(D\times I \pm
I \otimes \overline{D}) (P\otimes \overline{P})^{-1}$. So the matrix
of the eigenvectors of $M\otimes I \pm I \otimes \overline{M}$ is $P\otimes \overline{P}$.
The cost of Algorithm \ref{alg-new} is mainly determined
by the quantum linear algebraic technique to
implement $C_1^{-1},C_2^{-1}$. 
Similar to the proof of Theorem \ref{thm2},
we have

\begin{thm} \label{main theorem}
Let $M$ be an $n\times n$ diagonalizable matrix. Assume that its eigenvalues are
$\lambda_1+i \mu_1, \ldots, \lambda_n+i \mu_n$,
and the corresponding normalized eigenvectors are
$|E_1\rangle,\ldots,|E_n\rangle$.
Let $\rho\geq 1$ be a upper bound of the eigenvalues.
Given access to copies of the state
$\sum_{j=1}^n\beta_j|E_j\rangle|\overline{E}_j\rangle$, then
Algorithm \ref{alg-new} returns
\be
\sum_{j=1}^n \beta_j |\tilde{\lambda}_j\rangle |\tilde{\mu}_j\rangle |E_j\rangle|\overline{E}_j\rangle
\ee
up to a normalization in time
\be \label{main thm:complexity}
O\left(T_0\times
T\left(
\frac{\rho\kappa(E)^2\log(\rho/\epsilon)}{\epsilon}, \frac{\epsilon}{\rho},
\frac{n\rho\log(\rho/\epsilon)}{\epsilon}
\right)
\times \frac{\rho^2\kappa(E)^4}{\epsilon^2} \log \frac{\rho}{\epsilon}   \right),
\ee
where $T_0$ is the complexity to generate the initial state, $\kappa(E)$ is the condition number of the matrix generated by 
the eigenvectors, $|\lambda_j-\tilde{\lambda}_j| \leq \epsilon$
and $|\mu_j-\tilde{\mu}_j| \leq \epsilon$ for all $j$. Especially, if $M$ is $s$ sparse, then the complexity is 
$
\widetilde{O}(T_0s\rho^3\kappa(E)^6(\log n)/\epsilon^3).
$
\end{thm}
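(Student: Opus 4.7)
The plan is to apply the machinery of Theorem \ref{thm2} twice in sequence, once to each of the two ODEs built into Algorithm \ref{alg-new}, and then compose the resulting costs.

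For the first ODE, observe that $M\otimes I + I\otimes \overline{M}$ has $|E_j\rangle|\overline{E}_j\rangle$ as an eigenvector with eigenvalue $(\lambda_j+i\mu_j)+(\lambda_j-i\mu_j)=2\lambda_j$, so the initial state $|\phi\rangle$ lies in an invariant subspace on which $\pi i(M\otimes I + I\otimes\overline{M})$ acts like a diagonalizable operator with purely real eigenvalues $2\pi\lambda_j$. The exact solution on this subspace is $\sum_j\beta_j e^{2\pi it\lambda_j}|E_j\rangle|\overline{E}_j\rangle$, which is precisely the form handled by Theorem \ref{thm2}. Constructing $C_1$, inverting it against $|0\cdots0\rangle|\phi\rangle$, applying QFT$^{-1}$, and amplifying the $|0\rangle$ ancilla therefore produces, up to normalization and error $\epsilon$, the state $|\phi_2\rangle=\sum_j\beta_j|\tilde\lambda_j\rangle|0\cdots0\rangle|E_j\rangle|\overline{E}_j\rangle|0\rangle + |0\rangle^\perp$. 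The only changes relative to the proof of Theorem \ref{thm2} are that the spatial dimension becomes $n^2$ and the eigenvector matrix $E$ is replaced by $E\otimes\overline{E}$, whose condition number is $\kappa(E)^2$ by Lemma \ref{condition number}. Substituting $\kappa(E)\mapsto\kappa(E)^2$ into (\ref{thm2:complexity}) gives the cost of this first stage.

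For the second ODE, $|E_j\rangle|\overline{E}_j\rangle$ is again an eigenvector of $M\otimes I - I\otimes\overline{M}$, now with eigenvalue $(\lambda_j+i\mu_j)-(\lambda_j-i\mu_j)=2i\mu_j$, so the prefactor $\pi$ in $d\x/dt = \pi(M\otimes I - I\otimes\overline{M})\x$ produces the phase $e^{2\pi it\mu_j}$ that Theorem \ref{thm2} is designed to resolve via QFT$^{-1}$. Feeding $|\phi_2\rangle$ in as the initial state, the same bookkeeping with the same substitution yields the final state of the theorem. The role of $T_0$ in (\ref{thm2:complexity}) is now played by the total cost of the first stage, so composition multiplies in an additional factor $\rho\kappa(E)^2\sqrt{\log(\rho/\epsilon)}/\epsilon$ from the second amplitude amplification, producing the $\rho^2\kappa(E)^4\log(\rho/\epsilon)/\epsilon^2$ factor in (\ref{main thm:complexity}). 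For the $s$-sparse specialization, both $M\otimes I\pm I\otimes\overline{M}$ are $2s$-sparse, hence $C_{m,k}$ is $\Theta(s+k)$-sparse by Lemma \ref{sparsity}; plugging $T=\widetilde{O}(s\kappa(C_{m,k}))$ from \cite{childs2017quantum} with $\kappa(C_{m,k})=\widetilde{O}(\kappa(E)^2\rho/\epsilon)$ into the composed cost yields $\widetilde{O}(s\rho^3\kappa(E)^6/\epsilon^3)$.

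The main technical subtlety is verifying the stability hypothesis of \cite{berry2017quantum}: the full operator $\pi i(M\otimes I + I\otimes\overline{M})$ has eigenvalues with real parts $-\pi(\mu_j-\mu_k)$, which are not uniformly non-positive even though the dynamics actually executed on the initial-state subspace are purely oscillatory. I expect this to be circumvented by an invariant-subspace refinement of \cite[Lemma~3]{berry2017quantum}: only the action of $C_{m,k}^{-1}$ on vectors in the span of $\{|p\rangle|E_j\rangle|\overline{E}_j\rangle\}$ enters into the error and amplitude analysis, and on that subspace the eigenvector-matrix bound $\kappa(E)^2$ from Lemma \ref{condition number} is exactly what is needed. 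Beyond this point everything is routine bookkeeping that mirrors the proof of Theorem \ref{thm2} line for line.
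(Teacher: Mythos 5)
Your proposal follows essentially the same route as the paper's proof: apply the $C_1^{-1}$ machinery for $\pi i(M\otimes I+I\otimes\overline M)$, QFT$^{-1}$ on the first register, then $C_2^{-1}$ for $\pi(M\otimes I-I\otimes\overline M)$ and QFT$^{-1}$ on the second register, with Lemma~\ref{condition number} giving the substitution $\kappa(E)\mapsto\kappa(E)^2$ and the two $O(\kappa(E)^2\sqrt{k}m)$ amplitude factors multiplying (the paper records the combined amplitude bound $1/9\kappa(E)^4 km^2$ and amplifies once at the end, which yields the same order as your stage-wise composition). The one place worth flagging: the stability subtlety you raise is real and is \emph{not} explicitly discharged in the paper either. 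The eigenvalues of $\pi i(M\otimes I+I\otimes\overline M)$ have real parts $-\pi(\mu_j-\mu_k)$, and those of $\pi(M\otimes I-I\otimes\overline M)$ have real parts $\pi(\lambda_j-\lambda_k)$; neither is uniformly non-positive, and the shift $M\mapsto M-i\rho I$ suggested earlier in the paper does not help because it cancels in the tensor-sum. The paper simply invokes \cite[Lemma~3]{berry2017quantum} to bound $\|C_1^{-1}\|,\|C_2^{-1}\|$ without verifying its hypothesis, so your suggested invariant-subspace refinement is a reasonable reading of what must implicitly be meant, but neither your proposal nor the paper actually proves it — the quantum linear solver's cost and error depend on the global $\kappa(C_{m,k})$, not merely its restriction to the span of $\{|p\rangle|E_j\rangle|\overline E_j\rangle\}$, so a genuine argument would need to show that the off-subspace eigenvalues do not degrade $\kappa(C_{m,k})$ beyond $O(\kappa(E)^2 km)$. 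Apart from that shared gap, your bookkeeping matches the paper's line for line.
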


\begin{proof}
The proof of this theorem is similar to that of Theorem \ref{thm2}.
By equation (\ref{eq:approsolu}), up to a normalization
\beas
C_1^{-1}|0..0\rangle |\x(0)\rangle &=&
 \sum_{j=1}^n \beta_j C_1^{-1}|0..0\rangle |E_j\rangle |\overline{E}_j\rangle\\
&=&  \sum_{j=1}^n \beta_j \left(\sum_{p=0}^{m}  |p(k+1)\rangle |\x_{p,0}\rangle + \sum_{p=0}^{m-1} \sum_{q=1}^k |p(k+1)+q\rangle |\x_{p,q}\rangle \right).
\eeas
Since $k+1$ is invertible modulo $m(k+1)+1$, we can find $(k+1)^{-1}$ and multiple it on the first register. As a result, we obtain
\be \label{mainthm2-proof:eq1}
\sum_{j=1}^n \beta_j \left(\sum_{p=0}^{m}  |p\rangle |\x_{p,0}\rangle |0\rangle + |0\rangle^\bot \right)
\ee
by adding a new ancilla qubit $|0\rangle$ to separate the two summation terms. 

By Proposition \ref{prop:Solution error} and Lemma \ref{lemma2}, the state
(\ref{mainthm2-proof:eq1}) 
is close to the state
\be \label{mainthm2-proof:eq2}
\sum_{j=1}^n\sum_{p=0}^{m} \beta_j e^{2\pi i \lambda_j p \Delta t }|p\rangle |E_j\rangle|\overline{E}_j\rangle |0\rangle + |0\rangle^\bot
.
\ee
If we apply quantum Fourier inverse transform to $\ket{p}$,
then we obtain
\be \label{mainthm2-proof:eq3}
\sum_{j=1}^n\sum_{q=0}^{m}\sum_{p=0}^{m} \beta_j e^{2\pi i  p (\lambda_j\Delta t-\frac{q}{m+1}) }|q\rangle |E_j\rangle|\overline{E}_j\rangle |0\rangle + |0\rangle^\bot
\approx  \sum_{j=1}^n \beta_j |\tilde{\lambda}_j\Delta t \rangle |E_j\rangle|\overline{E}_j\rangle |0\rangle + |0\rangle^\bot
\ee
up to a normalization, where $|\tilde{\lambda}_j -\lambda_j | \leq \epsilon$.

To estimate the imaginary parts of the eigenvalues,
we can consider the differential equation
\be 
\ds \frac{d\x(t)}{dt} = \pi  (M\otimes I - I \otimes \overline{M}) \x(t)
\ee
with the initial state (\ref{mainthm2-proof:eq3}) in that
$e^{\pi t (M\otimes I - I \otimes \overline{M})}
|E_j\rangle|\overline{E}_j\rangle
=e^{2 \pi i t \mu_j}|E_j\rangle|\overline{E}_j\rangle.$
To obtain the superposition of this differential equation,
we apply the inverse of $C_2=C_{m,k}(\pi t (M\otimes I - I \otimes \overline{M}))$ to the state (\ref{mainthm2-proof:eq3})
to obtain
\be \label{mainthm2-proof:eq4}
\sum_{j=1}^n \beta_j |\tilde{\lambda}_j\Delta t \rangle
C_2^{-1}|0..0\rangle |E_j\rangle|\overline{E}_j\rangle |0\rangle + |0\rangle^\bot.
\ee
Similar to the analysis of the state (\ref{mainthm2-proof:eq2}),
equation (\ref{mainthm2-proof:eq4}) is close to
\be \label{mainthm2-proof:eq5}
\sum_{j=1}^n\sum_{p=0}^{m} \beta_j e^{2\pi i \mu_j p \Delta t }|\tilde{\lambda}_j\Delta t \rangle|p\rangle |E_j\rangle|\overline{E}_j\rangle |0\rangle 
+ |0\rangle^\bot.
\ee
Apply quantum Fourier inverse transform to the second register, then we obtain
\be \label{mainthm2-proof:eq6}
\sum_{j=1}^n \beta_j |\tilde{\lambda}_j\Delta t \rangle|\tilde{\mu}_j\Delta t \rangle |E_j\rangle|\overline{E}_j\rangle |0\rangle 
+ |0\rangle^\bot,
\ee
 where $|\tilde{\mu}_j -\mu_j | \leq \epsilon$.

Next, we estimate the success probability.
So we need to compute the amplitude 
of $|0\rangle$ in the state (\ref{mainthm2-proof:eq6}).
By Lemma 3 of \cite{berry2017quantum} and Lemma \ref{condition number}, $\|C^{-1}_1\| \leq 3\kappa(E)^2\sqrt{k}m, \|C^{-1}_2\| \leq 3\kappa^2\sqrt{k}m$, As a result, $\|C^{-1}_1|0..0\rangle |x(0)\rangle\| \leq \|C^{-1}_1\| \leq 3\kappa(E)^2\sqrt{k}m$. Before normalization, the amplitude of $|0\rangle$ in the state  (\ref{mainthm2-proof:eq3}) equals
\[
\sqrt{\sum_{p=0}^m \left\|\sum_{j=1}^n \beta_j e^{2\pi i \lambda_j p \Delta t } |E_j\rangle \right\|_2^2} \geq 1
\]
in that when $p=0$, we have $\|\sum_{j=1}^n \beta_j |E_j\rangle \|_2^2=\||\phi\rangle\|_2^2=1$.
Therefore, the amplitude of $|0\rangle$ of the state (\ref{mainthm2-proof:eq3}) is larger than $1/3\kappa(E)^2\sqrt{k}m$. 
Similar analysis shows that
the amplitude of $|0\rangle$ of the state (\ref{mainthm2-proof:eq6}) is larger than $1/9\kappa(E)^4km^2$. 

If $M$ is $s$ sparse, then $M\otimes I \pm I\otimes \overline{M}$
is at most $2s$ sparse. By Lemma \ref{sparsity},
$C_1,C_2$ is $O(s+\log\frac{1}{\epsilon})$ sparse. By \cite[Theorem 5]{childs2017quantum},
the cost to solve an $s$ sparse linear system with condition number $\kappa$ in a quantum computer is $\widetilde{O}(s\kappa)$.
Thus, the claimed result comes from the cost of amplitude amplification and the choices of the parameters in the algorithm.
\end{proof}

% As a corollary of Theorem \ref{main theorem} and Lemma  \ref{distribution},
% we have

% \begin{cor}\label{main theorem 2}
% Let $M$ be an $s$ sparse diagonalizable matrix. Assume that its eigenvalues are
% $\lambda_1+i \mu_1, \ldots, \lambda_n+i \mu_n$.
% Let $\rho$ be a upper bound of the eigenvalues.
% Then there is a quantum algorithm to find $\{\tilde{\lambda}_k,\tilde{\mu}_k:k=1,\ldots,n\}$
% in time
% $
% \widetilde{O}(T_0s\rho^3n\kappa(E)^6/\epsilon^3)
% $
% such that $|\lambda_k-\tilde{\lambda}_k| \leq \epsilon $
% and $|\mu_k-\tilde{\mu}_k| \leq \epsilon $
% for all $k$, where $\kappa(E)$ is the condition number of the matrix generated by 
% the eigenvectors.
% \end{cor}

\section{Normal matrices}
\label{Normal Matrices}

A square matrix is called normal if it commutes with its conjugate transpose. 
The condition of normality may be strong, but it includes the unitary, Hermitian, skew-Hermitian matrices and their real counterparts as special cases.
These matrices are of great interests to physicists \cite{macklin1984normal}.
In \cite{grone1987normal}, Grone et al. listed 70 different
equivalent conditions of normal matrices.
19 more were added later in \cite{elsner1998normal}.
One interesting result we will use in this paper is the fact that normal matrices can be diagonalized by unitary matrices.
The list \cite{grone1987normal,elsner1998normal} 
reflects the fact that normality arises in many ways.

In this section, we solve Problem \ref{main problem}
for normal matrices. A quantum algorithm
based on quantum singular value decomposition and quantum phase estimation will be given.

\subsection{Main result}

Assume that $M$ is
an $n\times n$ normal matrix, 
then 
there is a diagonal matrix
$\Lambda = {\rm diag}(\lambda_1,\ldots,\lambda_n)$ and a unitary matrix $U$ such that $M = U \Lambda U^\dag$.
Thus $\{ \sigma_j := |\lambda_j|: j=1,\ldots,n\}$ are the singular values of $M$.
Denote $\lambda_j = \sigma_j e^{2 \pi i \theta_j}$, the problem we want to solve in this
section is to estimate $\sigma_j, \theta_j$ for all $j$ up to certain precision in a quantum computer. 
To be more precise, assume that the $j$-th column of $U$ is $|u_j\rangle$. 
Let $|b\rangle$ be any given state. Since $U$ in unitary, there exist $\beta_1,\ldots,\beta_n$ such that
$|b\rangle=\sum_{j=1}^n \beta_j |u_j\rangle$. The main objective we want to obtain is 
to find a quantum algorithm to return
\begin{equation} \label{Quantum:EVD}
\sum_{j=1}^n \beta_j |u_j\rangle |\tilde{\sigma}_j\rangle |\tilde{\theta}_j\rangle,
\end{equation}
where $\tilde{\sigma}_j$ and $\tilde{\theta}_j$ are respectively the approximations
of $\sigma_j$ and $\theta_j$.

To state our main result, we need the technique of quantum singular value estimation.
Assume that the
singular value decomposition (SVD) of $M=\sum_{j=1}^n \sigma_j |u_j\rangle \langle v_j|$. Denote
\be
\widetilde{M} = \left(\begin{array}{cc} 
0 & M  \\
M^\dag & 0
\end{array}\right),
\ee
which is Hermitian.
The eigenvalues of $\widetilde{M}$ are $\{\pm \sigma_j: j=1,\ldots,n\}$.
The corresponding eigenvectors are
\begin{equation}
 |w_j^\pm\rangle
:=\frac{1}{\sqrt{2}} (|0\rangle|u_j\rangle \pm |1\rangle|v_j\rangle),
\quad
j=1,\ldots,n.
\end{equation}
Based on quantum phase estimation, we can implement the following transformation
\begin{equation} 
\label{Quantum:SVD1}
\sum_{j=1}^n \beta_j^+ |w_j^+\rangle + \beta_j^- |w_j^-\rangle
\mapsto \sum_{j=1}^n \beta_j^+ |w_j^+\rangle |\tilde{\sigma}_j\rangle + \beta_j^- |w_j^-\rangle |-\tilde{\sigma}_j\rangle
\end{equation}
in a quantum computer, where $|\sigma_j-\tilde{\sigma}_j|\leq \epsilon$. More details can be found in Appendix \ref{Quantum phase estimation}.
Usually, the complexity is poly-log at $n$ and linear at $1/\epsilon$. In the following, we shall use $O(T/\epsilon)$ to denote the complexity of implementing the transformation (\ref{Quantum:SVD1}).

With the above preliminaries,
our main result is stated as follows.
We will prove it in the next subsection.

\begin{thm}
\label{thm1}
Let $M$ be an $n$-by-$n$ normal matrix. 
Assume that its eigenvalue decomposition is
$\sum_{j=1}^n \sigma_j e^{2 \pi i\theta_j} |u_j\rangle \langle u_j|$. 
Suppose the
complexity to implement (\ref{Quantum:SVD1}) for $M$ to precision $\epsilon_1$ is $O(T/\epsilon_1)$, then there is a quantum algorithm that returns
the state (\ref{Quantum:EVD}) in time
$O(T/\epsilon_1\epsilon_2)$,
where $|\sigma_j - \tilde{\sigma}_j| \leq \epsilon_1$,
$|\theta_j - \tilde{\theta}_j| \leq \epsilon_2$ for all $j$.

\end{thm}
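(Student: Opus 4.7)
The plan is to extract the moduli $\sigma_j=|\lambda_j|$ and the arguments $\theta_j$ in two separate passes, then combine them by sequential register allocation. The moduli come directly from QSVE of $M$; the phases come from applying QPE to a unitary built on top of QSVE of $\widetilde{M}$, using the normality of $M$ to disentangle the phase onto an ancilla qubit.

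For the moduli, I would embed $|b\rangle=\sum_j\beta_j|u_j\rangle$ as $|0\rangle|b\rangle$ in the $2n$-dimensional space on which $\widetilde{M}$ acts. Since $|0\rangle|u_j\rangle=(|w_j^+\rangle+|w_j^-\rangle)/\sqrt{2}$, one application of the QSVE transformation (\ref{Quantum:SVD1}) produces $\sum_j\beta_j(|w_j^+\rangle|\tilde\sigma_j\rangle+|w_j^-\rangle|-\tilde\sigma_j\rangle)/\sqrt{2}$; copying the magnitude of the last register into a fresh register and uncomputing yields $\sum_j\beta_j|0\rangle|u_j\rangle|\tilde\sigma_j\rangle$ with $|\tilde\sigma_j-\sigma_j|\le\epsilon_1$, at cost $O(T/\epsilon_1)$.

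For the phases, the key observation specific to normal $M$ is that $M=\sum_j\sigma_j e^{2\pi i\theta_j}|u_j\rangle\langle u_j|$ forces $|v_j\rangle=e^{-2\pi i\theta_j}|u_j\rangle$, so
\[
|w_j^\pm\rangle=\tfrac{1}{\sqrt{2}}\bigl(|0\rangle\pm e^{-2\pi i\theta_j}|1\rangle\bigr)\otimes|u_j\rangle,
\]
which means the phase $\theta_j$ sits entirely on the ancilla qubit of the block-encoding of $\widetilde{M}$. Let $S$ be the sign operator with $S|w_j^\pm\rangle=\pm|w_j^\pm\rangle$, implemented as QSVE of $\widetilde{M}$ followed by a controlled-$Z$ on the sign bit of the QSVE register followed by QSVE$^{-1}$. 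Inverting the basis change above, a short computation gives
\[
S|0\rangle|u_j\rangle=e^{-2\pi i\theta_j}|1\rangle|u_j\rangle,\qquad S|1\rangle|u_j\rangle=e^{2\pi i\theta_j}|0\rangle|u_j\rangle,
\]
so the unitary $U:=(X\otimes I)\,S$, with $X$ the Pauli flip on the ancilla, satisfies $U|0\rangle|u_j\rangle=e^{-2\pi i\theta_j}|0\rangle|u_j\rangle$, and one application of $U$ costs $O(T/\epsilon_1)$.

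Feeding the Stage 1 state $\sum_j\beta_j|0\rangle|u_j\rangle|\tilde\sigma_j\rangle$ into standard QPE on $U$ uses $O(1/\epsilon_2)$ controlled applications of $U$ and returns $\sum_j\beta_j|0\rangle|u_j\rangle|\tilde\sigma_j\rangle|\tilde\theta_j\rangle$ with $|\tilde\theta_j-\theta_j|\le\epsilon_2$; discarding the now-decoupled $|0\rangle$ yields the target state (\ref{Quantum:EVD}) in total time $O(T/(\epsilon_1\epsilon_2))$. The main obstacle is checking that the sign operator $S$, and hence $U$, behaves as claimed despite the finite $\epsilon_1$-precision of QSVE: the controlled-$Z$ keys off the sign of $\tilde\sigma_j$ rather than $\sigma_j$, and for eigenvalues with $\sigma_j\approx 0$ the argument $\theta_j$ itself is ill-defined. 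Propagating the $\epsilon_1$ errors through the $O(1/\epsilon_2)$ QPE queries by triangle inequality should close the analysis, provided the nonzero singular values of $M$ stay bounded away from zero.
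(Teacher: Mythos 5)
Your construction is essentially the paper's proof with the two stages swapped. The paper (Claim~\ref{technical lemma}) builds exactly your operator $U=(X\otimes I)S$ — QSVE of $\widetilde{M}$, a conditional sign flip on the negative-eigenvalue branch, inverse QSVE, then Pauli-$X$ — to realize $|0\rangle|u_j\rangle\mapsto e^{-2\pi i\theta_j}|0\rangle|u_j\rangle$, and then runs QPE on it; the only difference is that the paper extracts the phases first and the moduli second, while you do the reverse, which changes nothing in the complexity or correctness. (The caveat you raise at the end about the $\epsilon_1$ error of each $U$ call compounding over the $O(1/\epsilon_2)$ QPE queries is a real issue that the paper's proof also glosses over, and is worth flagging, but it is shared by both arguments rather than being a gap specific to yours.)
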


Let $M=(m_{ij})_{n\times n}$ be a matrix, we use $\|M\|_{\max}$ to denote the quantity $\max_{i,j}|m_{ij}|$.
As a direct application of   Theorem \ref{thm1} and Proposition \ref{prop:qsvd} in Appendix \ref{Quantum phase estimation},
we have the following result.

\begin{thm}
\label{cor1}
Let $M$ be an $s$ sparse $n\times n$ normal matrix, let $\rho$ be a upper bound of its eigenvalues.
Then there is a quantum algorithm that prepares the state (\ref{Quantum:EVD}) in time $\widetilde{O}(s\rho\|M\|_{\max}/\epsilon_1\epsilon_2)$, where $|\sigma_j - \tilde{\sigma}_j| \leq \epsilon_1$,
$|\theta_j - \tilde{\theta}_j| \leq \epsilon_2$ for all $j$.
\end{thm}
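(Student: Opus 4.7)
The plan is to obtain this corollary as an immediate consequence of Theorem \ref{thm1} by plugging in a concrete quantum singular value estimation cost for sparse matrices (which is what Proposition \ref{prop:qsvd} in the appendix supplies). So the real work is only to identify the factor $T$ appearing in the bound $O(T/\epsilon_1\epsilon_2)$ of Theorem \ref{thm1} for the sparse-access model.

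First, I would recall that the transformation (\ref{Quantum:SVD1}) is implemented by running quantum phase estimation on the Hamiltonian simulation of the Hermitian dilation $\widetilde{M}$. Since $M$ is $s$-sparse, $\widetilde{M}$ is at most $2s$-sparse, Hermitian, and has $\|\widetilde{M}\|_{\max}=\|M\|_{\max}$. To use QPE to resolve eigenvalues of $\widetilde{M}$, which lie in $[-\rho,\rho]$, to precision $\epsilon_1$, one simulates $e^{-i\widetilde{M}t}$ for evolution time $t=\Theta(\rho/\epsilon_1)$ after normalizing so the phases fall in $[0,2\pi)$. Using the best-known sparse Hamiltonian simulation algorithms, this evolution can be implemented in time $\widetilde{O}(s\|\widetilde{M}\|_{\max}t)=\widetilde{O}(s\rho\|M\|_{\max}/\epsilon_1)$, and this is exactly the content of Proposition \ref{prop:qsvd}. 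Thus we may take $T=\widetilde{O}(s\rho\|M\|_{\max})$ in the notation of Theorem \ref{thm1}.

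Substituting this $T$ into the complexity bound $O(T/\epsilon_1\epsilon_2)$ of Theorem \ref{thm1} immediately yields the claimed runtime $\widetilde{O}(s\rho\|M\|_{\max}/\epsilon_1\epsilon_2)$, with the output state being the promised quantum eigenvalue decomposition $\sum_j \beta_j |u_j\rangle|\tilde{\sigma}_j\rangle|\tilde{\theta}_j\rangle$ to the stated accuracies. There is no real obstacle here beyond being careful with normalizations: the only nontrivial point is tracking how the bound $\rho$ on the eigenvalues enters (through the evolution time needed in QPE), rather than any larger quantity like the trivial sparsity bound $s\|M\|_{\max}$ on $\|M\|$. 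Once that is accounted for, the corollary is a one-line substitution.
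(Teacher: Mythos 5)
Your proof is correct and follows essentially the same route as the paper: Theorem \ref{cor1} is indeed stated there as a direct application of Theorem \ref{thm1} combined with Proposition \ref{prop:qsvd}, which is precisely the substitution $T=\widetilde{O}(s\rho\|M\|_{\max})$ into the $O(T/\epsilon_1\epsilon_2)$ bound that you carry out. Your parenthetical derivation of Proposition \ref{prop:qsvd} from QPE plus sparse Hamiltonian simulation of the dilation $\widetilde{M}$ matches the paper's Appendix \ref{Quantum phase estimation}, including the (arguably non-tight, but as stated in the paper) attribution of the factor $\rho$ to the evolution time $t'=\Theta(C/\epsilon)$ used there.
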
 

If we perform measurements on the last two registers of (\ref{Quantum:EVD}), then 
we can obtain all the eigenvalues of $M$.
For sparse matrices, we have the following result.

\begin{cor}
Let $M$ be an $s$ sparse $n\times n$ normal matrix with eigenvalues $\sigma_je^{i\theta_j},~j=1,\ldots,n$. Let $\epsilon_1,\epsilon_2$ be the precisions to approximate  $\sigma_j,\theta_j$ respectively.
Assume that $\sigma_j\leq \rho$ for all $j$.
Then there is a quantum algorithm that returns all the eigenvalues in time
$
\widetilde{O}(sn\rho\|M\|_{\max}/\epsilon_1\epsilon_2).
$
\end{cor}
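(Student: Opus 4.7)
The plan is to first invoke Theorem \ref{cor1} to prepare the quantum eigenvalue decomposition $\sum_{j=1}^n \beta_j |u_j\rangle|\tilde{\sigma}_j\rangle|\tilde{\theta}_j\rangle$ for a suitably chosen initial state $|b\rangle = \sum_j \beta_j |u_j\rangle$, then measure the last two registers and repeat enough times that every distinct eigenvalue is observed. The per-run cost is $\widetilde{O}(s\rho\|M\|_{\max}/\epsilon_1\epsilon_2)$, so the task reduces to controlling the number of repetitions.

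For this measurement analysis I will mimic the argument used to establish Corollary \ref{cor2 of thm2}. Let $d \leq n$ denote the number of distinct eigenvalues of $M$. Group the columns $|u_j\rangle$ of $U$ according to which eigenvalue they correspond to, and combine the amplitudes $\beta_j$ within each group; this rewrites $|b\rangle = \sum_{k=1}^d \gamma_k |V_k\rangle$, where $|V_k\rangle$ is a unit vector supported on the $k$-th eigenspace. After the algorithm of Theorem \ref{cor1}, the prepared state can be regrouped as
\[
\frac{1}{Z}\sum_{k=1}^d \gamma_k |V_k\rangle |\tilde{\sigma}_k\rangle |\tilde{\theta}_k\rangle, \qquad Z^2 = \sum_{k=1}^d |\gamma_k|^2,
\]
so measuring the last two registers returns the pair $(\tilde{\sigma}_k,\tilde{\theta}_k)$ with probability $p_k = |\gamma_k|^2/Z^2$. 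In particular $p_{\max} \geq 1/d \geq 1/n$, and Lemma \ref{distribution} then guarantees that $O(n\log n)$ independent repetitions suffice to observe every distinct eigenvalue with probability close to $1$.

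Multiplying the $O(n\log n)$ repetitions by the per-run preparation cost yields the claimed total complexity $\widetilde{O}(sn\rho\|M\|_{\max}/\epsilon_1\epsilon_2)$; the one-time cost $O(sn)$ of constructing the sparse-access oracle to $M$ from its classical description is absorbed by the same bound. The main subtlety I foresee is the need to ensure that the initial state $|b\rangle$ has nonzero overlap with \emph{every} eigenspace so that each $\gamma_k \neq 0$; this can be handled by choosing $|b\rangle$ to be a generic superposition (for example, the uniform superposition over the computational basis), and, if desired, by repeating the whole experiment with a small number of independently chosen initial states to exclude the measure-zero event of an unlucky alignment with a proper invariant subspace.
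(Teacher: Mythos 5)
Your overall framework — apply Theorem \ref{cor1} to a suitable initial state, measure the last two registers, and bound the number of repetitions via Lemma \ref{distribution} — matches the paper's, but there is a genuine gap in your choice of initial state. You propose the uniform superposition over the computational basis, $\tfrac{1}{\sqrt n}\sum_j |j\rangle$, and then acknowledge that some $\gamma_k$ might vanish, waving this away as a "generic" or "measure-zero" concern. For a \emph{fixed} matrix $M$ this is not a probabilistic event at all: a normal matrix can easily have an eigenvector orthogonal to the all-ones vector (e.g.\ the vector $\tfrac{1}{\sqrt 2}(|0\rangle - |1\rangle)$), in which case your initial state has exactly zero overlap with that eigenspace, the corresponding eigenvalue never appears, and no number of repetitions recovers it. Your fallback of "repeating with a small number of independently chosen initial states" also does not give a controlled bound: a Haar-random state has nonzero overlap almost surely, but that overlap can be exponentially small, and the bound you need is $p_{\min}$-dependent, not just $p_{\max}\ge 1/d$.

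The paper closes this gap with a specific trick your proposal misses: it takes the initial state to be the maximally entangled state $|\phi\rangle = \tfrac{1}{\sqrt n}\sum_{j=1}^n |j,j\rangle$ on a doubled register. This state has the basis-independence property
\begin{equation*}
\frac{1}{\sqrt n}\sum_{j=1}^n |j,j\rangle \;=\; \frac{1}{\sqrt n}\sum_{j=1}^n |u_j\rangle\otimes|\bar u_j\rangle
\end{equation*}
for \emph{any} orthonormal basis $\{|u_j\rangle\}$, so applying the algorithm of Theorem \ref{cor1} to the first register automatically yields $\tfrac{1}{\sqrt n}\sum_j |u_j,\bar u_j\rangle|\tilde\sigma_j\rangle|\tilde\theta_j\rangle$ with every coefficient exactly $1/\sqrt n$. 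The measurement distribution over distinct eigenvalues is then $p_\lambda = m_\lambda/n \ge 1/n$ deterministically, so $O(n\log n)$ repetitions suffice unconditionally. Replace your single-register uniform superposition by this maximally entangled state (the extra register is carried along passively and does not change the per-run cost) and the argument goes through.
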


\begin{proof}
We choose the initial state as maximally mixed state
$
|\phi\rangle = \frac{1}{\sqrt{n}} \sum_{j=1}^n |j,j\rangle.
$
Assume that the eigenvalue decomposition of $M$ is
$\sum_{j=1}^n \sigma_j e^{2 \pi i\theta_j} |u_j\rangle \langle u_j|$. Then we have
$
|\phi\rangle = \frac{1}{\sqrt{n}} \sum_{j=1}^n |u_j,\bar{u}_j\rangle.
$
By Corollary \ref{cor1}, there is a quantum
algorithm that returns
$
\frac{1}{\sqrt{n}} \sum_{j=1}^n |u_j,\bar{u}_j\rangle |\tilde{\sigma}_j\rangle |\tilde{\theta}_j\rangle 
$
in time $\widetilde{O}(s\rho\|M\|_{\max}/\epsilon_1\epsilon_2)$, where  $|\sigma_j - \tilde{\sigma}_j| \leq \epsilon_1$,
$|\theta_j - \tilde{\theta}_j| \leq \epsilon_2$. By Lemma \ref{distribution},
to obtain all the eigenvalues, it suffices to make $O(n\log n)$ measurements.
\end{proof}

\subsection{Proof of Theorem \ref{thm1}}

For any $j$, denote $|v_j\rangle = e^{-2 \pi i \theta_j} |u_j\rangle$,
then the SVD of $M$ is
$\sum_{j=1}^n  \sigma_j |u_j\rangle \langle v_j|$.
We can rewrite the initial state $|0\rangle|b\rangle$ as
$
|0\rangle|b\rangle = \sum_{j=1}^n \frac{\beta_j}{\sqrt{2}} ( |w_j^+\rangle + |w_j^-\rangle ).
$
The main technique is described in the following lemma.

\begin{claim}
\label{technical lemma}
With the same notation as Theorem \ref{thm1}, then there is a unitary operator $W$ that performs
\begin{equation} \label{eq5}
W: \quad \sum_{j=1}^n \beta_j |0\rangle |u_j\rangle
\mapsto \sum_{j=1}^n \beta_j e^{-2 \pi i \theta_j} |0\rangle |u_j\rangle.
\end{equation}
The operator $W$ is implemented in time $O(T/\epsilon_1)$.

\end{claim}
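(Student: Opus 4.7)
The plan is to implement $W$ by sandwiching a sign-dependent phase between one application of QSVE and its inverse. The key identity is
\[
|w_j^+\rangle - |w_j^-\rangle \;=\; \sqrt{2}\,|1\rangle|v_j\rangle \;=\; \sqrt{2}\,e^{-2\pi i\theta_j}|1\rangle|u_j\rangle,
\]
which exposes the target phase $e^{-2\pi i\theta_j}$ as the relative sign between the $|w_j^+\rangle$ and $|w_j^-\rangle$ branches. Combined with $|0\rangle|u_j\rangle = \tfrac{1}{\sqrt 2}(|w_j^+\rangle + |w_j^-\rangle)$, the task reduces to flipping the relative sign of the $|w_j^-\rangle$ component and then swapping the first qubit from $|1\rangle$ back to $|0\rangle$ by an $X$ gate.

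Explicitly, let $V$ denote the QSVE unitary of (\ref{Quantum:SVD1}) acting together with an ancilla register initialized to $|0\rangle$. On input $\sum_j \beta_j|0\rangle|u_j\rangle|0\rangle$, first apply $V$ to obtain $\sum_j \tfrac{\beta_j}{\sqrt 2}(|w_j^+\rangle|\tilde\sigma_j\rangle + |w_j^-\rangle|-\tilde\sigma_j\rangle)$. Second, apply a $Z$ gate to the sign qubit of the ancilla register (equivalently, multiply the state by $-1$ exactly when the ancilla encodes a negative number), producing $\sum_j \tfrac{\beta_j}{\sqrt 2}(|w_j^+\rangle|\tilde\sigma_j\rangle - |w_j^-\rangle|-\tilde\sigma_j\rangle)$. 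Third, apply $V^\dagger$: since $V$ sends $|w_j^\pm\rangle|0\rangle$ to $|w_j^\pm\rangle|\pm\tilde\sigma_j\rangle$, both branches uncompute cleanly, leaving $\sum_j \tfrac{\beta_j}{\sqrt 2}(|w_j^+\rangle - |w_j^-\rangle)|0\rangle = \sum_j \beta_j e^{-2\pi i\theta_j}|1\rangle|u_j\rangle|0\rangle$. Finally, apply $X$ to the first qubit and discard the (now trivial) ancilla to obtain $\sum_j \beta_j e^{-2\pi i\theta_j}|0\rangle|u_j\rangle$, as claimed.

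The running time is dominated by the two QSVE invocations and hence is $O(T/\epsilon_1)$; the intermediate $Z$ and the final $X$ are single-qubit gates. The only subtle point is the clean uncomputation in the third step: both $|w_j^+\rangle|\tilde\sigma_j\rangle$ and $|w_j^-\rangle|-\tilde\sigma_j\rangle$ lie in the image of $V$ restricted to $|w_j^\pm\rangle\otimes|0\rangle$, so $V^\dagger$ genuinely returns the ancilla to $|0\rangle$ on both branches because $V$ is block-diagonal in the $\{|w_j^\pm\rangle\}$ basis. Any $O(\epsilon_1)$ approximation error in $\tilde\sigma_j$ propagates only additively into the ancilla before uncomputation and is absorbed into the overall precision of the algorithm.
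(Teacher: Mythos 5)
Your proposal is correct and follows essentially the same route as the paper: apply the QSVE transformation, flip the sign of the negative-eigenvalue branch (which the paper also justifies via the QPE sign-encoding), uncompute with the inverse QSVE to obtain $\sum_j \beta_j|1\rangle|v_j\rangle$, and finish with a Pauli-$X$, for a total cost dominated by the two QSVE calls, $O(T/\epsilon_1)$. Your explicit remark that the $Z$ acts on the qubit distinguishing ancilla values above $Q/2$ is exactly the mechanism the paper appeals to implicitly.
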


\begin{proof}
By viewing $|0,b\rangle$ as the initial state, 
if we perform the transformation (\ref{Quantum:SVD1}) to it, then we obtain
$
\sum_{j=1}^n \frac{\beta_j}{\sqrt{2}} ( |w_j^+\rangle |\tilde{\sigma}_j\rangle + |w_j^-\rangle  |-\tilde{\sigma}_j\rangle ).
$
Since $-\tilde{\sigma}_j \leq 0$, we can
add a negative sign to the second term to change it into
$
\sum_{j=1}^n \frac{\beta_j}{\sqrt{2}} ( |w_j^+\rangle |\tilde{\sigma}_j\rangle - |w_j^-\rangle  |-\tilde{\sigma}_j\rangle ).
$
Based on analysis of QPE in Appendix \ref{Quantum phase estimation} about the signs of eigenvalues,
the operation here is feasible.
Now we perform the inverse procedure of the transformation (\ref{Quantum:SVD1}), then
we obtain
$
\sum_{j=1}^n \frac{\beta_j}{\sqrt{2}} ( |w_j^+\rangle - |w_j^-\rangle  )
= \sum_{j=1}^n \beta_j |1\rangle |v_j\rangle.
$
Finally, apply Pauli-$X$ to the first register to generate
$
\sum_{j=1}^n \beta_j |0\rangle |v_j\rangle
=\sum_{j=1}^n \beta_j e^{-2 \pi i \theta_j} |0\rangle |u_j\rangle.
$
The complexity is determined by the implementation of the transformation (\ref{Quantum:SVD1}), which equals $O(T/\epsilon_1)$.
\end{proof}

The unitary $W$ constructed above has eigenvectors $|u_j\rangle$ and eigenvalues $e^{-2\pi i \theta_j}$, so by quantum phase estimation, we can approximate $\theta_j$. As for $\sigma_j$, we can apply quantum singular value estimation (\ref{Quantum:SVD1}) to estimate them.
Now we can state our
quantum algorithm to prepare the state
(\ref{Quantum:EVD}).

\begin{breakablealgorithm}
\label{alg1}
\caption{Quantum eigenvalue estimation of normal matrices}
\begin{algorithmic}[1]
\REQUIRE
(1). An $n\times n$ normal matrix $M$ with (unknown) eigenvalues $\{\sigma_1e^{2\pi i \theta_1},\ldots,\sigma_ne^{2\pi i \theta_n}\}$ and (unknown) eigenvectors $\{|u_1\rangle,\ldots,|u_n\rangle\}$. \\
(2). A quantum state $|b\rangle$ which formally equals $\sum_{j=1}^n \beta_j |u_j\rangle$. 
\\
(3). The precisions $\epsilon_1,\epsilon_2 \in (0,1)$.
\ENSURE The quantum state
\begin{equation}  \label{claimed result}
\sum_{j=1}^n \beta_j |u_j\rangle |\tilde{\sigma}_j\rangle |\tilde{\theta}_j\rangle,
\end{equation} 
where $|\tilde{\sigma}_j- \sigma_j|\leq \epsilon_1$ and $|\tilde{\theta}_j-\theta_j|\leq \epsilon_2$ for all $j$.
\STATE Apply Claim \ref{technical lemma} to construct $W$ defined by equation (\ref{eq5}).
\STATE Apply QPE (see Algorithm \ref{alg:qpe} in Appendix \ref{Quantum phase estimation}) to $W$ with initial state $|0\rangle|b\rangle$.
\STATE Apply quantum SVD (see equation (\ref{Quantum:SVD1})) to estimate the singular values of $M$.
\end{algorithmic}
\end{breakablealgorithm}

In the following, we show more analysis about the steps 2 and 3 of Algorithm \ref{alg1}.

\begin{claim}
The state obtained in step 2 of Algorithm \ref{alg1} is an approximation of
\begin{equation} \label{eq9}
 \sum_{j=1}^n \beta_j |0\rangle |u_j\rangle |\tilde{\theta}_j\rangle.
\end{equation}
The complexity to obtain this state is $O(T/\epsilon_1\epsilon_2)$.
\end{claim}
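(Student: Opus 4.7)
The plan is to invoke the spectral structure of $W$ produced by Claim \ref{technical lemma} and run standard QPE on it. By construction, $W$ acts as identity on the flag qubit and has $W|0\rangle|u_j\rangle = e^{-2\pi i\theta_j}|0\rangle|u_j\rangle$, so $\{|0\rangle|u_j\rangle\}_{j=1}^n$ is an eigenbasis of $W$ on the $|0\rangle$-flagged subspace, with associated phases $-\theta_j \bmod 1$. Since $\{|u_j\rangle\}$ is an orthonormal basis, the input state admits the decomposition $|0\rangle|b\rangle = \sum_{j=1}^n \beta_j |0\rangle|u_j\rangle$.

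Next, I would apply the textbook QPE routine (Algorithm \ref{alg:qpe} in the appendix) to the unitary $W$ with precision $\epsilon_2$. By linearity of QPE on the eigenbasis, the output is approximately
\[
\sum_{j=1}^n \beta_j |0\rangle |u_j\rangle |\widetilde{-\theta_j}\rangle,
\]
where each $\widetilde{-\theta_j}$ is within $\epsilon_2$ of $-\theta_j \bmod 1$. To convert this into the form claimed by the statement, I would either (i) run QPE on $W^\dagger$ instead of $W$, or (ii) apply a cheap classical arithmetic circuit $|y\rangle \mapsto |(1-y) \bmod 1\rangle$ to the phase register; either yields $\sum_j \beta_j|0\rangle|u_j\rangle|\tilde{\theta}_j\rangle$ with $|\tilde{\theta}_j - \theta_j| \leq \epsilon_2$.

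For the runtime, QPE to accuracy $\epsilon_2$ invokes controlled powers of $W$ totalling $O(1/\epsilon_2)$ uses of $W$, and each use of $W$ costs $O(T/\epsilon_1)$ by Claim \ref{technical lemma}. Multiplying gives $O(T/(\epsilon_1\epsilon_2))$, matching the stated bound; the quantum Fourier transform in QPE contributes only a polylogarithmic factor absorbed in $\widetilde{O}$.

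The only delicate point is purely notational: QPE naturally outputs an estimate of the phase modulo $1$, so one must handle the sign convention (since $W$ carries $-\theta_j$ rather than $\theta_j$) and ensure that the $|0\rangle$-flag register introduced in Claim \ref{technical lemma} is carried through untouched so that QPE sees a genuine unitary with the stated eigenstructure. Because $W$ is block-diagonal with respect to the flag qubit by its very construction, this bookkeeping is automatic and no further technical work is needed.
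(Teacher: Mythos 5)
Your proposal is correct and follows essentially the same route as the paper: decompose $|0\rangle|b\rangle$ in the eigenbasis $\{|0\rangle|u_j\rangle\}$ of $W$, run standard QPE with $O(1/\epsilon_2)$ controlled applications of $W$, each costing $O(T/\epsilon_1)$ by Claim \ref{technical lemma}, giving $O(T/\epsilon_1\epsilon_2)$ overall. The only cosmetic difference is the sign convention: the paper absorbs the phase $e^{-2\pi i\theta_j}$ by applying the forward rather than the inverse Fourier transform to the control register, whereas you propose using $W^\dagger$ or a classical remapping of the phase register, which is equivalent.
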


\begin{proof}
This is a direct application of QPE. More precisely, to apply QPE,
we first generate the state
$
|0\rangle|b\rangle \otimes \frac{1}{\sqrt{m}} \sum_{k=0}^{m-1} |k\rangle =
\frac{1}{\sqrt{m}} \sum_{j=1}^n \sum_{k=0}^{m-1} \beta_j |0\rangle |u_j\rangle  |k\rangle
$
by Hadamard transform,
where $m$ is determined by the precision $\epsilon_2$. Usually, it equals
$O(1/\epsilon_2)$.
Then view $|k\rangle$ as a control qubit to apply $W^k$ to $|0\rangle|b\rangle$, we obtain
$
 \frac{1}{\sqrt{m}} \sum_{j=1}^n \sum_{k=0}^{m-1} \beta_j e^{-2 \pi i \theta_j k} |0\rangle |u_j\rangle |k\rangle.
$
Finally, apply quantum Fourier transform to $|k\rangle$
to generate
$
\frac{1}{m} \sum_{j=1}^n \sum_{l=0}^{m-1} \beta_j
(\sum_{k=0}^{m-1}e^{2\pi i k(\frac{l}{m}-\theta_j)})
 |0\rangle |u_j\rangle  |l\rangle.
$
Similar to the analysis of QPE, the final state  is an approximation of (\ref{eq9}).
The complexity comes from
QPE, which equals $O(T/\epsilon_1\epsilon_2)$.
\end{proof}

In the state (\ref{eq9}), we already obtain approximations of the
phases. As for the singular values,
we can apply the quantum singular value estimation technique, 
which is a simple application of the transformation (\ref{Quantum:SVD1}). 
More precisely,
apply (\ref{Quantum:SVD1})  to
the state (\ref{eq9}), then we have
$
\sum_{j=1}^n \frac{\beta_j}{\sqrt{2}}( |w_j^+\rangle |\tilde{\sigma}_j\rangle + |w_j^-\rangle  |-\tilde{\sigma}_j\rangle)
|\tilde{\theta}_j\rangle.
$
Apply the oracle $|x\rangle|0\rangle \mapsto
|x\rangle||x|\rangle$ to the second register to obtain
$
\sum_{j=1}^n \frac{\beta_j}{\sqrt{2}}( |w_j^+\rangle |\tilde{\sigma}_j\rangle + |w_j^-\rangle  |-\tilde{\sigma}_j\rangle )
|\tilde{\sigma}_j\rangle |\tilde{\theta}_j\rangle.
$
The implementation of this oracle
is discussed in detail in Appendix
\ref{Quantum phase estimation} (see equation (\ref{oracle})).
Finally, apply the inverse of (\ref{Quantum:SVD1}) to yield
the claimed state (\ref{claimed result}).
The complexity is $O(T/\epsilon_1)$. Therefore,
the total cost of Algorithm \ref{alg1} is $O(T/\epsilon_1\epsilon_2)$ as claimed.

\section{Conclusions}

For the problem of estimating eigenvalues,
we proposed a quantum algorithm based on quantum phase estimation and quantum
differential equation solver for diagonalizable matrices 
whose eigenvalues are real,
and
a quantum algorithm based on quantum phase estimation and quantum singular value estimation for normal matrices. 
The output is a state with the first register storing the eigenvalues, the second register storing the corresponding
eigenvectors.
We also generalized the first quantum algorithm to estimate 
all the complex eigenvalues of any diagonalizable matrix.
The complexities of the quantum algorithms are dominated by certain quantum
linear algebraic operations, which are usually exponentially faster than the corresponding classical operations.

This work provides a new attempt to solve the eigenvalue problem in the quantum computer. 
However, there are still many problems that need to be solved.
For instance, the complexities of our quantum algorithms on the condition number $\kappa(E)$ and the precision $\epsilon$ are not optimal. 
There should be a way to improve the dependence into linear. Our quantum algorithm for computing complex eigenvalues is far from optimal. Besides the complexity, it needs the initial state has the form (\ref{initial state}). 
Currently, we do not know how to prepare this kind of initial state.

\section{Acknowledgement}

This work was supported by the QuantERA ERA-NET Cofund in Quantum Technologies implemented within the European
Union's Horizon 2020 Programme (QuantAlgo project), and EPSRC grants EP/L021005/1 and EP/R043957/1.
No new data were created during this study.

\appendix

\section{Quantum phase estimation: brief overview}
\label{Quantum phase estimation}

Quantum phase estimation (QPE) \cite{Kitaev}
is a useful technique to design quantum algorithms. Many quantum algorithms are related to it, 
such as Shor's algorithm \cite{shor1999polynomial} 
and the HHL algorithm to solve linear systems \cite{HHL}.
QPE is an algorithm to estimate the eigenvalues of unitary matrices.
As a simple generalization, 
it can be used to estimate the eigenvalues of
Hermitian matrices \cite{AbramsLloyd} or the singular
values of general matrices \cite{HHL,kerenidis_et_al:LIPIcs:2017:8154}.
In this paper, QPE is also an important technique we will apply to estimate the eigenvalues of more general matrices.
Thus in the following, we briefly review the QPE to estimate the eigenvalues of Hermitian matrices, the error and complexity analyses
can be found in \cite[Section 5.2]{nc}.
One aspect we want to emphasize is
how to determine the signs of the eigenvalues.

\subsection{Apply QPE to estimate the eigenvalues of Hermitian matrices} 

Let $H$ be an $n\times n$ Hermitian matrices with eigenvalues $\lambda_1,\ldots,\lambda_n$ and eigenvectors $|u_1\rangle,\ldots,|u_n\rangle$. Let $|b\rangle$ be any given
quantum state, which can formally be written as a linear combinations
of the eigenvectors $|b\rangle = \sum_{k=1}^n \beta_k |u_k\rangle$.
Suppose we know a upper bound of the eigenvalues, then choose a upper bound $C>0$ such that $|\lambda_j /C|<1/2$.
Let $\epsilon$ be the precision to approximate the eigenvalues
and $\delta$ be the failure probability of the quantum algorithm,
denote $q=
\lceil \log 1/\epsilon \rceil+\lceil \log (2+1/2\delta) \rceil$ and $Q=2^q = O(1/\epsilon\delta)$,
then QPE can be stated as follows:

\vspace{.2cm}

\begin{breakablealgorithm}
\label{alg:qpe}
\caption{Quantum phase estimation (QPE)}
\begin{algorithmic}[1]
\REQUIRE
(1). An $n\times n$ Hermitian matrix $H$ with (unknown) eigenvalues $\{\lambda_1,\ldots,\lambda_n\}$ and (unknown) eigenvectors $\{|u_1\rangle,\ldots,|u_n\rangle\}$. \\
(2). A quantum state $|b\rangle$ which formally equals $\sum_{k=1}^n \beta_k |u_k\rangle$. 
\\
(3). The precision $\epsilon$ and failure probability $\delta$.
\ENSURE The quantum state
\be
\sum_{k=1}^n \beta_k |\tilde{\lambda}_k\rangle |u_k\rangle ,
\ee
where $|\tilde{\lambda}_k- \lambda_k|\leq \epsilon$ for all $k$.
\STATE Set the initial state  as
\be
|\psi_0\rangle = \frac{1}{\sqrt{Q}} \sum_{j=0}^{Q-1}|j\rangle |b\rangle.
\ee

\STATE Apply control operator $\sum_{j=0}^{Q-1} |j\rangle \langle j| \otimes 
e^{2 \pi i j H /C}$ to $|\psi_0\rangle$ to prepare
\be
\label{QPE:required state}
|\psi_1\rangle = \frac{1}{\sqrt{Q}} \sum_{j=0}^{Q-1} \sum_{k=1}^n \beta_k e^{2 \pi i j \lambda_k /C } |j\rangle |u_k\rangle.
\ee

\STATE Apply quantum inverse Fourier transform to the first register of $|\psi_1\rangle$
\be
|\psi_2\rangle = \frac{1}{Q}\sum_{l=0}^{Q-1}\sum_{k=1}^n \beta_k  \sum_{j=0}^{Q-1} e^{2 \pi i j (\frac{\lambda_k}{C} - \frac{l}{Q})} |l\rangle |u_k\rangle.
\ee
\end{algorithmic}
\end{breakablealgorithm}

\vspace{.2cm}

Denote
\bea
\Lambda_k &=& 
 \begin{cases} 
    \displaystyle \{l\in \{0,1,\ldots,Q-1\}: |\frac{\lambda_k}{C}-\frac{l}{Q}|\leq \epsilon \},   & \text{if $\lambda_k\geq 0$,} \\
   \displaystyle  \{l\in \{0,1,\ldots,Q-1\}: |1+\frac{\lambda_k}{C}-\frac{l}{Q}|\leq \epsilon \}, & \text{if $\lambda_k<0$.}
  \end{cases}
  \\
 |\Lambda_k\rangle &=& \frac{1}{Q}  \sum_{l \in \Lambda_k}
\sum_{j=0}^{Q-1} e^{2 \pi i j (\frac{\lambda_k}{C} - \frac{l}{Q})}
\, |l\rangle.
\eea
In theory, we do not know $\Lambda_k$. The notation introduced here is to simplify the expression of $|\psi_2\rangle$. 
Then we can rewrite $|\psi_2\rangle$ as
\be
|\psi_2\rangle = \sum_{k=1}^n \beta_k |\Lambda_k\rangle
|u_k\rangle + {\rm others}.
\ee
It is shown that the amplitude of the 
first term is larger than $\sqrt{1-\delta}$. 
Thus, if we choose $\delta$ small enough (e.g. 0.01 or $1/{\rm poly}\log n$), then we can approximately  write $|\psi_2\rangle$ as
\be \label{eq-qpe}
|\psi_2\rangle \approx \sum_{k=1}^n \beta_k |\Lambda_k\rangle
|u_k\rangle.
\ee
%Moreover, $|\Lambda_k\rangle \approx \sum_{l \in \Lambda_k}|l\rangle$ and $|\Lambda_k| \approx \lceil 2+1/2\delta \rceil$.
Any integer $l$ in $\Lambda_k$ provides an $\epsilon$-approximation 
$l/Q$ of $\lambda_k/C$ or $1+\lambda_k/C$. 

One problem we need pay attention to the above procedure is the signs of the eigenvalues. In $|\psi_1\rangle$, if $\lambda_k\geq 0$, then
the coefficient of $|j\rangle|u_k\rangle$ is $\beta_k e^{2 \pi i j \lambda_k/C }$. However, if $\lambda_k < 0$, then
the coefficient becomes $\beta_k e^{2 \pi i j (1+\lambda_k /C) }$ as we need to make sure the phase lies between 0 and $2\pi$.
As a result, in QPE we actually obtain approximations of $\lambda_k /C$ if $\lambda_k \geq 0$ and of $1+\lambda_k /C$ if $\lambda_k < 0$.
Note that $C$ is chosen such that $|\lambda_k /C|<1/2$, thus in the former case, if $l/Q$ is an approximation of $\lambda_k /C$, then $l\leq Q/2$.
In the latter case, if $l/Q$ is an approximation of $1+\lambda_k /C$, then
$(Q-l)/Q$ is an approximation of $-\lambda_k /C$. Hence $l>Q/2$.
In conclusion, if integers in $\Lambda_k$ are smaller 
than or equal to $Q/2$,
then we know $\lambda_k\geq 0$, otherwise  $\lambda_k<0$.
Therefore, based on the integers in $\Lambda_k$, we can determine the signs of the eigenvalues.

The complexity of QPE is dominated by the second step.
If the complexity to implement the unitary $e^{i H t'}$ to precision $\epsilon'$ in the quantum circuit is $T(t',\epsilon',n)$, then
the complexity of QPE to compute $\epsilon$-approximations of eigenvalues of Hermitian matrices is 
$T(C/\epsilon \delta,\epsilon',n)$.
For example, for an $s$ sparse $n\times n$ Hermitian matrix 
$H=(h_{ij})$, it is shown in
\cite{low2017optimal} that
$T(t',\epsilon',n)= O((st'\|H\|_{\max} + \frac{\log 1/\epsilon'}{\log\log 1/\epsilon'}) \log n)$, which is optimal
at the parameters $t'$ and $\epsilon'$. Here $\|H\|_{\max}=\max_{i,j}|h_{ij}|$.
In this case, the complexity of QPE is
$O(sC\|H\|_{\max}(\log n)/\epsilon\delta)$ if we set $\epsilon'=\epsilon$.

In the following of this paper, 
we will simply write equation (\ref{eq-qpe}) as 
\be \label{eq:QPE state}
\sum_{k=1}^n \beta_k |\tilde{\lambda}_k\rangle |u_k\rangle.
\ee
Although this state is not rigorous especially when $\lambda_k<0$,
it clearly describes the result of QPE.
It can be viewed as the quantum eigenvalue decomposition of Hermitian matrices.
Moreover, for simplicity we will ignore $\delta$ in the complexity
analysis by setting it as a small constant.

At the end of this part, we give a method to
implement the following transformation
\be \label{oracle}
\sum_{k=1}^n \beta_k |\tilde{\lambda}_k\rangle |u_k\rangle |0\rangle
\mapsto
\sum_{k=1}^n \beta_k |\tilde{\lambda}_k\rangle |u_k\rangle ||\tilde{\lambda}_k|\rangle,
\ee
which will be useful in Section \ref{Normal Matrices}.
Define the function $f:\mathbb{Z}_Q\rightarrow\mathbb{Z}_Q$ by
\be
f(l) =   \begin{cases}
    l   & \text{if $l\leq Q/2$,} \\
    Q-l & \text{if $l> Q/2$.}
  \end{cases}
\ee
It defines an oracle
$
U_f:|x,y\rangle \mapsto |x,y\oplus f(x)\rangle.
$
Based on this oracle, we can implement
(\ref{oracle}) via
\be
\sum_{k=1}^n \beta_k \sum_{l \in \Lambda_k}\Lambda_{kl} \, |l\rangle |u_k\rangle|0\rangle
\mapsto
\sum_{k=1}^n \beta_k \sum_{l \in \Lambda_k} \Lambda_{kl} \, |l\rangle
|u_k\rangle |f(l)\rangle.
\ee
The equivalence comes naturally from the above analysis about the signs of eigenvalues. If $\lambda_k\geq 0$, then 
$|\lambda_k|=\lambda_k$ and $l\leq Q/2$ for all $l\in \Lambda_k$.
Thus $f(l)=l$.
If $\lambda_k< 0$, then $|\lambda_k|=-\lambda_k$
and $l>Q/2$ for all $l\in \Lambda_k$.
Moreover, $(Q-l)/Q$ are approximations of $\lambda_k/C$
for all $l\in \Lambda_k$, thus $f(l)=Q-l$.

\subsection{Apply QPE to estimate the singular values of matrices} 

For any matrix $M=(m_{ij})_{m\times n}$, in a quantum computer we can 
compute its singular value decomposition (SVD). More precisely,
assume that its SVD is $M=\sum_{j=1}^d \sigma_j |u_j\rangle \langle v_j|$, where $d=\min\{m,n\}$. Denote
\be
\widetilde{M} = \left(\begin{array}{cc}
0 & M  \\
M^\dag & 0
\end{array}\right),
\ee
which is Hermitian.
The eigenvalues of $\widetilde{M}$ are $\{\pm \sigma_j: j=1,\ldots,d\}$.
The corresponding eigenvectors are
\begin{equation}
 |w_j^\pm\rangle
:=\frac{1}{\sqrt{2}} (|0\rangle|u_j\rangle \pm |1\rangle|v_j\rangle),
\quad
j=1,\ldots,d.
\end{equation}
Based on QPE, we can implement the following transformation
(see equation (\ref{eq:QPE state}))
\begin{equation} \label{Quantum:SVD}
\sum_{j=1}^d \beta_j^+ |w_j^+\rangle + \beta_j^- |w_j^-\rangle
\mapsto \sum_{j=1}^d \beta_j^+ |w_j^+\rangle |\tilde{\sigma}_j\rangle + \beta_j^- |w_j^-\rangle |-\tilde{\sigma}_j\rangle
\end{equation}
in a quantum computer  \cite{block-encoding}, where $|\sigma_j-\tilde{\sigma}_j|\leq \epsilon$. 
Based on the analysis about QPE, the minus sign in equation (\ref{Quantum:SVD}) is reasonable.

If $M$ is $s$ sparse, then
we can implement (\ref{Quantum:SVD}) in cost
$O(sC\|M\|_{\max}(\log
(m+n))/\epsilon)$.
Finally, we conclude the above analysis into the following proposition.

\begin{prop}
\label{prop:qsvd}
Let $M$ be an $m\times n$ matrix.
Let $C$ be a upper bound of its singular values.
If $M$ is $s$ sparse, then
we can implement (\ref{Quantum:SVD}) in time
$O(\epsilon^{-1}sC\|M\|_{\max}\log
(m+n))$.
\end{prop}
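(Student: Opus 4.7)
The plan is to reduce the quantum singular value estimation transformation in (\ref{Quantum:SVD}) to a QPE computation on the Hermitian dilation $\widetilde{M}$, and then bound the cost via sparse Hamiltonian simulation as done in the Hermitian case treated earlier in this appendix.

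First I would verify the two structural properties of the dilation that control the complexity: (i) if $M$ is $s$-sparse then $\widetilde{M}$ is $s$-sparse too, because each row/column of $\widetilde{M}$ is a row/column of either $M$ or $M^\dagger$ (padded with zeros), and transposition does not change sparsity; (ii) $\|\widetilde{M}\|_{\max}=\|M\|_{\max}$ for the same reason. Also, since the eigenvalues of $\widetilde{M}$ are $\{\pm\sigma_j\}$, the upper bound $C$ on the singular values of $M$ serves as the rescaling constant required in the QPE analysis of the preceding subsection, so that $|\pm\sigma_j/C|<1/2$.

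Next I would execute the Hermitian QPE of Algorithm \ref{alg:qpe} on $\widetilde{M}$ with the input state decomposed in the eigenbasis as $\sum_j (\beta_j^+|w_j^+\rangle+\beta_j^-|w_j^-\rangle)$. The only nontrivial step is the controlled application of $e^{2\pi i j\widetilde{M}/C}$ for $j=0,\ldots,Q-1$ with $Q=O(1/\epsilon)$. Because of the way positive versus negative eigenvalues are encoded in QPE (as explained in the discussion around equation (\ref{eq-qpe})), the eigenvalue register will hold $\tilde{\sigma}_j$ for the $|w_j^+\rangle$ branch and $-\tilde{\sigma}_j$ for the $|w_j^-\rangle$ branch, yielding exactly the transformation (\ref{Quantum:SVD}).

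To bound the cost, I would plug in the sparse Hamiltonian simulation result of Low and Chuang \cite{low2017optimal} already invoked in the appendix: simulating $e^{i\widetilde{M} t'}$ on an $(m+n)$-dimensional system to precision $\epsilon'$ costs
\[
O\!\left(\bigl(st'\|\widetilde{M}\|_{\max}+\tfrac{\log(1/\epsilon')}{\log\log(1/\epsilon')}\bigr)\log(m+n)\right).
\]
Setting $t'=O(C/\epsilon)$ (as dictated by QPE on the rescaled operator $\widetilde{M}/C$), $\epsilon'=\Theta(\epsilon)$, absorbing the constant failure probability $\delta$ into the $O(\cdot)$, and using $\|\widetilde{M}\|_{\max}=\|M\|_{\max}$ yields the bound $O(\epsilon^{-1}sC\|M\|_{\max}\log(m+n))$, as claimed. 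The $\log(1/\epsilon)/\log\log(1/\epsilon)$ summand is dominated by the $sC\|M\|_{\max}/\epsilon$ term since $C\|M\|_{\max}\ge 1$ in any nontrivial regime, so no extra care is needed. The only conceptual step that warrants attention, rather than being purely mechanical, is the handling of the sign encoding in QPE for the negative eigenvalues $-\sigma_j$, but this is precisely what the sign-convention discussion in the preceding subsection was prepared to handle.
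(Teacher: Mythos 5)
Your proof is correct and follows essentially the same route as the paper: pass to the Hermitian dilation $\widetilde{M}$, note that it inherits $s$-sparsity and $\|\widetilde{M}\|_{\max}=\|M\|_{\max}$ from $M$, run the QPE of Algorithm~\ref{alg:qpe} with the sign-encoding convention already established, and invoke the sparse Hamiltonian simulation bound of \cite{low2017optimal} with $t'=O(C/\epsilon)$ and $\epsilon'=\Theta(\epsilon)$. The paper presents this as a one-line consequence of the preceding discussion in Appendix~\ref{Quantum phase estimation}; you have merely spelled out the same chain of reasoning.
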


\section{Verification of equations (\ref{verify1})-(\ref{verify3})}
\label{verification}
Recall from (\ref{coeff matrix}), (\ref{eq:approsolu}) that
\begin{eqnarray*}
C_{m,k}(2\pi iM\Delta t) &=& \sum_{p=0}^{m(k+1)} |p\rangle \langle p|\otimes I - \sum_{p=0}^{m-1} \sum_{q=1}^{k} |p(k+1)+q\rangle\langle p(k+1)+q-1|\otimes \frac{2\pi iM\Delta t}{q} \\
&& -\, \sum_{p=0}^{m-1} \sum_{q=0}^k |(p+1)(k+1)\rangle\langle p(k+1)+q|\otimes I, \\
\tilde{\x} &=& \sum_{p=0}^{m-1} \sum_{q=0}^k |p(k+1)+q\rangle |\x_{p,q}\rangle + |m(k+1)\rangle |\x_{m,0}\rangle.
\end{eqnarray*}
Then
\[
C_{m,k}(2 \pi iM\Delta t) |m(k+1)\rangle |\x_{m,0}\rangle
=|m(k+1)\rangle |\x_{m,0}\rangle.
\]
Moreover,
\begin{eqnarray*}
&& C_{m,k}(2\pi iM\Delta t) \sum_{p=0}^{m-1} \sum_{q=0}^k |p(k+1)+q\rangle |\x_{p,q}\rangle \\
&=& \sum_{p=0}^{m-1} \sum_{q=0}^k |p(k+1)+q\rangle |\x_{p,q}\rangle 
- \sum_{p=0}^{m-1} \sum_{q=1}^{k} |p(k+1)+q\rangle \otimes \frac{2\pi iM\Delta t}{q} |\x_{p,q-1}\rangle  \\
&& - \, \sum_{p=0}^{m-1}  |(p+1)(k+1)\rangle \sum_{q=0}^k |\x_{p,q}\rangle \\
&=& |0\rangle |\x_{0,0}\rangle - |m(k+1)\rangle \sum_{q=0}^k |\x_{p,q}\rangle + \sum_{p=1}^{m-1} |p(k+1)\rangle \left(|\x_{p,0}\rangle - \sum_{q=0}^k |\x_{p-1,q}\rangle\right)\\
&& + \, \sum_{p=0}^{m-1} \sum_{q=1}^{k} |p(k+1)+q\rangle \left( |\x_{p,q}\rangle - \frac{2\pi iM\Delta t}{q} |\x_{p,q-1}\rangle \right).
\end{eqnarray*}
Consequently, $|\x_{0,0}\rangle = |E_j\rangle$.
If $0\leq p\leq m-1$ and $1\leq q \leq k$, then
\[
|\x_{p,q}\rangle = \frac{2\pi iM\Delta t}{q} |\x_{p,q-1}\rangle = \frac{(2\pi iM\Delta t)^q}{q!} |\x_{p,0}\rangle.
\]
If $0\leq p \leq m$ and $q=0$, then
\[
|\x_{p,0}\rangle = \sum_{q=0}^k |\x_{p-1,q}\rangle
=\sum_{q=0}^k \frac{(2\pi iM\Delta t)^q}{q!} |\x_{p-1,0}\rangle = T_k(2\pi iM\Delta t) |\x_{p-1,0}\rangle.
\]

\section{Proof of Lemma \ref{distribution}}\label{uniform-distribution}
First we consider the special case of uniform distribution.
Denote $|\phi\rangle = \frac{1}{\sqrt{n}} \sum_{j=1}^n |j\rangle$. 
This corresponds to a uniform distribution, and $|j\rangle$ can be viewed as the $j$-th event.
Consider $|\phi\rangle^{\otimes m}$. 
Perform measurements on the basis states, then the probability to obtain all the basis states $|j_1,\ldots,j_m\rangle$ such that $\{j_1,\ldots,j_m\}=\{1,\ldots,n\}$ is
\[
P:=\frac{1}{n^m} \sum_{\substack{i_1+\cdots+i_n=m \\i_1,\cdots,i_n\geq 1}}\binom{m}{i_1,\ldots,i_n},
\]
where
\[
\binom{m}{i_1,\ldots,i_n}:=\frac{m!}{i_1!\cdots i_n!}
\]
is multinomial coefficient.

Let $x_1,\ldots,x_n$ be $n$ variables, then
\[
(x_1+\cdots+x_n)^m=\sum_{ i_1+\cdots+i_n=m}\binom{m}{i_1,\ldots,i_n} x_1^{i_1}\cdots x_n^{i_n}.
\]
Especially when $x_1=\cdots=x_n=1$, we have
\[
n^m=\sum_{ i_1+\cdots+i_n=m}\binom{m}{i_1,\ldots,i_n}.
\]
Note that $Pn^m$ refers to the number of monomials in $(x_1+\cdots+x_n)^m$ such that all $x_i$ appear. Thus,
\[
\sum_{\substack{i_1+\cdots+i_n=m \\i_1,\cdots,i_n\geq 1}}\binom{m}{i_1,\ldots,i_n}
\geq n^m-n(n-1)^m.
\]
The right hand side of the above inequality means that we set
$x_i=0$ and the remaining variables as 1 in $(x_1+\cdots+x_n)^m$. There are $n$ possibilities. However, it may happen that
different cases have common terms. Thus
\[
P\geq \frac{n^m-n(n-1)^m}{n^m}
=1-n(1-\frac{1}{n})^m
\approx 1 - n e^{-m/n}.
\]
Choose $m$ such that $n e^{-m/n} = \delta$ is small (say $0.01$), that is $m=n\log(n/\delta)$, then
$P\geq 1-\delta$.
The above analyses show that in a uniform distribution, to make sure every event happen, it suffices to make $m=O(n\log n)$ experiments.

In the general case, assume that $|\phi\rangle=\sum_{j=1}^n \sqrt{p_j}|j\rangle$, 
where $p_1+\cdots+p_n=1$. Also consider $|\phi\rangle^{\otimes m}$, then
\[
P=\sum_{\substack{i_1+\cdots+i_n=m \\i_1,\cdots,i_n\geq 1}}\binom{m}{i_1,\ldots,i_n} p_1^{i_1}\cdots p_n^{i_n}.
\]
Denote $p_{\max}=\max\{p_1,\ldots,p_n\}$. We assume that $p_{\max}\neq 1$, otherwise the distribution is trivial.
Then similar analysis shows that
\[
P\geq 1 - \sum_{j=1}^n (1-p_j)^m
\geq 1-n(1-p_{\max})^m
\approx 1- ne^{-mp_{\max}}.
\]
Thus it suffices to choose $m=p_{\max}^{-1}\log(n/\delta)$.
Since $\sum_j p_j=1$, we have $p_{\max}\geq 1/n$. Consequently, $m\leq n\log(n/\delta)$.

%-------------------------------------------------------------------------------

\bibliographystyle{plain}
\bibliography{main}

\providecommand{\noopsort}[1]{}\providecommand{\singleletter}[1]{#1}%
\begin{thebibliography}{10}

\bibitem{AbramsLloyd}
Daniel~S Abrams and Seth Lloyd.
\newblock Quantum algorithm providing exponential speed increase for finding
  eigenvalues and eigenvectors.
\newblock {\em Phys. Rev. Lett.}, 83(24):5162, 1999.

\bibitem{ambainis:hal-00678197}
Andris Ambainis.
\newblock {Variable time amplitude amplification and quantum algorithms for
  linear algebra problems}.
\newblock In Thomas~Wilke Christoph~D{\"u}rr, editor, {\em {STACS'12 (29th
  Symposium on Theoretical Aspects of Computer Science)}}, volume~14, pages
  636--647, Paris, France, February 2012. {LIPIcs}.

\bibitem{arbenz2012lecture}
Peter Arbenz, Daniel Kressner, and DME Z{\"u}rich.
\newblock Lecture notes on solving large scale eigenvalue problems.
\newblock {\em D-MATH, EHT Zurich}, 2, 2012.

\bibitem{banks2019pseudospectral}
Jess Banks, Jorge~Garza Vargas, Archit Kulkarni, and Nikhil Srivastava.
\newblock Pseudospectral shattering, the sign function, and diagonalization in
  nearly matrix multiplication time.
\newblock {\em arXiv preprint arXiv:1912.08805}, 2019.

\bibitem{bauer1960norms}
Friedrich~L Bauer and Charles~T Fike.
\newblock Norms and exclusion theorems.
\newblock {\em Numer. Math.}, 2(1):137--141, 1960.

\bibitem{beavers1973computational}
AN~Beavers and ED~Denman.
\newblock A computational method for eigenvalues and eigenvectors of a matrix
  with real eigenvalues.
\newblock {\em Numerische Mathematik}, 21(5):389--396, 1973.

\bibitem{bender2007making}
Carl~M Bender.
\newblock {Making sense of non-Hermitian Hamiltonians}.
\newblock {\em Reports on Progress in Physics}, 70(6):947, 2007.

\bibitem{bender1998real}
Carl~M Bender and Stefan Boettcher.
\newblock {Real spectra in non-Hermitian Hamiltonians having PT symmetry}.
\newblock {\em Physical Review Letters}, 80(24):5243, 1998.

\bibitem{berry2014high}
Dominic~W Berry.
\newblock High-order quantum algorithm for solving linear differential
  equations.
\newblock {\em J. Phys. A: Math. Theor.}, 47(10):105301, 2014.

\bibitem{berry2017quantum}
Dominic~W Berry, Andrew~M Childs, Aaron Ostrander, and Guoming Wang.
\newblock Quantum algorithm for linear differential equations with
  exponentially improved dependence on precision.
\newblock {\em Commu. Math. Phys.}, 356(3):1057--1081, 2017.

\bibitem{block-encoding}
Shantanav Chakraborty, Andr{\'a}s Gily{\'e}n, and Stacey Jeffery.
\newblock {The Power of Block-Encoded Matrix Powers: Improved Regression
  Techniques via Faster Hamiltonian Simulation}.
\newblock In Christel Baier, Ioannis Chatzigiannakis, Paola Flocchini, and
  Stefano Leonardi, editors, {\em 46th International Colloquium on Automata,
  Languages, and Programming (ICALP 2019)}, volume 132 of {\em Leibniz
  International Proceedings in Informatics (LIPIcs)}, pages 33:1--33:14,
  Dagstuhl, Germany, 2019. Schloss Dagstuhl--Leibniz-Zentrum fuer Informatik.

\bibitem{childs2017quantum}
Andrew~M Childs, Robin Kothari, and Rolando~D Somma.
\newblock Quantum algorithm for systems of linear equations with exponentially
  improved dependence on precision.
\newblock {\em SIAM J. Comput.}, 46(6):1920--1950, 2017.

\bibitem{childs2019quantum}
Andrew~M Childs and Jin-Peng Liu.
\newblock Quantum spectral methods for differential equations.
\newblock {\em arXiv preprint arXiv:1901.00961}, 2019.

\bibitem{Daskin}
Anmer Daskin, Ananth Grama, and Sabre Kais.
\newblock A universal quantum circuit scheme for finding complex eigenvalues.
\newblock {\em Quantum Inf. Process.}, 13(2):333--353, 2014.

\bibitem{drazin1962criteria}
Michael~P Drazin and Emilie~V Haynsworth.
\newblock Criteria for the reality of matrix eigenvalues.
\newblock {\em Mathematische Zeitschrift}, 78(1):449--452, 1962.

\bibitem{elsner1998normal}
Ludwig Elsner and Kh~D Ikramov.
\newblock Normal matrices: an update.
\newblock {\em Linear Algebra Appl.}, 285(1-3):291--303, 1998.

\bibitem{van1983matrix}
Gene~H Golub and Charles~F Van~Loan.
\newblock {\em Matrix computations}.
\newblock Johns Hopkins University Press, Baltimore, 4 edition, 2013.

\bibitem{grone1987normal}
Robert Grone, Charles~R Johnson, Eduardo~M Sa, and Henry Wolkowicz.
\newblock Normal matrices.
\newblock {\em Linear Algebra Appl.}, 87:213--225, 1987.

\bibitem{HHL}
Aram~W Harrow, Avinatan Hassidim, and Seth Lloyd.
\newblock Quantum algorithm for linear systems of equations.
\newblock {\em Phys. Rev. Lett.}, 103(15):150502, 2009.

\bibitem{higham2008functions}
Nicholas~J Higham.
\newblock {\em Functions of matrices: theory and computation}.
\newblock SIAM, 2008.

\bibitem{huss1997parallelizable}
Steven Huss-Lederman, Anna Tsao, and Thomas Turnbull.
\newblock A parallelizable eigensolver for real diagonalizable matrices with
  real eigenvalues.
\newblock {\em SIAM Journal on Scientific Computing}, 18(3):869--885, 1997.

\bibitem{QMA1}
Julia Kempe, Alexei Kitaev, and Oded Regev.
\newblock {The complexity of the local Hamiltonian problem}.
\newblock {\em SIAM J. Comput.}, 35(5):1070--1097, 2006.

\bibitem{kerenidis2017quantum}
Iordanis Kerenidis and Anupam Prakash.
\newblock Quantum gradient descent for linear systems and least squares.
\newblock {\em arXiv preprint arXiv:1704.04992}, 2017.

\bibitem{kerenidis_et_al:LIPIcs:2017:8154}
Iordanis Kerenidis and Anupam Prakash.
\newblock {Quantum Recommendation Systems}.
\newblock In Christos~H. Papadimitriou, editor, {\em 8th Innovations in
  Theoretical Computer Science Conference (ITCS 2017)}, volume~67 of {\em
  Leibniz International Proceedings in Informatics (LIPIcs)}, pages
  49:1--49:21, Dagstuhl, Germany, 2017. Schloss Dagstuhl--Leibniz-Zentrum fuer
  Informatik.

\bibitem{Kitaev}
A~Yu Kitaev.
\newblock {Quantum measurements and the Abelian stabilizer problem}.
\newblock {\em arXiv preprint quant-ph/9511026}, 1995.

\bibitem{low2017optimal}
Guang~Hao Low and Isaac~L Chuang.
\newblock {Optimal Hamiltonian simulation by quantum signal processing}.
\newblock {\em Phys. Rev. Lett.}, 118(1):010501, 2017.

\bibitem{low2019hamiltonian}
Guang~Hao Low and Isaac~L Chuang.
\newblock Hamiltonian simulation by qubitization.
\newblock {\em Quantum}, 3:163, 2019.

\bibitem{macklin1984normal}
Philip~A Macklin.
\newblock Normal matrices for physicists.
\newblock {\em American J. Phys.}, 52(6):513--515, 1984.

\bibitem{mostafazadeh2010pseudo}
Ali Mostafazadeh.
\newblock {Pseudo-Hermitian representation of quantum mechanics}.
\newblock {\em International Journal of Geometric Methods in Modern Physics},
  7(07):1191--1306, 2010.

\bibitem{nc}
Michael~A Nielsen and Isaac Chuang.
\newblock Quantum computation and quantum information, Cambridge University
  Press, Cambridge, 2002.

\bibitem{saad2011numerical}
Yousef Saad.
\newblock {\em Numerical methods for large eigenvalue problems: revised
  edition}, volume~66.
\newblock SIAM, 2011.

\bibitem{shor1999polynomial}
Peter~W Shor.
\newblock Polynomial-time algorithms for prime factorization and discrete
  logarithms on a quantum computer.
\newblock {\em SIAM J. Comput.}, 26(5):1484--1509, 1999.

\bibitem{wang}
Hefeng Wang, Lian-Ao Wu, Yu-xi Liu, and Franco Nori.
\newblock Measurement-based quantum phase estimation algorithm for finding
  eigenvalues of non-unitary matrices.
\newblock {\em Phy. Rev. A}, 82(6):062303, 2010.

\bibitem{QMA2}
Tzu-Chieh Wei, Michele Mosca, and Ashwin Nayak.
\newblock {Interacting boson problems can be QMA hard}.
\newblock {\em Phys. Rev. Lett.}, 104(4):040501, 2010.

\end{thebibliography}

\end{document}